\DeclareMathAlphabet{\pazocal}{OMS}{zplm}{m}{n}
\declaretheorem[numberwithin=section,refname={Theorem,Theorems},Refname={Theorem,Theorems},name={Theorem}]{theorem}
\declaretheorem[numberlike=theorem,refname={Lemma,Lemmas},Refname={Lemma,Lemmas},name={Lemma}]{lemma}
\declaretheorem[numberlike=theorem,refname={Observation,Observations},Refname={Observation,Observations}]{observation}
\newlength{\within@wd}
\DeclareRobustCommand{\within}[3][c]{%
  \ifmmode%
    \mathchoice%
      {\within@math\displaystyle{#1}{#2}{#3}}%
      {\within@math\textstyle{#1}{#2}{#3}}%
      {\within@math\scriptstyle{#1}{#2}{#3}}%
      {\within@math\scriptscriptstyle{#1}{#2}{#3}}%
  \else%
    \within@do{#1}{#2}{#3}%
  \fi%
}
\newcommand{\within@math}[4]{\within@do{#2}{$\m@th#1#3$}{$\m@th#1#4$}}
\newcommand{\within@do}[3]{%
  \settowidth{\within@wd}{#2}%
  \makebox[\within@wd][#1]{#3}%
}
\DeclareMathOperator{\dist}{d}
\DeclareMathOperator{\isabundant}{\operatorname{A}}
\DeclareMathOperator{\enabledbelow}{\mathrm{enblbelow}}
\DeclareMathOperator{\isopen}{\operatorname{O}}
\DeclareMathOperator{\isenabled}{\operatorname{E}}
\DeclareMathOperator{\switched}{\mathtt{switched}}
\DeclareMathOperator{\parent}{\operatorname{parent}}
\DeclareMathOperator{\children}{\operatorname{children}}
\DeclareMathOperator{\neighborsabove}{\mathtt{neighbors\_above}}
\DeclareMathOperator{\numradii}{{\Delta}}
\DeclareMathOperator{\openbelow}{{\mathtt{openbelow}}}
\DeclareMathOperator{\cost}{{\operatorname{c}}}
\newcommand{\minR}{{\rho_{\mathrm{min}}}}
\newcommand{\maxR}{{\rho_{\mathrm{max}}}}
\newcommand{\Areas}{\mathcal{A}}
\DeclareMathOperator{\X}{X}
\DeclareMathOperator{\Y}{Y}
\newcommand{\Clients}{\mathcal{C}}
\newcommand{\Facilities}{J}
\newcommand{\cround}[1]{{\ceil{\log_5{#1}}}}
\newcommand{\rarea}{{\mathrm{area}}}
\newcommand{\rball}{{\mathrm{x}}}
\newcommand{\rroot}{{\mathrm{root}}}
\newcommand{\rdesg}{{\mathrm{aux}}}
\newcommand{\renbl}{{\mathrm{enabled}}}
\newcommand{\renbll}{{\mathrm{enbl}}}
\newcommand{\rlex}{{\mathrm{lex}}}
\newcommand{\rnew}{{\mathrm{*}}}
\newcommand{\ropen}{{\mathrm{open}}}
\newcommand{\ropt}{{*}}
\newcommand{\OPT}{{\mathrm{OPT}}}
\newcommand{\Solution}{{\mathcal{I}}}
\newcommand{\Pairs}{{\Pi}}
\newcommand{\T}{{\mathcal{T}}}
\newcommand{\fmin}{f_{\mathrm{min}}}
\newcommand{\fmax}{f_{\mathrm{max}}}
\DeclarePairedDelimiterX{\floor}[1]{\lfloor}{\rfloor}{#1}
\DeclarePairedDelimiterX{\ceil}[1]{\lceil}{\rceil}{#1}
\DeclarePairedDelimiterX{\prn}[1]{(}{)}{#1}
\DeclarePairedDelimiterX{\brc}[1]{\{}{\}}{#1}
\DeclarePairedDelimiterX{\brk}[1]{[}{]}{#1}
\DeclarePairedDelimiterX{\abs}[1]{\lvert}{\rvert}{#1}
\DeclarePairedDelimiterX{\tuple}[1]{\langle}{\rangle}{#1}
\DeclarePairedDelimiterX{\set}[2]{\{}{\}}{#1\,\delimsize|\,\mathopen{}#2}
\title{A Tree Structure For Dynamic Facility Location\thanks{A preliminary version of this manuscript appeared in proceedings of the 26th Annual European Symposium on Algorithms (ESA) 2018. The research leading to these results has received funding from the European Research Council under the European Union's Seventh Framework Programme (FP/2007-2013) / ERC Grant Agreement no. 340506.}}
\author{
	Gramoz Goranci\thanks{
		University of Vienna, Austria} 
	\and 
	Monika Henzinger\thanks{
		University of Vienna, Austria}
	\and
	Dariusz Leniowski\thanks{
		University of Vienna, Austria}
}
\begin{document}

\maketitle

\begin{abstract}
We study the metric facility location problem with client insertions and deletions. This setting differs from the classic dynamic facility location problem, where the set of clients remains the same, but the metric space can change over time. We show a deterministic algorithm that maintains a constant factor approximation to the optimal solution in worst-case time $\tilde O(2^{O(\kappa^2)})$ per client insertion or deletion in metric spaces while answering queries about the cost in $O(1)$ time, where $\kappa$ denotes the doubling dimension of the metric.
For metric spaces with bounded doubling dimension, the update time is polylogarithmic in the parameters of the problem.

\end{abstract}

\section{Introduction}

In the \emph{metric facility location problem}, we are given a (possibly infinite) set $V$ of {\em potential clients} or {\em points},  a finite set $\Clients$ of ({\em live clients}),
a finite set $\Facilities\subseteq V$ of {\em facilities} with an \emph{opening cost} 
$f_j$, for each facility $j$, and a metric $\dist$ over $V$, such that $\dist(i,j)$ is the cost of assigning
client $i$ to facility $j$. The goal is to determine a subset $\Facilities' \subseteq \Facilities$ of {\em open} facilities and to assign 
each  client to an open facility such as to minimize the total cost. Obviously it is best to assign each live client to the closest \emph{open} facility.
Thus, the goal can be written as minimizing the objective function
$\sum_{j \in \Facilities'} f_j + \sum_{i \in  \Clients} \min_{j \in \Facilities'} \dist(i,j)$. 

The facility location problem is one of the central problems in combinatorial optimization and operations research~\cite{drezner2001facility}, with many real-word applications. Typical examples include placements of servers in a network, location planning for medical centers, fire stations, restaurants, etc. From the computational perspective, this problem is NP-hard and it is even hard to approximate to a factor better than
$1.463$~\cite{guha1999greedy,sviridenko2002improved}. The best-known polynomial-time algorithm achieves a 1.488-approximation~\cite{li20131}.  

In many applications of facility location, problem data are continuously changing. This has lead to the study of this problem in different settings, e.g., online~\cite{meyerson2001online,anagnostopoulos2004simple,
fotakis2007primal,anthony2007infrastructure,fotakis2008competitive,
nagarajan2013offline,abshoff2015towards}, streaming~\cite{Indyk04,Fotakis11,LammersenS08,CzumajLMS13} or dynamic~\cite{wesolowsky1973dynamic,chardaire1996solving,farahani2009dynamic,
eisenstat2014facility}. The focus of this paper is on the dynamic setting, motivated by mobile network applications, where the set of clients may change over time, and we need to maintain the set of opened servers so as to obtain a solution of small cost after each change. 

Formally, in the \emph{dynamic facility location} problem, the set of clients $\Clients$ evolves over time and queries about both the cost as well as the set of opened facilities can be asked. Specifically,  at each timestep $t$, either a new client is added to $\Clients$, 
a client is removed from $\Clients$, 
a query is made for the approximate cost of an optimal solution 
({\em cost query}), or a query asks for the entire current solution 
({\em solution query}). 
The goal is to maintain a set of open
facilities that after each client update minimizes the above cost function. Thus, the cost $f_j$ of each facility can be seen as a {\em maintenance cost}
that has to be paid for each open facility between two client updates.

\paragraph*{Our contribution.} In this paper we present a \emph{deterministic} data-structure that maintains a $O(1)$-factor approximation algorithm for the metric facility location problem, while  supporting insertions and deletions of clients in $\tilde O(2^{O( \kappa^2)})$ update time, and answering cost queries in $O(1)$ time, where $\kappa$ is the {\em doubling dimension}\footnote{The doubling dimension of a metric space $(V,d)$ is bounded by $\kappa$ if for any $x \in V$ and any radius $r$, any ball 
with center $x$ and radius $r$
in $(V,d)$ can be completely covered by  $2^{\kappa}$ balls of radius $r/2$.} of the metric
space. As the running time per client update is bounded by $\tilde O(2^{O(\kappa^2)})$, the number of
changes in the client-facility assignments is also bounded by this function. For metric spaces with bounded doubling dimension, such as 
the Euclidean space, the running time is $\tilde O(1)$. Formally, we have the following theorem.

\begin{theorem} \label{thm: mainTheorem}
There exists a deterministic algorithm for the dynamic facility location problem when clients and facilities live in a metric space with doubling dimension $\kappa$, such that at every time step the solution has cost at most $O(1)$ times the cost of an optimal solution at that time. The worst-case update time for client insertion or deletion is $O(2^{O(\kappa^2)} \cdot \Delta^3 \cdot (\kappa^2 + \log \Delta))$, where $\Delta$ is logarithmic in the paramters of the problem. A cost query can be answered in constant time and a solution query in time linear in the size of the output.
\end{theorem}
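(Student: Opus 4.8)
The plan is to realize the ``tree structure'' of the title as a hierarchical decomposition of the metric space, to maintain on it an explicit near-optimal solution, and to show that a single client update perturbs this solution by only $2^{O(\kappa^2)}\cdot\mathrm{poly}(\Delta)$ facilities, each of which can be recomputed in $\mathrm{poly}(\kappa,\log\Delta)$ time.

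\textbf{Step 1 (discretization and the hierarchy).} First I would cut the set of relevant scales down to $\Delta = O(\log(\rho_{\max}/\rho_{\min}))$ many: by a standard scaling argument, radii so small or so large that every reasonable solution treats them uniformly can be ignored at an $O(1)$ loss, and $\rho_{\max}/\rho_{\min}$ can be taken polynomial in $|\Clients|$ and $\fmax/\fmin$, which makes $\Delta$ logarithmic in the parameters of the problem. At scale $\ell$ I maintain a net $N_\ell$ whose points are pairwise at distance at least $\rho_\ell$ and which $\rho_\ell$-covers $V$, with $N_{\ell+1}$ refining $N_\ell$. The doubling property then bounds by $\KissingNumber = 2^{O(\kappa)}$ the number of net points inside any ball of radius $O(\rho_\ell)$, and by $2^{O(\kappa)}$ the number of children of a net point; snapping the facilities of an optimal solution to nearby net points at the appropriate scale costs only a constant factor, so it suffices to compete against the best net-restricted solution. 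The nodes of the dependency tree $\DependencyTree$ are the resulting (net point, scale) pairs, which I call \emph{areas}.

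\textbf{Step 2 (the static solution and its ratio).} For an area $a$ at scale $\ell$ I would call $a$ \emph{abundant} ($\isabundant(a)$) when the total weight of the clients it owns at scale $\ell$ is at least its facility cost divided by $\rho_\ell$ --- i.e. a local facility would pay for itself. Sweeping scales from coarsest to finest, $a$ is \emph{open} ($\isopen(a)$) iff it is abundant and not already dominated by an open area within a constant number of radii; the precise \emph{enabled} predicate ($\isenabled$) is to be engineered so that (a) whether an area is enabled depends only on that area, its $O(\Delta)$ ancestors in $\DependencyTree$, and the $2^{O(\kappa)}$ net-neighbors of each of them, and (b) the induced solution --- connect every client to its nearest open facility --- is $O(1)$-approximate. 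I would prove (b) by a two-sided charging argument: the opening cost is charged to $\OPT$ through abundance (the clients of each open facility collectively pay for it, while paying only $O(1)$ times their optimal connection cost), and the connection cost of a client is bounded by climbing $\DependencyTree$ to its first open ancestor, whose existence follows from a packing/covering comparison of the hierarchy with the facilities used by $\OPT$.

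\textbf{Step 3 (dynamization).} A client insertion or deletion changes the owned weight of exactly one area per scale, hence $O(\Delta)$ areas. Each such change may flip $\isabundant$ there and, through the enabled predicate, propagate to ancestors and to the net-neighbors of those ancestors, whose status may flip in turn, and so on. The quantitative heart of the argument is the claim that the set of areas whose $\isopen$ status changes has size $2^{O(\kappa^2)}\cdot\mathrm{poly}(\Delta)$; a careful accounting of how the $2^{O(\kappa)}$ per-scale packing bounds compose across the window of scales a single area can reach --- a window of multiplicative size up to $2^{\Theta(\kappa)}$ --- yields the exponent $\kappa^2$. To pay for the recomputation I maintain, for every area, auxiliary aggregates: running sums of owned weight, the \emph{what-if} sums used to test abundance, usage counters, and pointers to the nearest open ancestor, each updated by a constant number of additions per affected area; this gives the stated $O(2^{O(\kappa^2)}\cdot\Delta^3\cdot(\kappa^2+\log\Delta))$ update time. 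The same bookkeeping keeps a global counter equal to the current solution cost, so a cost query takes $O(1)$ time, while a solution query simply lists the open areas and the assignment pointers, in time linear in the output.

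\textbf{Main obstacle.} The hard part is obtaining a \emph{worst-case} --- not amortized --- update bound while the maintained solution stays $O(1)$-approximate after every step. A priori one update could set off an unboundedly long chain of openings and closings; I would prevent this by leaving a constant multiplicative gap between the threshold at which an area opens and the one at which it closes, so a change at one scale cannot cascade indefinitely, and then use the bounded degree ($2^{O(\kappa)}$) and bounded depth ($O(\Delta)$) of $\DependencyTree$ to confine any cascade to $2^{O(\kappa^2)}\cdot\mathrm{poly}(\Delta)$ areas. The delicate part --- and the technical core of the paper --- is reconciling this stability slack with the charging argument of Step 2: the invariants that are cheap to maintain must still certify a constant approximation ratio, and verifying that interplay is where most of the work goes.
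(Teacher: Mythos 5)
Your sketch follows the paper's general architecture (a scale hierarchy, an abundance test, open/enabled statuses with a blocking rule, a charging argument, a bounded cascade, annotated aggregates and a root cost counter), but three of its load-bearing claims are not substantiated and one is the step the paper singles out as essential. First, with \emph{non-uniform} opening costs your Step 1 claim that ``snapping the facilities of an optimal solution to nearby net points costs only a constant factor'' is false as stated: the hierarchy (net points in your version, a well-separated subset of facilities in the paper's) is built without looking at opening costs, so the ``local facility'' at a selected center can be arbitrarily more expensive than a cheap facility a small distance away, and it is not even clear what ``its facility cost'' means for a net point. The paper resolves exactly this with \emph{designated facilities}: each pair $\tuple{j,r}$ is associated with the cheapest facility in $\X(j,r)$, the abundance test uses $f^*_{\tuple{j,r}}$ rather than $f_j$, and the opened facility is the designated one; the authors explicitly remark that without this device the approach does \emph{not} give a constant factor. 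Your proposal has no substitute for it.

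Second, the two quantitative claims you defer to ``careful accounting'' depend on design choices you either omit or invert. The paper gives priority to \emph{smaller} radii (a triplet is blocked only by open triplets that are lexicographically below it) and breaks same-radius ties with a coloring by $2^{5\kappa}+1$ colors; your ``coarsest to finest'' sweep, where larger open areas dominate smaller ones, is precisely the variant the paper warns makes the number of affected facilities per update \emph{not} bounded by a function of $\kappa$. Moreover the exponent $\kappa^2$ comes from the coloring: a cascade can hop at most $2^{5\kappa}+1$ times within one scale, so its spatial reach at scale $r$ is $2^{O(\kappa)}\cdot 5^{r}$, and packing separated centers in such a ball gives $2^{O(\kappa^2)}$ per scale; the coloring is also what guarantees each client lies in at most one $\X(j,r)$ of an open triplet, without which the ``clients collectively pay for their facility'' charging double-counts. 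Third, the worst-case bound is not obtained by threshold hysteresis (which would also break the claim that the maintained solution is the one analyzed for the approximation ratio at every step): in the paper cascades terminate because influence propagates only lexicographically upward, so a cleaned triplet never becomes dirty again, and the genuine amortization issue is elsewhere --- $\minR$ depends on the current number of clients, so when $|\Clients|$ crosses a power of $5$ a bottom level must be added or removed at cost $\tilde O((|J|+|\Clients|)\cdot 2^{O(\kappa^2)})$, which is deamortized by standard global rebuilding. Your proposal fixes the scale range once and never confronts this, so the worst-case claim is not established.
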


\paragraph*{Comparison with prior work.} The closest work related to our problem is the {\em streaming} algorithm for the {\em metric} facility location problem with \emph{uniform} opening costs due to Lammersen and Sohler~\cite{LammersenS08}. Specifically, given a sequence of insert and deletion operations of points~(clients) from $\{1,\ldots, \Delta\}^{d}$, they devise a Monte-Carlo \emph{randomized} algorithm that processes an insertion or deletion of a point in $\tilde{O}(2^{O(d)})$ time,  using poly-logarithmic space and maintaining a $\tilde{O}(2^{O(d)})$-factor approximation. Since the $d$-dimensional Euclidean space has doubling dimension linear in $d$, we can also interpret the above result in terms of $\kappa$, i.e., the same bounds hold with $d$ replaced by $\kappa$. An easy inspection of the algorithm in~\cite{LammersenS08} shows that the queries can also be answered anytime in the update sequence in $O(1)$ time. Note that this algorithm heavily relies on randomization and the fact that facilities have uniform opening costs. In comparison our algorithm is (1) \emph{deterministic}, (2) achieves a $O(1)$-factor approximation (independent of doubling dimension) (3) generalizes to any metric of bounded doubling dimension, and (4) supports non-uniform opening costs. 
Furthermore, for the Euclidean plane, i.e., $d = 2$, there is a randomized streaming algorithm that achieves a $(1+\varepsilon)$-approximation with poly-logarithmic space~\cite{CzumajLMS13}. However,  it is not clear whether this algorithm supports fast queries.

Regarding the {\em dynamic facility location problem}, multiple variants can be found in the literature~\cite{wesolowsky1973dynamic,chardaire1996solving,farahani2009dynamic,eisenstat2014facility}.
All these variants are different from ours as they assume that the facilities and clients remain the same, and only {\em the 
distance metric} between clients and facilities can change. Additionally, every time a client switches to
a different facility, a switching cost might be incurred and the goal is to minimize the above sum plus all the switching costs. For the version proposed in~\cite{eisenstat2014facility}, there exists a polynomial time constant factor approximation algorithm~\cite{an2015dynamic}.

There is also a large body of prior work on {\em online facility location} (see, e.g.,~\cite{meyerson2001online,anagnostopoulos2004simple,fotakis2007primal,anthony2007infrastructure,fotakis2008competitive,nagarajan2013offline,abshoff2015towards}),
where the clients arrive in an online fashion and have to be connected to a facility, potentially opening new facilities.
If earlier decisions cannot be reversed, then no efficient constant factor approximation algorithm is possible~\cite{fotakis2008competitive}.
If earlier decisions are, however, not permanent, specifically if facilities are only opened for a given amount of time, i.e. {\em leased},
and $k$ different lease lengths are possible~\cite{anthony2007infrastructure}, the offline version of the problem has a polynomial time 3-approximation
algorithm~\cite{nagarajan2013offline}, but the online setting cannot have an approximation better than $\Omega(\log k)$, even for randomized algorithms~\cite{meyerson2001online}.
All of these results are different from ours: (1) We have only one lease length, namely length 1, as each facility can be closed or opened after each timestep,
(2) we allow client arrival {\em and} departure, and (3) our algorithm processes a client update in $O(2^{O(\kappa^2)} \cdot \Delta^3 \cdot (\kappa^2 + \log \Delta))$ worst-case time per operation, where
$\Delta$ is logarithmic in the parameters of the problem, while the running time of the online algorithms is at least linear in the number of facilities~\cite{meyerson2001online,anagnostopoulos2004simple}.

\paragraph*{Technical contribution.} From a technical point of view we modify and significantly extend a hierarchical partition of a  subset of the facilities that was recently introduced for a related problem, called the
{\em dynamic sum-of-radii clustering} problem~\cite{HenzingerLM17}. In that work, a set of facilities $\Facilities$ and a dynamically changing clients $\Clients$ are given and the goal is to
output a set $\Facilities' \subseteq \Facilities$ together with a {\em radius} $R_j$, for each $j \in \Facilities'$, such that the $J'$ covers $\Clients$ and the function
 $\sum_{j \in \Facilities'} (f_j + R_j)$ is minimized. In~\cite{HenzingerLM17} a $O(2^{2\kappa})$-approximation algorithm with time $\tilde O(2^{6\kappa})$ per client
 insertion or deletion is presented for metrics with doubling dimension $\kappa$. Note that the function that is minimized is different from the function minimized in the facility location problem, as the term $\sum_{i \in  \Clients} \min_{j \in \Facilities'} \dist(i,j)$ is replaced by $\sum_{j \in \Facilities'} R_j$. 
 
 More specifically, the hierarchical decomposition of~\cite{HenzingerLM17}  picks a well-separated subset of $J$ with ``small'' cost, assigns one or multiple radii to the selected facilities, and then hierarchically orders the pairs $\tuple{j,R}$ in a tree structure, where $j$ is a facility and $R$ is a radius, such that the children of every pair have a smaller radius and the ``ball'' of the given radius of a child is fully contained in the ``ball'' of its parent with its radius.
 To achieve our result, in Sections~\ref{sec:2} and~\ref{sec:3} we extend this decomposition as follows:
\begin{enumerate}
\item  {\em Abundance condition.} Instead of selecting facilities with ``small''  cost, we introduce the notion of an ``abundance condition'': Facilities that have ``enough nearby'' clients fulfill this condition, and  we {\em only open such facilities}. This leads to following rough notion: The abundance condition is fulfilled for a facility $j$ and a radius $R$ if the number of clients within radius $R$  of $j$ is at least $f_j/R$. The fundamental idea is then as follows: (A) We assign a {\em payment}  of $R$ to each client  within radius $R$ of an open facility $j$, which implies  that the sum of $f_j$ plus the distances of these clients to $j$ is upper bounded by twice the sum of the payments of these clients. 
 (B) We then show that (i) each client pays for at most one facility and (ii) the sum of the
 client payments is linear in the cost of the optimal solution.
 
 \item  {\em Designated facilities.}  To further reduce the cost of the open facilities, we designate to each facility $j$ the ``cheapest nearby'' facility $j^*$, and if a facility $j$ fulfills all conditions to be opened, we open the  facility $j^*$ instead. This allows us  to modify the (rough) abundance condition so that it is  fulfilled for a facility $j$ and a radius $R$ if the number of clients within radius $R$ of $j$ is at least $f_{j^*}/R$.
 This modification increases the distance of the clients only by a constant factor, but might significantly decrease the cost of the open facilities.
 Note that our approach would not achieve a constant factor without this technique: The hierarchical decomposition is based on a well-separated subset of facilities whose construction ignores the facility costs. Thus it might happen that the chosen facilities have high cost,  even though there are ``cheap'' facilities nearby. These cheap nearby facilities are now captured by the designated facilities.
 
 \item {\em  Enabled facilities.} The idea of not opening facilities that are ``close'' to an open facility can be further combined with the hierarchical decomposition.
 More specifically, if facility $j$ with radius $R$ and facility $j'$ with radius $R'$ are ``close'', only one of them is opened,
 namely the lowest-in-the-hierarchy facility that fulfills the abundance condition and has no nearby facility of smaller radius that is
 already open. The advantage of this scheme is that when a facility switches from open to closed or vice versa, we can bound the number of facilities that are
 affected by this change by a function that only depends on $\kappa$. If we had chosen to open the facility with larger radius, 
 then the number of facilities that are affected by
 opening or closing one facility might have been large, i.e., not bounded by a function of $\kappa$ alone.

 \item {\em  Coloring.} 
 Recall that we need to guarantee that each client only pays for {\em one} open facility. 
 To do so, we assign
a {\em color} to each pair $\tuple{j,r}$ in the hierarchy such that no two pairs  $\tuple{j,r}$ and $\tuple{j',r}$, where the distance between $j$ and $j'$ is small,
have the same color. As the metric has bounded doubling dimension, $2^{5 \kappa}+1$ colors suffice for this coloring. 
Then we require that a facility is opened only if it fulfills the abundance condition, {\em and} no facility of either
smaller radius or the same radius but with ``smaller'' color is already open. This requires to further relax the notion of ``closeness'' but reduces the number of open facilities enough so
that each client can be assigned to pay for at most one ``close'' open facility and still every open facility has enough clients paying for its cost.

\end{enumerate}

In Section~\ref{sec:3}, Theorem~\ref{thm:approx}, we show that (a) for clients that are ``close'' to an open facility $j$ in the {\em optimal solution} the sum of their payments (in our solution) is linear in $f_j$
and (b) for clients that are ``far'' from an open facility  in the {\em optimal solution} their payments (in our solution) are within a constant factor of their distance in
the optimal solution. Additionally, we give a data structure that maintains this solution efficiently under insertions and deletion of clients~(see Section~\ref{sec:data_structure}). All missing proofs can be found in appendix.

\section{Preprocessing phase}\label{sec:2}

Let $W$ be the diameter of the metric space, i.e., $\dist(i,j) \le W$, for all $i,j \in V$, let $\fmax = \max_{j \in \Facilities} \{f_j\}$ be the maximum facility opening cost,
and let
$\fmin = \min \set{f_j}{j \in \Facilities, f_j > 0} > 0$ be the minimum opening cost of any facility with non-zero opening cost.
This is w.l.o.g. as all facilities with 0 opening cost are always kept open.
Given the set of clients $\Clients\subseteq V$, let $\OPT = \OPT(\Clients)$ denote the cost of an optimum solution for $\Clients$. In what follows, for the sake of exposition, we also let $\Clients$ to refer to the \emph{current} set of clients.

The algorithm will maintain a number $n = 5^{\lfloor \log_5 |\Clients| \rfloor}$, i.e., the largest power of 5 smaller than  $|\Clients|$: Initally we set $n$ to 0 and use this value of $n$ 
during preprocessing. Whenever the first client is inserted, we set $n = 1$. Afterwards, whenever the number of clients is a factor 5 larger, resp.~smaller, than 
$n$, we update $n$ by multiplying, resp.~dividing it by 5. 

Now let\footnote{Note that $\minR$ could be negative, but it is well-defined as $\fmin > 0$.} $\minR = \cround{(\fmin/{\max{(|\Facilities|}, n)})}$
and let $\maxR = \cround{(\max {(W, \fmax)})} = O(\log W + \log \fmax)$. 
Note that a change of $\minR$ will require an update in our data structures, but this will only happen after $\Theta(n)$ many client insertions or deletions. As we will see later, the cost of this update will be charged against these client updates, and thus does not affect our running times. A \emph{logradius} is an integer $r$ such that $\minR \leq 5^r \leq \maxR$. Let $\numradii = \maxR - \minR + 1$ be the  number of different logradii. Note that
$\numradii = O(\log W + \log (\fmax/\fmin) + \log |\Facilities| + \log |\Clients|)$.  
Finally, let $c_1=20$, $c_2=35$, $c_X = 2c_2 + 2 = 72$, $c_3 = c_X + c_2 = 107$, $c_Y = 2c_3 + c_2 = 249$ and $c_4 = c_Y + c_2 = 284$.

A large part of our data structure is concerned with reducing the number of facilities that are potentially opened and finding an assignment of each client to at most one open facility. 
This is done in multiple ways, as described next.
Based on the approach of~\cite{HenzingerLM17}, we  construct a set of pairs $\Pairs \subseteq (\Facilities \times [\minR, \maxR])$, consisting of facility-logradius pairs and 
a laminar family of \emph{areas}. Different from~\cite{HenzingerLM17}, we color pairs in $\Pairs$,
turning them into triplets, and introduce designated facilities, before defining open, closed and enabled triplets.

\paragraph*{Maximal subsets of distant facilities.}
The first step is to filter out facilities that are close to other facilities. To achieve this, we greedily 
construct a set $\Pairs$ of \emph{pairs} $\tuple{j,r}$ where $j$ is a facility and $r$ is a logradius,  satisfying the following properties:
\begin{enumerate}
\item (Covering) For every facility $j\in \Facilities$ and every logradius $r$,
there exists a facility $j'\in \Facilities_{r}$ with $\dist(j,j')\leq c_1 \cdot 5^{r}$.

\item (Separating) For all distinct $j,j'\in J_r$, $\dist(j,j')>c_1 \cdot 5^r$.
\end{enumerate}
We construct $\Pairs$ as follows: For each logradius $r \in [\minR, \maxR]$, let $\Facilities_r$ be a maximal subset of $\Facilities$ 
such that any two facilities in $\Facilities_r$ are at distance strictly larger than $c_1\cdot 5^r$. Set $\Pairs \gets \bigcup_r \set{\tuple{j,r}}{j \in \Facilities_r}$. Note that for $r=\maxR$, the set $\Facilities_r$ contains just one facility.

\paragraph*{Hierarchical decomposition of $\Pairs$.}
We now construct a hierarchical decomposition of $\Pairs$ and represent it by a tree $\mathcal{T}$, using the following algorithm. Set the root of $\mathcal{T}$ to be the unique pair $\tuple{j,\maxR}$. For each $r<\maxR$ and $j\in \Facilities_r$: (1) Set $j'\in \Facilities_{r+1}$ be the facility closest to $j$. (2) Set $\parent(j,r)\gets \tuple{j',r+1}$.

By construction, $\mathcal{T}$ has height at most $\Delta$ and the parent of a pair $\tuple{j,r}$ is a pair of the form $\tuple{j',r+1}$. The following three lemmas describe the crucial properties of the tree $\mathcal{T}$.  
\begin{lemma}[Nesting of balls]\label{lemma:inclusion-of-big-balls_2}
 Let $c^*$ be any constant such that $c^* \geq (5/4)c_1$. If $\parent(j,r)=\tuple{j',r+1}$, then $\dist(j,j') \le c_1 \cdot 5^{r+1}$ and $B(j,c^*\cdot 5^r)\subseteq B(j',c^*\cdot 5^{r+1})$.
\end{lemma}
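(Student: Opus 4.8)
I would prove the two assertions separately, each by an elementary distance estimate.

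For the distance bound, the key is to invoke the Covering property at logradius $r+1$ (valid since $r<\maxR$): it guarantees some $j''\in\Facilities_{r+1}$ with $\dist(j,j'')\le c_1\cdot 5^{r+1}$. Recall the construction sets $\parent(j,r)=\tuple{j',r+1}$ with $j'$ the facility of $\Facilities_{r+1}$ \emph{closest} to $j$, so $\dist(j,j')\le\dist(j,j'')\le c_1\cdot 5^{r+1}$, as claimed. (The Covering property itself is just the maximality of $\Facilities_{r+1}$: were every member of $\Facilities_{r+1}$ at distance $>c_1\cdot 5^{r+1}$ from $j$, then $\Facilities_{r+1}\cup\{j\}$ would still be $c_1\cdot 5^{r+1}$-separated, contradicting maximality.)

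For the ball nesting, I would take an arbitrary $x\in B(j,c^*\cdot 5^{r})$ and apply the triangle inequality together with the distance bound just established:
\[
\dist(x,j')\le \dist(x,j)+\dist(j,j')\le c^*\cdot 5^{r}+c_1\cdot 5^{r+1}=(c^*+5c_1)\,5^{r}.
\]
It then remains to check $(c^*+5c_1)\,5^{r}\le c^*\cdot 5^{r+1}=5c^*\cdot 5^{r}$, i.e.\ $5c_1\le 4c^*$, which is exactly the hypothesis $c^*\ge(5/4)c_1$; hence $x\in B(j',c^*\cdot 5^{r+1})$ and the inclusion follows. (The same line of reasoning works verbatim for open balls, replacing $\le$ by $<$ on the relevant distances.)

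There is no real obstacle here: the only points needing care are that the Covering property must be used at scale $r+1$ rather than $r$, and that the constant inequality carries the correct factor of $5$ coming from the ratio of consecutive radii. Since this arithmetic step ($c^*+5c_1\le 5c^*\iff c^*\ge\tfrac54 c_1$) recurs for the later constants $c_X,c_3,c_Y,c_4$, it may be worth isolating it as a one-line observation for reuse.
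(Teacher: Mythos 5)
Your proposal is correct and follows essentially the same route as the paper: invoke the Covering property at scale $r+1$ (together with the fact that the parent is the closest facility in $\Facilities_{r+1}$), then apply the triangle inequality and the arithmetic identity $c_1\cdot 5^{r+1}+c^*\cdot 5^r \le c^*\cdot 5^{r+1}$, which holds precisely when $c^*\ge(5/4)c_1$. You even get the direction of that hypothesis right, whereas the paper's own proof contains a typo stating ``$c^*\leq (5/4)c_1$''.
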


\begin{lemma}[\cite{HenzingerLM17}]\label{lem:doubling_dimension_2}
For any point $p$, radius $r$ and some number $\alpha > 0$, the set of pairs
$\Pairs(p,r) = \set{\tuple{j,r} \in \Pairs}{\dist(p,j) < 2^{\alpha}c_1\cdot 5^r}$
has at most $2^{(\alpha +1)\kappa}$ elements, where $\kappa$ is the doubling
dimension of the metric space.
\end{lemma}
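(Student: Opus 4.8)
The proof is a packing argument that uses only the Separating property of the sets $\Facilities_r$ together with the definition of doubling dimension. First I would note that, by the construction of $\Pairs$, for a fixed logradius $r$ the map $\tuple{j,r}\mapsto j$ is a bijection between $\Pairs(p,r)$ and the set $S := \set{j\in\Facilities_r}{\dist(p,j)<2^{\alpha}c_1\cdot 5^r}$, so it suffices to show $\abs{S}\le 2^{(\alpha+1)\kappa}$. Every $j\in S$ lies in the ball $B(p,2^{\alpha}c_1\cdot 5^r)$, and by the Separating property any two distinct elements of $S$ are at distance strictly greater than $c_1\cdot 5^r$.

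Next I would cover $B(p,2^{\alpha}c_1\cdot 5^r)$ by small balls. One application of the doubling condition replaces a ball of radius $\rho$ by at most $2^{\kappa}$ balls of radius $\rho/2$; iterating it $\alpha+1$ times starting from $B(p,2^{\alpha}c_1\cdot 5^r)$ produces a cover of that ball by at most $2^{(\alpha+1)\kappa}$ balls, each of radius $2^{\alpha}c_1\cdot 5^r / 2^{\alpha+1}=\frac{1}{2}c_1\cdot 5^r$. (If $\alpha\notin\mathbb{Z}$ one uses $\lceil\alpha+1\rceil$ halvings, which costs only a constant factor; in the applications of this lemma $\alpha$ is a small integer.)

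Finally I would bound the load of each covering ball using the Separating property: if two distinct $j,j'\in\Facilities_r$ both lay in a common ball of radius $\frac{1}{2}c_1\cdot 5^r$, the triangle inequality would give $\dist(j,j')\le c_1\cdot 5^r$, contradicting the Separating property. Hence each of the $\le 2^{(\alpha+1)\kappa}$ covering balls contains at most one element of $S$, and since these balls cover $B(p,2^{\alpha}c_1\cdot 5^r)\supseteq S$ this yields $\abs{S}\le 2^{(\alpha+1)\kappa}$. There is essentially no obstacle here; the only points requiring care are choosing the covering radius small enough ($\frac{1}{2}c_1\cdot 5^r$, so that the \emph{strict} separation inequality forbids two facilities in one ball) and matching the number of halvings to the exponent $(\alpha+1)\kappa$.
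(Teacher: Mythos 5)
Your proposal is correct and follows essentially the same argument as the paper: cover $B(p,2^{\alpha}c_1\cdot 5^r)$ by at most $2^{(\alpha+1)\kappa}$ balls of radius $(c_1/2)\cdot 5^r$ using the doubling property, then invoke the Separating property of $\Pairs$ to conclude that each small ball holds at most one facility of $\Facilities_r$. Your extra remarks on the strict separation inequality and on non-integer $\alpha$ are fine but not needed beyond what the paper already does.
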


\begin{lemma}[\cite{HenzingerLM17}]\label{lemma:degree_2}
A node $\tuple{j,r}$ of $\mathcal{T}$ has at most $2^{4\kappa}$ children
\end{lemma}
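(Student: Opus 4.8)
The plan is to fix a node $\tuple{j',r+1}$ of $\mathcal{T}$ and show that all of its children are confined to a single metric ball around $j'$ of radius proportional to $5^r$, after which the doubling property caps their number. First I would unwind the definition: a child of $\tuple{j',r+1}$ is a pair $\tuple{j,r}\in\Pairs$ with $j\in\Facilities_r$ whose nearest facility in $\Facilities_{r+1}$ is $j'$, i.e., with $\parent(j,r)=\tuple{j',r+1}$. Lemma~\ref{lemma:inclusion-of-big-balls_2} then yields $\dist(j,j')\le c_1\cdot 5^{r+1}$ for every such $j$, so every child of $\tuple{j',r+1}$ lies in $\set{\tuple{j,r}\in\Pairs}{\dist(j',j)\le c_1\cdot 5^{r+1}}$.

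Next I would plug this into Lemma~\ref{lem:doubling_dimension_2}, taking the center to be $p=j'$ and $\alpha=3$. Because $c_1\cdot 5^{r+1}=5\cdot c_1\cdot 5^r<2^{3}\cdot c_1\cdot 5^r$, the set of children of $\tuple{j',r+1}$ is a subset of $\Pairs(j',r)=\set{\tuple{j,r}\in\Pairs}{\dist(j',j)<2^{3}c_1\cdot 5^r}$, which by Lemma~\ref{lem:doubling_dimension_2} has at most $2^{(3+1)\kappa}=2^{4\kappa}$ elements. Hence $\tuple{j',r+1}$ has at most $2^{4\kappa}$ children, which is the claim.

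An alternative, self-contained route avoiding Lemma~\ref{lem:doubling_dimension_2} is a direct packing estimate: the children are distinct elements of $\Facilities_r$, hence pairwise at distance strictly greater than $c_1\cdot 5^r$ by the Separating property, while all of them sit in the ball $B(j',c_1\cdot 5^{r+1})$ by the previous paragraph. Since $2\cdot(c_1\cdot 5^{r+1})/(c_1\cdot 5^r)=10$, applying the doubling property $\ceil{\log_2 10}=4$ times covers that ball by $2^{4\kappa}$ balls of radius at most $c_1\cdot 5^r/2$, and each such ball contains at most one child, giving the same bound $2^{4\kappa}$.

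I do not expect any real obstacle: the argument is essentially a one-line packing bound. The only things to watch are (i) that the parent--child distance estimate used is exactly $\dist(j,j')\le c_1\cdot 5^{r+1}$ as provided by Lemma~\ref{lemma:inclusion-of-big-balls_2}, not a weaker constant, and (ii) that there is just enough numerical room, namely $5<8=2^{3}$ (equivalently $\ceil{\log_2 10}=4$), so that the choice $\alpha=3$ in Lemma~\ref{lem:doubling_dimension_2} is valid and the stated constant $2^{4\kappa}$ is reached exactly. Both are immediate.
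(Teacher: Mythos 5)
Your proposal is correct and follows essentially the same route as the paper: bound the distance of all children from the parent's facility and then invoke the packing bound of Lemma~\ref{lem:doubling_dimension_2} with $\alpha=3$ (the paper's only cosmetic difference is that it derives the distance bound as $(4/5)c_2\cdot 5^r = 7c_1\cdot 5^{r-1}$ via the $c_2$-balls, whereas you use the direct bound $c_1\cdot 5^{r+1}=5c_1\cdot 5^r$, both comfortably below the $2^3 c_1$ threshold). Your alternative packing argument is just an inlined proof of Lemma~\ref{lem:doubling_dimension_2} for this special case, so it is also fine but not a genuinely different approach.
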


\paragraph*{Hierarchical decomposition of $V$ into a laminar family of areas.}
The balls $B(j,r)$ and $B(j',r)$ with $\tuple{j,r}$ and $\tuple{j',r}$ in $\Pairs$ might overlap, which is problematic for the mapping of clients to facilities. To rectify this problem, we partition $V$ into a laminar family of \emph{areas} such that no two same-logradius areas overlap, as follows: For each $\tuple{j,r}\in\Pairs$, initialize $A(j,r)\gets\emptyset$. Next, for each point $p\in V$: (1) Let $r^*$ be the smallest such that there exists pairs $\tuple{j,r^*} \in \Pairs$ with $p\in B(j,c_2\cdot 5^{r^*})$. (2) Among all such pairs, let $\tuple{j^*,r^*}$ denote the one minimizing $\dist (p,j^*)$. (3) Add $p$ to the set $A(j^*,r^*)$ and to every set $A(j',r')$ with $(j',r')$ ancestor of $(j^*,r^*)$ in $\mathcal{T}$. 

The laminar family of areas fulfills the following lemmas.
\begin{lemma}[\cite{HenzingerLM17}] \label{lemma:area-contained-in-big-ball_2}
For each $\tuple{j,r} \in \Pairs$, $j \in A(j,r)\subseteq B(j,c_2\cdot 5^r)$ and if $\parent(j,r)=\tuple{j',r+1}$, then $A(j,r) \subseteq A(j', r+1)$.
\end{lemma}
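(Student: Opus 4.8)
The plan is to verify the three assertions separately, all resting on the same reading of the area construction: a point $p$ is inserted into $A(g,\rho)$ exactly when $\tuple{g,\rho}$ either equals, or is an ancestor in $\mathcal{T}$ of, the ``home pair'' $\tuple{j^*_p,r^*_p}$ chosen for $p$ in steps (1)--(2) of the construction; in particular $r^*_p\le\rho$ whenever $p\in A(g,\rho)$, and $\dist(p,j^*_p)\le c_2\cdot 5^{r^*_p}$ by the choice of the home pair. Since every edge $\parent(g,\rho)=\tuple{g',\rho+1}$ of $\mathcal{T}$ satisfies $B(g,c_2\cdot 5^\rho)\subseteq B(g',c_2\cdot 5^{\rho+1})$ by Lemma~\ref{lemma:inclusion-of-big-balls_2} with $c^*=c_2$ (legitimate as $c_2=35\ge(5/4)c_1=25$), walking up the ancestor path of $\tuple{j^*_p,r^*_p}$ propagates containments in these ``$c_2$-balls'' from the home pair to any ancestor.

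Given this, $A(j,r)\subseteq B(j,c_2\cdot 5^r)$ is immediate: any $p\in A(j,r)$ has $\tuple{j,r}$ on the ancestor path of its home pair, so $p\in B(j^*_p,c_2\cdot 5^{r^*_p})\subseteq B(j,c_2\cdot 5^r)$ by the propagation above. The nesting $A(j,r)\subseteq A(j',r+1)$ for $\parent(j,r)=\tuple{j',r+1}$ is equally direct: if $\tuple{j,r}$ is on the ancestor path of $\tuple{j^*_p,r^*_p}$, then so is its parent $\tuple{j',r+1}$, hence step~(3) of the construction also inserted $p$ into $A(j',r+1)$.

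The real content is $j\in A(j,r)$, which I would prove by running the construction on the point $p=j$. If $r=\minR$, the home pair of $j$ is $\tuple{j,r}$ itself, since $\dist(j,j)=0$ while the separating property gives $\dist(j,j'')>c_1\cdot 5^r$ for every other level-$r$ pair $\tuple{j'',r}\in\Pairs$; hence $j$ is inserted into $A(j,r)$. If $r>\minR$, the covering property yields $j''\in\Facilities_{r-1}$ with $\dist(j,j'')\le c_1\cdot 5^{r-1}<c_2\cdot 5^{r-1}$, so the home level $r^*:=r^*_j$ satisfies $r^*\le r-1$. Write $\tuple{j^*,r^*}$ for $j$'s home pair and let $\tuple{a_{r-1},r-1}$ and $\tuple{a_r,r}$ be its ancestors at levels $r-1$ and $r$. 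Summing the bound $\dist(g,g')\le c_1\cdot 5^{\rho+1}$ of Lemma~\ref{lemma:inclusion-of-big-balls_2} along the chain from $\tuple{j^*,r^*}$ up to $\tuple{a_{r-1},r-1}$ as a geometric series, and adding $\dist(j,j^*)\le c_2\cdot 5^{r^*}$, gives
\[
\dist(a_{r-1},j)\ \le\ \tfrac{c_1}{4}\,5^{r}+\bigl(c_2-\tfrac{5c_1}{4}\bigr)5^{r^*}\ =\ 5\cdot 5^{r}+10\cdot 5^{r^*}\ \le\ 7\cdot 5^{r}\ <\ \tfrac{c_1}{2}\,5^{r},
\]
using $r^*\le r-1$ and $c_1=20,\ c_2=35$. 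Since $a_r$ is by definition the facility of $\Facilities_r$ closest to $a_{r-1}$ and $j\in\Facilities_r$, we get $\dist(a_{r-1},a_r)\le\dist(a_{r-1},j)<\tfrac{c_1}{2}\cdot 5^{r}$; if $a_r\neq j$, then the separating property gives $\dist(a_r,j)>c_1\cdot 5^{r}$, contradicting $\dist(a_r,j)\le\dist(a_r,a_{r-1})+\dist(a_{r-1},j)<c_1\cdot 5^{r}$. Hence $a_r=j$, so $p=j$ is inserted into $A(a_r,r)=A(j,r)$.

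I expect the only delicate step to be exactly this last estimate: one must push $\dist(a_{r-1},j)$ \emph{strictly} below $\tfrac{c_1}{2}\cdot 5^{r}$, which is what pins down the admissible relation between $c_1$ and $c_2$ and is the one place the concrete constants enter; everything else follows mechanically from the construction and Lemma~\ref{lemma:inclusion-of-big-balls_2}.
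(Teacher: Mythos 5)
Your proposal is correct, and on the two assertions the paper actually argues --- $A(j,r)\subseteq B(j,c_2\cdot 5^r)$ and the nesting $A(j,r)\subseteq A(j',r+1)$ --- you follow the same route as the paper's (very terse) appendix proof: a point is either added directly to $A(j,r)$, hence lies in $B(j,c_2\cdot 5^r)$, or is inherited from a descendant's home pair, in which case Lemma~\ref{lemma:inclusion-of-big-balls_2} with $c^*=c_2$ propagates the containment up the tree, and the nesting is immediate from step~(3) of the construction. Where you go beyond the paper is the membership claim $j\in A(j,r)$: the paper does not prove it here (the lemma is imported from~\cite{HenzingerLM17}), whereas you supply a self-contained argument, showing that the level-$r$ ancestor $a_r$ of $j$'s home pair must equal $j$, via the geometric-series bound $\dist(a_{r-1},j)\le \frac{c_1}{4}5^r+(c_2-\frac{5c_1}{4})5^{r^*}\le 7\cdot 5^r<\frac{c_1}{2}\cdot 5^r$, the rule that a pair's parent is the closest facility of the next level, and the Separating property to force $a_r=j$; the base case $r=\minR$ is handled correctly by noting $j$ is its own home pair there. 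I checked the constants ($c_1=20$, $c_2=35$, so $(5/4)c_1=25\le c_2$ and $7\cdot 5^r<10\cdot 5^r$) and the edge case $r^*=r-1$ (empty chain, bound $c_2\cdot 5^{r-1}=7\cdot 5^r$); everything goes through, so your write-up is in fact more complete than what this paper records, at the cost of redoing an argument the authors delegate to the cited prior work.
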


\begin{lemma}\label{lemma:covering-balls-by-areas_2} Let $j \in J$, 
$r \in [\minR, \maxR]$ and $p \in B(j,5^{r})$. Then there exists a pair $\tuple{j',r} \in \Pairs$ such that $p \in  A(j',r)$ and $\dist(j,j') \leq (c_2 + 1) \cdot 5^{r}$.
\end{lemma}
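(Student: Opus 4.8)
The plan is to reduce everything to the construction of the laminar family of areas and apply the triangle inequality twice. First I would use the \emph{Covering} property of $\Pairs$: for the given logradius $r$ there is a facility $j''\in\Facilities_r$, hence a pair $\tuple{j'',r}\in\Pairs$, with $\dist(j,j'')\le c_1\cdot 5^r$. Since $p\in B(j,5^r)$, the triangle inequality gives $\dist(p,j'')\le (c_1+1)\cdot 5^r$, and because $c_1+1=21\le 35=c_2$ we conclude $p\in B(j'',c_2\cdot 5^r)$. Thus the pair $\tuple{j'',r}$ witnesses that the smallest logradius $r^*$ for which $p\in B(j^*,c_2\cdot 5^{r^*})$ for some $\tuple{j^*,r^*}\in\Pairs$ satisfies $r^*\le r$.

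Next I would track where $p$ lands in the area hierarchy. Let $\tuple{j^*,r^*}$ be the pair selected in step (2) of the area construction for the point $p$; then $p\in A(j^*,r^*)$, and by the propagation in step (3) we have $p\in A(\tilde j,\tilde r)$ for every ancestor $\tuple{\tilde j,\tilde r}$ of $\tuple{j^*,r^*}$ in $\mathcal{T}$. Since the parent of a pair at logradius $\ell$ is a pair at logradius $\ell+1$ and the root sits at $\maxR$, the pair $\tuple{j^*,r^*}$ has an ancestor at every logradius in $[r^*,\maxR]$; as $r^*\le r\le\maxR$, there is therefore a pair $\tuple{j',r}\in\Pairs$ with $p\in A(j',r)$ — namely $\tuple{j^*,r^*}$ itself when $r^*=r$, and the level-$r$ ancestor otherwise.

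Finally I would bound the distance: by Lemma~\ref{lemma:area-contained-in-big-ball_2} we have $A(j',r)\subseteq B(j',c_2\cdot 5^r)$, so $p\in A(j',r)$ gives $\dist(p,j')\le c_2\cdot 5^r$, and then $\dist(j,j')\le\dist(j,p)+\dist(p,j')\le 5^r+c_2\cdot 5^r=(c_2+1)\cdot 5^r$, which is exactly the claimed bound, with $\tuple{j',r}\in\Pairs$ and $p\in A(j',r)$ as required.

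The whole argument is bookkeeping with the triangle inequality, so I do not expect a genuine obstacle; the two points to be careful about are (i) that the inequality $c_1+1\le c_2$ is precisely what forces $r^*\le r$, so that the level-$r$ ancestor of $\tuple{j^*,r^*}$ exists, and (ii) that Lemma~\ref{lemma:area-contained-in-big-ball_2} bounds the \emph{final} area $A(j',r)$, including points added to it via ancestor-propagation, and not merely the "primary" points assigned to $\tuple{j',r}$ directly.
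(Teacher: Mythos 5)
Your proposal is correct and follows essentially the same route as the paper's own proof: invoke the Covering property to show $r^*\le r$ (via $c_1+1< c_2$), pass to the logradius-$r$ ancestor of the area containing $p$, and finish with Lemma~\ref{lemma:area-contained-in-big-ball_2} plus the triangle inequality. No gaps.
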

Additionally an even stronger statement regarding the points covered by areas versus points covered by balls holds:
\begin{lemma}\label{lem:areas-equal-balls_2}
For each logradius $r \in [\minR,\maxR]$, 
$\bigcup_{\tuple{j,r} \in \Facilities_r}A(j,r)=\bigcup_{\tuple{j,r} \in \Facilities_r}B(j,c_2\cdot 5^r).$
\end{lemma}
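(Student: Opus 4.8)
The plan is to prove the two inclusions separately. The inclusion $\subseteq$ is an immediate consequence of Lemma~\ref{lemma:area-contained-in-big-ball_2}: for every logradius $r$ and every $j \in \Facilities_r$ that lemma gives $A(j,r) \subseteq B(j, c_2 \cdot 5^r)$, and taking the union over all $j \in \Facilities_r$ yields $\bigcup_{j \in \Facilities_r} A(j,r) \subseteq \bigcup_{j \in \Facilities_r} B(j, c_2 \cdot 5^r)$. The content of the lemma is the reverse inclusion, and this is where the tree $\mathcal{T}$ is used.

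For $\supseteq$, I would fix a logradius $r$ and a point $p \in B(j, c_2 \cdot 5^r)$ for some $j \in \Facilities_r$, and exhibit a pair $\tuple{j',r} \in \Pairs$ with $p \in A(j',r)$. To do this I would trace the area-construction algorithm on the point $p$. The set of logradii $s$ for which some pair $\tuple{j'',s} \in \Pairs$ satisfies $p \in B(j'', c_2 \cdot 5^s)$ is nonempty, since $r$ itself lies in it; hence the algorithm selects a well-defined smallest such logradius $r^\ast$, and $r^\ast \le r$. The algorithm then picks some $\tuple{j^\ast, r^\ast} \in \Pairs$ and, crucially, adds $p$ not only to $A(j^\ast, r^\ast)$ but also to $A(j'', r'')$ for every ancestor $\tuple{j'', r''}$ of $\tuple{j^\ast, r^\ast}$ in $\mathcal{T}$. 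Now, since every pair at logradius $s$ has its facility in $\Facilities_s$ and the parent of a logradius-$s$ pair lies at logradius $s+1$, the node $\tuple{j^\ast, r^\ast}$ has a unique ancestor $\tuple{j', r}$ at logradius $r$, with $j' \in \Facilities_r$ (this ancestor is $\tuple{j^\ast, r^\ast}$ itself when $r^\ast = r$). Applying the second part of Lemma~\ref{lemma:area-contained-in-big-ball_2} repeatedly along the path from $\tuple{j^\ast, r^\ast}$ up to $\tuple{j', r}$ gives $A(j^\ast, r^\ast) \subseteq A(j', r)$, so from $p \in A(j^\ast, r^\ast)$ we get $p \in A(j', r) \subseteq \bigcup_{j \in \Facilities_r} A(j, r)$, as needed.

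The one subtlety — and the only place where care is required — is precisely that a point covered by a logradius-$r$ ball need not be assigned directly to a logradius-$r$ area: the area-construction rule assigns $p$ at the \emph{smallest} logradius $r^\ast$ at which $p$ is $c_2 \cdot 5^{r^\ast}$-covered, which may be strictly below $r$. The argument above handles this by observing that the point is simultaneously inserted into all ancestor areas, and the ancestor-containment clause of Lemma~\ref{lemma:area-contained-in-big-ball_2} propagates the membership back up to logradius $r$ along the tree. No metric estimates or doubling-dimension bounds are needed; everything follows from the laminar/tree structure of the areas together with Lemma~\ref{lemma:area-contained-in-big-ball_2}.
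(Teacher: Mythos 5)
Your proof is correct and follows essentially the same route as the paper: the $\subseteq$ direction from Lemma~\ref{lemma:area-contained-in-big-ball_2}, and the $\supseteq$ direction from the area-construction rule, which the paper states tersely as ``any point $p \in B(j,c_2\cdot 5^r)$ will be added to an area.'' Your elaboration of the subtlety -- that $p$ may be assigned at a smaller logradius $r^\ast < r$ but is simultaneously placed into all ancestor areas, hence into a logradius-$r$ area -- is exactly what the paper's one-line argument implicitly relies on.
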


\paragraph*{Covering balls using unions of areas.} 

To select which facilities with a pair $\tuple{j,r}$ in $\Pairs$ to open, we introduce below the abundance condition which measures how many clients are ``close'' to $j$. For measuring ``closeness'', we would like
to say that a client $i$ is close to a facility $j$ if $i \in \X(j,r)$ for some definition of $\X(j,r)$ that fulfills the crucial property that for every $\tuple{j,r}$ there exists a pair $\tuple{j',r} \in \Pairs$ such that $B(j, 5^r) \subseteq \X(j',r)$.
Note that this might not hold if we use $\X(j,r) = A(j,r)$ and it
 does not follow from Lemma~\ref{lemma:covering-balls-by-areas_2} as different points of $B(j, 5^r)$ might belong to different areas $A(j',r)$, whose facilities might be up
to distance $(2 c_2 + 2) \cdot 5^r$ apart. Thus, for any $\tuple{j,r} \in \Pairs$, we define $\X(j,r)$ as follows: 
\[ \X(j,r) = \bigcup\set{A(j',r)}{\dist(j,j') \leq c_X \cdot 5^{r}}, \quad\text{ where } c_X = 2c_2 + 2, \] 
and can now show the desired property for $\X(j,r)$:
\begin{lemma} \label{lem: cover-with-x}
Let $j \in J$ with $\tuple{j,r} \not \in \Pairs$. Then there exists $\tuple{j^{*},r} \in \Pairs$ with $B(j,5^{r}) \subseteq \X(j^*,r)$.
\end{lemma}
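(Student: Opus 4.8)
The plan is to produce the required pair by taking $j^{*}$ to be the facility of $\Facilities_r$ that covers $j$, and then to verify the inclusion $B(j,5^{r}) \subseteq \X(j^{*},r)$ point by point, pushing every distance estimate through $j^{*}$ with the triangle inequality. (Here $r$ is a logradius, i.e.\ $\minR \le 5^{r} \le \maxR$, so that Lemma~\ref{lemma:covering-balls-by-areas_2} is applicable.)

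First I would invoke the hypothesis $\tuple{j,r} \notin \Pairs$, i.e.\ $j \notin \Facilities_r$, together with the Covering property of $\Facilities_r$: there is some $j^{*} \in \Facilities_r$ with $\dist(j,j^{*}) \le c_1 \cdot 5^{r}$. Since $j^{*} \in \Facilities_r$ we have $\tuple{j^{*},r} \in \Pairs$, so $\tuple{j^{*},r}$ is a legitimate candidate for the pair asserted by the lemma, and it remains only to check the containment.

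Next I would fix an arbitrary $p \in B(j,5^{r})$ and locate an area at logradius $r$ containing it. Because $j \in J$ and $p \in B(j,5^{r})$, Lemma~\ref{lemma:covering-balls-by-areas_2} applied with center $j$ yields a pair $\tuple{j',r} \in \Pairs$ with $p \in A(j',r)$ and $\dist(j,j') \le (c_2+1)\cdot 5^{r}$. Two applications of the triangle inequality then give $\dist(j^{*},j') \le \dist(j^{*},j) + \dist(j,j') \le (c_1 + c_2 + 1)\cdot 5^{r}$, and a quick check of the numerical constants ($c_1 + c_2 + 1 = 56 \le 72 = c_X$) shows $\dist(j^{*},j') \le c_X \cdot 5^{r}$. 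By the definition of $\X(j^{*},r)$ as the union of all areas $A(j',r)$ whose facility satisfies $\dist(j^{*},j') \le c_X \cdot 5^{r}$, this places $A(j',r) \subseteq \X(j^{*},r)$, hence $p \in \X(j^{*},r)$. Since $p$ was arbitrary, $B(j,5^{r}) \subseteq \X(j^{*},r)$, which proves the lemma.

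I do not expect a genuine obstacle here: the whole argument is just two triangle inequalities fed into Lemma~\ref{lemma:covering-balls-by-areas_2}. The one point to be careful about is the conceptual issue the surrounding text flags, namely that distinct points of $B(j,5^{r})$ may fall into distinct areas $A(j',r)$ whose facilities are up to roughly $2c_2\cdot 5^{r}$ apart, so that no single area can contain the ball. The reason the proof goes through is precisely that $\X$ is defined as a union over a radius $c_X \cdot 5^{r}$ that is large enough ($c_X = 2c_2+2$) to absorb the additive constants $c_1$ (from $j$ to $j^{*}$) and $c_2 + 1$ (from $j$ to $j'$); hence the only real ``computation'' is confirming $c_1 + c_2 + 1 \le c_X$ with the chosen constants.
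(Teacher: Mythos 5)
Your proposal is correct and follows essentially the same route as the paper's proof: pick $j^{*}$ via the Covering property, apply Lemma~\ref{lemma:covering-balls-by-areas_2} to points of $B(j,5^{r})$, and combine with the triangle inequality and the bound $c_1+c_2+1\le c_X$. Your point-by-point phrasing (rather than the paper's ``every intersecting area'' phrasing) is an immaterial difference, and arguably matches the statement of Lemma~\ref{lemma:covering-balls-by-areas_2} more directly.
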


\begin{proof} It suffices to show that there exists $\tuple{j^{*},r} \in \Pairs$ such that for every area $A(j',r)$ that intersects with the ball $B(j,5^r)$, we get that $A(j',r) \subseteq \X(j,r)$. First, by the Covering property, there exists a pair $\tuple{j^{*},r} \in \Pairs$ such that $\dist(j,j^*) \leq c_1 \cdot 5^{r}$. Next, let $A(j',r)$ be any area such that $A(j',r) \cap B(j,5^{r}) \neq \emptyset$. By Lemma~\ref{lemma:covering-balls-by-areas_2}, we get that $\dist(j',j) \leq (c_2+1)\cdot 5^{r}$. It follows that $\dist(j',j^*) \leq \dist(j',j) + \dist(j,j^*) \leq (c_1 + c_2 + 1) \cdot 5^{r} \leq c_X\cdot5^{r}$, which in tun implies that $A(j',r) \subseteq \X(j^*,r)$.
\end{proof}
It is crucial for the running time to get a bound on the number of areas used to construct $\X(j,r)$. We do this in the following lemma, which is a simple corollary of Lemma~\ref{lem:doubling_dimension_2}.

\begin{lemma} \label{lem: x-nr-areas}
For any $\tuple{j,r} \in \Pairs$, $\X(j,r)$ is a union of at most $2^{3\kappa}$ areas.
\end{lemma}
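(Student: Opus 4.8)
The plan is to read the bound directly off the definition of $\X(j,r)$ together with Lemma~\ref{lem:doubling_dimension_2}. By construction $\X(j,r)=\bigcup\set{A(j',r)}{\dist(j,j')\le c_X\cdot5^r}$, so the number of areas whose union forms $\X(j,r)$ is at most the number of pairs $\tuple{j',r}\in\Pairs$ with $\dist(j,j')\le c_X\cdot5^r$; distinct such pairs could in principle yield the same area, but that can only decrease the count.

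It then remains to bound $\abs*{\set{\tuple{j',r}\in\Pairs}{\dist(j,j')\le c_X\cdot5^r}}$ by $2^{3\kappa}$. First I would note the numeric fact $c_X = 72 < 80 = 2^2\cdot 20 = 2^2\cdot c_1$, so that $\dist(j,j')\le c_X\cdot5^r$ implies $\dist(j,j') < 2^2\cdot c_1\cdot5^r$. Hence every pair counted above belongs to the set $\Pairs(j,r)=\set{\tuple{j',r}\in\Pairs}{\dist(j,j')<2^\alpha c_1\cdot5^r}$ from Lemma~\ref{lem:doubling_dimension_2}, taking $p=j$ and $\alpha=2$. That lemma then yields $\abs{\Pairs(j,r)}\le 2^{(\alpha+1)\kappa}=2^{3\kappa}$, as claimed.

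The only real \emph{difficulty} is the constant bookkeeping: one must verify that $c_X$ falls below $2^\alpha c_1$ for an integer $\alpha$ small enough that $2^{(\alpha+1)\kappa}$ does not exceed $2^{3\kappa}$. Since $c_X/c_1 = 72/20 = 3.6 < 4 = 2^2$, the choice $\alpha = 2$ is exactly what is needed, and no geometric input beyond Lemma~\ref{lem:doubling_dimension_2} is required — consistent with the remark that this is a simple corollary.
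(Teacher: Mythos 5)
Your proof is correct and matches the paper's intended argument: the paper gives no separate proof, stating the lemma as a direct corollary of Lemma~\ref{lem:doubling_dimension_2}, which is exactly your application with $p=j$ and $\alpha=2$ using $c_X = 72 < 2^2 c_1 = 80$. The constant check and the bound $2^{(\alpha+1)\kappa}=2^{3\kappa}$ are exactly right.
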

We also need to bound how far any two points in $\X(j,r)$ can be apart:

\begin{lemma} \label{lem: x-cont-ball}
It holds that $\X(j,r) \subseteq B(j,c_3\cdot 5^{r})$.
\end{lemma}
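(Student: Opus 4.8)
The plan is short: it comes down to unwinding the definition of $\X(j,r)$ and combining Lemma~\ref{lemma:area-contained-in-big-ball_2} with a single application of the triangle inequality. Recall that
\[ \X(j,r) = \bigcup\set{A(j',r)}{\dist(j,j') \leq c_X \cdot 5^{r}}, \]
where the union implicitly ranges over pairs $\tuple{j',r} \in \Pairs$, since the areas $A(\cdot,\cdot)$ are only defined for pairs of $\Pairs$. So it suffices to show that each such area $A(j',r)$ is contained in $B(j,c_3 \cdot 5^r)$.

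Concretely, I would fix an arbitrary point $p \in \X(j,r)$, so that $p \in A(j',r)$ for some pair $\tuple{j',r} \in \Pairs$ with $\dist(j,j') \le c_X \cdot 5^r$. By Lemma~\ref{lemma:area-contained-in-big-ball_2} we have $A(j',r) \subseteq B(j',c_2 \cdot 5^r)$, hence $\dist(p,j') \le c_2 \cdot 5^r$. The triangle inequality then yields
\[ \dist(p,j) \le \dist(p,j') + \dist(j',j) \le c_2 \cdot 5^r + c_X \cdot 5^r = (c_X + c_2) \cdot 5^r = c_3 \cdot 5^r, \]
using the definition $c_3 = c_X + c_2$. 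Thus $p \in B(j,c_3 \cdot 5^r)$, and since $p$ was arbitrary the inclusion $\X(j,r) \subseteq B(j,c_3 \cdot 5^r)$ follows.

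I do not anticipate any real obstacle here. The only two points worth being careful about are: (i) that the areas appearing in the union defining $\X(j,r)$ are indexed by pairs of $\Pairs$, which is exactly what makes Lemma~\ref{lemma:area-contained-in-big-ball_2} applicable; and (ii) that the constants compose as $c_3 = c_X + c_2$, matching the concrete choices $c_X = 72$, $c_2 = 35$, $c_3 = 107$ fixed at the beginning of Section~\ref{sec:2}.
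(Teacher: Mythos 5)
Your proof is correct and follows exactly the paper's argument: unwind the definition of $\X(j,r)$, apply Lemma~\ref{lemma:area-contained-in-big-ball_2} to get $\dist(p,j') \le c_2 \cdot 5^r$, and conclude by the triangle inequality using $c_3 = c_X + c_2$. No issues.
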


To further reduce the set of open facilities, it is necessary to introduce a notion of  ``closeness'' between facilities that is more relaxed than the definition
used for covering, where we required two facilities to be at least $c_1 \cdot 5^r$ apart. Now we guarantee that if a facility is open then no
other facility within distance $c_Y \cdot 5^r$ is opened, resulting in the following definition of $\Y(j,r)$ for every $\tuple{j,r} \in \Pairs$:

\[ \Y(j,r) = \bigcup\set{A(j',r)}{\dist(j,j') \leq c_Y\cdot 5^{r}}, \quad \text{ where } c_Y = 2c_X + 3
c_2.\]
The constant $c_Y$ is chosen so that we can make sure that there are never two pairs $\tuple{j,r}$ and $\tuple{j', r'}$ such that both $j$ and $j'$ are open
and $\X(j,r)$ and $\X(j', r')$ intersect. Thus, a client can always only belong to at most one set $\X(j,r),$ where $\tuple{j,r}$ is open. The facility $j$
will be the facility that the client is assigned to. The following lemmas follow as before:

\begin{lemma} \label{lem: y-nr-areas}
For any $\tuple{j,r} \in \Pairs$, $\Y(j,r)$ is a union of at most $2^{5\kappa}$ areas.
\end{lemma}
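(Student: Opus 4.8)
The plan is to mimic the proof of Lemma~\ref{lem: x-nr-areas}, which was itself described as a ``simple corollary of Lemma~\ref{lem:doubling_dimension_2}''. Recall that $\Y(j,r)$ is defined as the union of all areas $A(j',r)$ with $\tuple{j',r} \in \Pairs$ and $\dist(j,j') \le c_Y \cdot 5^r$. So the number of areas in the union is exactly $|\{\tuple{j',r}\in\Pairs : \dist(j,j')\le c_Y\cdot 5^r\}|$, and bounding this cardinality is precisely what Lemma~\ref{lem:doubling_dimension_2} does, once we write $c_Y\cdot 5^r$ in the form $2^{\alpha}c_1\cdot 5^r$ for a suitable $\alpha$.

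First I would set $p = j$ and observe that every area contributing to $\Y(j,r)$ corresponds to a pair $\tuple{j',r}\in\Pairs(j,r)$ in the notation of Lemma~\ref{lem:doubling_dimension_2}, provided we choose $\alpha$ so that $2^{\alpha}c_1 \ge c_Y$; a strict-versus-non-strict inequality detail (the lemma uses $\dist(p,j) < 2^{\alpha}c_1\cdot 5^r$ while $\Y$ uses $\le c_Y\cdot 5^r$) is harmless as long as the inequality $2^{\alpha}c_1 > c_Y$ is strict, so I would pick $\alpha$ with a little slack. Next I would plug in the numerical constants: $c_Y = 2c_X + 3c_2 = 2\cdot 72 + 3\cdot 35 = 249$ and $c_1 = 20$, so we need $2^{\alpha}\cdot 20 > 249$, i.e. $2^{\alpha} > 12.45$, which holds for $\alpha = 4$ (since $2^4 = 16$). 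Then Lemma~\ref{lem:doubling_dimension_2} gives $|\Pairs(j,r)| \le 2^{(\alpha+1)\kappa} = 2^{5\kappa}$, which is exactly the claimed bound. I would close by noting that $\Y(j,r)$ is the union of (at most) these many areas, one per such pair.

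\begin{proof}
Recall $c_Y = 2c_X + 3c_2 = 249$ and $c_1 = 20$. Since $2^4 \cdot c_1 = 16 \cdot 20 = 320 > 249 = c_Y$, every pair $\tuple{j',r} \in \Pairs$ with $\dist(j,j') \le c_Y \cdot 5^r$ satisfies $\dist(j,j') < 2^4 c_1 \cdot 5^r$, hence belongs to the set $\Pairs(j,r)$ of Lemma~\ref{lem:doubling_dimension_2} with $p = j$ and $\alpha = 4$. By that lemma, there are at most $2^{(4+1)\kappa} = 2^{5\kappa}$ such pairs. As $\Y(j,r)$ is the union of the areas $A(j',r)$ over exactly these pairs, it is a union of at most $2^{5\kappa}$ areas.
\end{proof}

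The main obstacle, such as it is, is purely bookkeeping: making sure the chosen $\alpha$ is an integer (as required for a clean application of Lemma~\ref{lem:doubling_dimension_2}) and that the resulting exponent matches the $2^{5\kappa}$ in the statement, while correctly tracking the strict/non-strict inequality. There is no genuine mathematical difficulty beyond the arithmetic of the constants $c_1, c_2, c_X, c_Y$, all of which are fixed in the preprocessing section.
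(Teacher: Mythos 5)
Your proof is correct and follows exactly the route the paper intends: it treats the lemma as a corollary of Lemma~\ref{lem:doubling_dimension_2} applied with $p=j$ and $\alpha=4$, using $c_Y = 249 < 16\,c_1 = 320$ to bound the number of contributing pairs $\tuple{j',r}$ by $2^{5\kappa}$, which is the same constant bookkeeping the paper uses elsewhere (e.g.\ for $\Pairs(j,r)$ with $c_4 < 16 c_1$). The handling of the strict versus non-strict inequality and the integrality of $\alpha$ is fine.
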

\begin{lemma} \label{lem: y-cont-ball}
It holds that $\Y(j,r) \subseteq B(j,c_4\cdot 5^{r})$.
\end{lemma}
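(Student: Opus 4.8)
The plan is to mimic the proof of Lemma~\ref{lem: x-cont-ball} essentially verbatim, replacing the constant $c_X$ by $c_Y$ throughout. First I would unfold the definition $\Y(j,r) = \bigcup\set{A(j',r)}{\dist(j,j') \leq c_Y\cdot 5^{r}}$, so that it suffices to fix an arbitrary pair $\tuple{j',r} \in \Pairs$ with $\dist(j,j') \leq c_Y \cdot 5^r$ and an arbitrary point $p \in A(j',r)$, and to prove $\dist(p,j) \leq c_4 \cdot 5^r$.

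Next I would invoke Lemma~\ref{lemma:area-contained-in-big-ball_2}, which gives $A(j',r) \subseteq B(j', c_2 \cdot 5^r)$, hence $\dist(p, j') \leq c_2 \cdot 5^r$. Combining this with the triangle inequality and the defining bound $\dist(j,j') \leq c_Y \cdot 5^r$ yields $\dist(p,j) \leq \dist(p,j') + \dist(j',j) \leq (c_2 + c_Y) \cdot 5^r$. Finally I would check that the constants line up: by definition $c_4 = c_Y + c_2$, so $(c_2 + c_Y)\cdot 5^r = c_4 \cdot 5^r$ and therefore $p \in B(j, c_4 \cdot 5^r)$. Since $p$ and $\tuple{j',r}$ were arbitrary, this establishes $\Y(j,r) \subseteq B(j, c_4 \cdot 5^r)$.

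I do not expect any genuine obstacle here. The only thing that could go wrong is a mismatch between the constant $c_4$ and the sum $c_Y + c_2$, but the constants were fixed precisely so that this identity holds (just as $c_3 = c_X + c_2$ is what makes Lemma~\ref{lem: x-cont-ball} work). So the statement reduces to a one-line triangle-inequality argument layered on top of Lemma~\ref{lemma:area-contained-in-big-ball_2}, and the ``proof'' is really just bookkeeping of the predefined constants.
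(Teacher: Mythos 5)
Your proof is correct and is exactly the argument the paper intends: the paper states that Lemma~\ref{lem: y-cont-ball} ``follows as before,'' i.e., by repeating the proof of Lemma~\ref{lem: x-cont-ball} with $c_X$ replaced by $c_Y$, using Lemma~\ref{lemma:area-contained-in-big-ball_2} and the triangle inequality, together with the identity $c_4 = c_Y + c_2$. Nothing is missing.
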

\begin{lemma} \label{lem: y-of-ancestor}
If $\tuple{j',r'}$ is an ancestor of $\tuple{j,r}$ in $\T$, then $\Y(j,r) \subseteq \Y(j',r')$.
\end{lemma}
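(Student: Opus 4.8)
The plan is to reduce the claim to a single step in the tree $\T$ and then combine the triangle inequality with the nesting properties of areas and balls established above. Since the ancestor relation is the transitive closure of the parent relation, and set inclusion is transitive, it suffices to prove the lemma in the case $\tuple{j',r'} = \parent(\tuple{j,r})$; then $r' = r+1$ and, by Lemma~\ref{lemma:inclusion-of-big-balls_2}, $\dist(j,j') \le c_1 \cdot 5^{r+1}$. (If $\tuple{j',r'}=\tuple{j,r}$ the statement is trivial.)

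Next I would unfold the definition of $\Y(j,r)$, which is a union of areas $A(j'',r)$ over facilities $j''$ with $\tuple{j'',r}\in\Pairs$ and $\dist(j,j'') \le c_Y\cdot 5^r$. Fixing one such $j''$, it is enough to show $A(j'',r)\subseteq\Y(j',r+1)$. Let $\tuple{j''',r+1} = \parent(\tuple{j'',r})$. Lemma~\ref{lemma:area-contained-in-big-ball_2} gives $A(j'',r) \subseteq A(j''',r+1)$, and Lemma~\ref{lemma:inclusion-of-big-balls_2} gives $\dist(j'',j''') \le c_1\cdot 5^{r+1}$. So everything reduces to checking that the area $A(j''',r+1)$ is one of the areas contributing to $\Y(j',r+1)$, i.e. that $\dist(j',j''') \le c_Y\cdot 5^{r+1}$.

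This last inequality is the heart of the matter, and it is a short triangle-inequality computation:
\[
\dist(j',j''') \le \dist(j',j) + \dist(j,j'') + \dist(j'',j''') \le c_1\cdot 5^{r+1} + c_Y\cdot 5^{r} + c_1\cdot 5^{r+1} = \left(2c_1 + \tfrac{c_Y}{5}\right)5^{r+1}.
\]
Since $c_Y = 249 \ge 50 = \tfrac{5}{2}c_1$, we have $2c_1 + c_Y/5 \le c_Y$, hence $\dist(j',j''') \le c_Y\cdot 5^{r+1}$, and therefore $A(j'',r)\subseteq A(j''',r+1)\subseteq\Y(j',r+1)$. Taking the union over all valid $j''$ yields $\Y(j,r)\subseteq\Y(j',r+1)=\Y(j',r')$, which completes the parent step and hence, by transitivity, the general case.

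I do not expect a serious obstacle here; the only points needing care are the reduction to the one-step case — where I should emphasize that the constants do not accumulate across levels, because each step re-establishes exactly the same inclusion — and verifying that $c_Y$ is chosen large enough relative to $c_1$, which the computation above confirms.
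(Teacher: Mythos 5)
Your proof is correct and follows essentially the same route as the paper: reduce to the parent step, take an area $A(j'',r)$ contributing to $\Y(j,r)$, pass to its parent area $A(j''',r+1)$ via Lemmas~\ref{lemma:area-contained-in-big-ball_2} and~\ref{lemma:inclusion-of-big-balls_2}, and verify $\dist(j',j''')\le c_Y\cdot 5^{r+1}$ by the same triangle-inequality bound $2c_1\cdot 5^{r+1}+c_Y\cdot 5^r\le c_Y\cdot 5^{r+1}$. Your constant check $c_Y\ge \tfrac52 c_1$ is exactly the paper's condition $10c_1\le 4c_Y$.
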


\paragraph*{Coloring of pairs.}
We are now ready to define a tie-breaking rule based on colors.
For any $\tuple{j,r} \in \Pairs$, consider the set $\Pairs(j,r) = J_r \cap B(j,c_4 \cdot 5^r)$. By Lemma~\ref{lem:doubling_dimension_2} and since $c_4 < 16 c_1$, it follows that $\Pairs(j,r)$ contains at most $2^{5\kappa}$ pairs.
We need to guarantee that at most one of them will ever be opened. Thus we introduce a tie-breaking rule based on colors of pairs. This guarantees that out of all pairs in 
$\Pairs(j,r)$ that fulfill the abundance condition only the one with the ``smallest'' color is opened.

More formally, we perform a preprocessing step using a greedy approach to color pairs of same log-radius in $\Pairs$ with $2^{5\kappa}+1$ colors from $0$ to $2^{5\kappa}$. 

Specifically for each log-radius $r \in  [\minR, \maxR]$, we greedily color every pair 
$\tuple{j,r}$ of $J_r$ by one color $s$ 
so that 
\emph{no two pairs}
$\tuple{j,r}, \tuple{j', r} \in J_r$ with
$\dist(j, j') \le c_4 \cdot  5^r$ 
\emph{are colored with the same color},
and we refer to $\tuple{j,r,s}$ as \emph{triplet}.  
Let $J_{r,s} := \set{\tuple{j,r,s'} \in J_r}{ s' = s}$.

\paragraph*{Designated facilities.}

Furthermore, even if a triplet $\tuple{j,r,s}$ fulfills the condition to be opened (which is explained in the next section) it will not be opened, if there is a ``cheaper'' facility nearby. More formally, we precompute for each pair $\tuple{j,r}$ in $\Pairs$
the following {\em designated facility}:
Let  $f^*_{\tuple{j,r}}$ be the minimum opening cost of any facility in $\X(j,r)$, i.e.,
$$f^*_{\tuple{j,r}} = \min\set[\big]{f_{j'}}{
    {j'} \in \Facilities \cap \X(j,r)}.$$
 The {\em designated facility} $j^*_{\tuple{j,r}}$ of
$\tuple{j,r}$  is the facility with minimum cost $f^*_{\tuple{j,r}}$ in $\X(j,r)$ with ties broken according to the minimum id-number, i.e.,
$j^*_{\tuple{j,r}} = \min\set[\big]{j'}{j' \in \Facilities \cap \X(j,r), 
f_{j'} = f^*_{\tuple{j,r}}}.$
		
\begin{observation}\label{obs:openf_2}
For any $\tuple{j,r} \in \Pairs$, $\dist(j, j^*_{\tuple{j,r}} ) \leq c_3 \cdot 5^r$ and $f^*_{\tuple{j,r}} \le f_j$.
\end{observation}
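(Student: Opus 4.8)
The plan is to read both assertions straight off the definition of the designated facility, using only two facts: that $\X(j,r)$ contains the facility $j$ itself, and that $\X(j,r)$ is contained in the ball $B(j,c_3\cdot 5^r)$.

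First I would check that $j^*_{\tuple{j,r}}$ (and hence $f^*_{\tuple{j,r}}$) is well-defined, i.e.\ that $\Facilities\cap\X(j,r)\neq\emptyset$; in fact I claim $j\in\X(j,r)$. Since $\dist(j,j)=0\le c_X\cdot 5^r$, the area $A(j,r)$ is one of the areas in the union defining $\X(j,r)$, and Lemma~\ref{lemma:area-contained-in-big-ball_2} gives $j\in A(j,r)$. Hence $j\in\Facilities\cap\X(j,r)$.

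For the inequality $f^*_{\tuple{j,r}}\le f_j$, observe that $f^*_{\tuple{j,r}}$ is by definition the minimum of $f_{j'}$ over all $j'\in\Facilities\cap\X(j,r)$; since $j$ lies in this set, the minimum is at most $f_j$. For the distance bound, note that $j^*_{\tuple{j,r}}\in\X(j,r)$ by construction, and by Lemma~\ref{lem: x-cont-ball} we have $\X(j,r)\subseteq B(j,c_3\cdot 5^r)$, so $\dist(j,j^*_{\tuple{j,r}})\le c_3\cdot 5^r$.

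There is essentially no obstacle here: the statement is an immediate corollary of the definitions together with Lemmas~\ref{lemma:area-contained-in-big-ball_2} and~\ref{lem: x-cont-ball}. The only subtlety worth flagging is the nonemptiness of $\Facilities\cap\X(j,r)$, which is needed for the minimum defining the designated facility to make sense, and which is exactly the observation $j\in A(j,r)\subseteq\X(j,r)$ made above.
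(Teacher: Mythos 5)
Your proof is correct and matches the argument the paper intends: the observation is stated without an explicit proof precisely because it follows immediately from the definitions together with $j \in A(j,r) \subseteq \X(j,r)$ (Lemma~\ref{lemma:area-contained-in-big-ball_2}) and $\X(j,r) \subseteq B(j,c_3\cdot 5^r)$ (Lemma~\ref{lem: x-cont-ball}), which is exactly what you use. Your extra remark on nonemptiness of $\Facilities \cap \X(j,r)$ (via $j \in \Facilities$ since $\tuple{j,r}\in\Pairs$) is a fine, if minor, point of care.
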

If the triplet $\tuple{j,r,s}$ fulfills the condition to be opened, we open $j^*_{\tuple{j,r}}$ instead, or do nothing if $j^*_{\tuple{j,r}}$ is already open.
Whenever all triplets for which a
facility is designated are closed, then the facility is {\em closed}.

\section{Processing updates}\label{sec:3}

After the preprocessing phase we are given a laminar family of triplets, where each triplet $\tuple{j,r,s}$ is formed by an area $A(j,r)$ along with its pre-defined color $s$. Depending on the set of clients $\Clients$,
a triplet can be either \emph{disabled} or \emph{enabled} and either \emph{open} or \emph{closed},
where each open triplet is also enabled.
These properties of triplets are maintained dynamically as the set $\Clients$ of clients changes. 
Initially $\Clients = \emptyset$ and all triplets are closed and disabled.
We now proceed to the formal definitions.

\paragraph*{Open triplets. }
We {\em open} a triplet $\tuple{j,r,s}$ if there are enough clients in the set $X(j,r)$ to pay 
the opening cost and it has no strictly smaller-radius or no same-radius and strictly smaller-color open triplet in its ``neighborhood''.
A triplet that is not open is {\em closed}. Formally a triplet $\tuple{j,r,s}$ is open if it belongs to the set $ \Facilities_{r,s}^\ropen(\Clients)$, 
which is defined\footnote{Remark that to make the formula a bit simpler we slightly abuse notation here -- 
$\Y(j,r)$ is a set of areas (i.e., subsets of the metric space), while $\Facilities_{r',s'}^\ropen(\Clients)$ is a set of triples.
Formally the intersection should be understood as
$\Y(j,r) \cap \set{j' \in \Facilities}{\tuple{j',r',s'} \in \Facilities_{r',s'}^\ropen(\Clients)}$.}
recursively as follows.
\begin{align*}
\Facilities_{r,s}^\ropen(\Clients) = \Big\{ & \tuple{j,r,s} \in \Facilities_{r,s} \ \Big|\ 
    5^r \cdot \abs{\Clients \cap \X(j,r)} \geq 
        f^{\ropt}_{\tuple{j,r}}
    \land \forall\tuple{r',s'} <_\rlex \tuple{r,s} : \\ 
    &   \Y(j,r) \cap \Facilities_{r',s'}^\ropen(\Clients) = \varnothing
      \Big\}.
\end{align*}
We use $\Facilities^\ropen_\Clients = \bigcup_{r,s} \Facilities_{r,s}^\ropen(\Clients)$ to denote the set of all open clients
and  $\Solution_\Clients$ to denote the set of all open facilities, i.e.,
$ \Solution_\Clients = \set{j^*_{\tuple{j,r}} \in \Facilities}{
    \tuple{j,r,s} \in J^\ropen_\Clients}. $
    
We call the following condition for $\tuple{j,r,s}$, used in the definition of $J_{r,s}^\ropen$,
 the \emph{abundance condition},
\begin{equation}
5^r \cdot \abs{\Clients \cap \X(j,r)} \geq f^*_{\tuple{j,r}}
\label{eq:abundance_condition_2}.
\end{equation}
\begin{lemma} \label{lem:oneOpen2}
If $\abs{\Clients} > 0$ then there exists at least one open triplet.
\end{lemma}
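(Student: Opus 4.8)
The plan is a minimal-element argument over the finite set of log-radius--color pairs. The key observation is that the first conjunct in the definition of $\Facilities_{r,s}^\ropen(\Clients)$ is \emph{exactly} the abundance condition~\eqref{eq:abundance_condition_2}; hence a triplet whose parameters $(r,s)$ are $<_\rlex$-minimal among those satisfying the abundance condition automatically satisfies the second (``no open triplet of strictly smaller parameters in its neighbourhood'') conjunct as well, because for every $(r',s') <_\rlex (r,s)$ minimality forces $\Facilities_{r',s'}^\ropen(\Clients) = \varnothing$. So such a triplet is open, and it remains only to produce one triplet satisfying the abundance condition.

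First I would check that the root triplet $\tuple{j_0,\maxR,s_0}$ satisfies the abundance condition whenever $\Clients \neq \varnothing$. Since $\Facilities_{\maxR} = \{j_0\}$ is a singleton, Lemma~\ref{lem:areas-equal-balls_2} applied with $r=\maxR$ gives $A(j_0,\maxR) = B(j_0, c_2\cdot 5^{\maxR})$, and as every point of $V$ is within distance $W \le 5^{\maxR} \le c_2\cdot 5^{\maxR}$ of $j_0$, this ball is all of $V$; therefore $\X(j_0,\maxR) \supseteq A(j_0,\maxR) \supseteq \Clients$, so $\abs{\Clients \cap \X(j_0,\maxR)} = \abs{\Clients} \ge 1$. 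Combining this with $5^{\maxR} \ge \max(W,\fmax) \ge \fmax \ge f^*_{\tuple{j_0,\maxR}}$ (the choice $\maxR = \cround{\max(W,\fmax)}$ and Observation~\ref{obs:openf_2}) yields $5^{\maxR}\cdot\abs{\Clients\cap\X(j_0,\maxR)} \ge f^*_{\tuple{j_0,\maxR}}$, i.e.\ the abundance condition holds for the root.

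Then I would pick a triplet $\tuple{j^\dagger,r^\dagger,s^\dagger} \in \bigcup_{r,s}\Facilities_{r,s}$ that satisfies the abundance condition and whose pair $(r^\dagger,s^\dagger)$ is $<_\rlex$-minimal among all such triplets (ties between distinct facilities broken arbitrarily), and verify $\tuple{j^\dagger,r^\dagger,s^\dagger} \in \Facilities_{r^\dagger,s^\dagger}^\ropen(\Clients)$: the abundance conjunct holds by choice; for any $(r',s') <_\rlex (r^\dagger,s^\dagger)$, minimality says no triplet of log-radius $r'$ and colour $s'$ meets the abundance condition, hence $\Facilities_{r',s'}^\ropen(\Clients) = \varnothing$ and a fortiori $\Y(j^\dagger,r^\dagger)\cap\Facilities_{r',s'}^\ropen(\Clients) = \varnothing$. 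I do not expect a genuine obstacle here; the single point deserving a word is that the recursion defining $\Facilities_{r,s}^\ropen(\Clients)$ is well-founded, since $<_\rlex$ well-orders the finite index set (base case: the $<_\rlex$-minimal pair, for which the universally quantified clause is vacuous) -- this is what legitimizes reading ``every triplet with parameters $(r',s')$ is closed'' as the identity $\Facilities_{r',s'}^\ropen(\Clients) = \varnothing$.
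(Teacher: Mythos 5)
Your proposal is correct and rests on the same key step as the paper's proof: the root triplet $\tuple{j^{\rroot},\maxR,s}$ satisfies the abundance condition because $\X(j^{\rroot},\maxR)=V$ and $5^{\maxR}\ge\fmax\ge f^*_{\tuple{j^{\rroot},\maxR}}$. Your finish via a $<_\rlex$-minimal abundant triplet (using that open implies abundant, so all lex-smaller open sets are empty) is just a slightly more explicit packaging of the paper's concluding step that the root is either open or some lex-smaller triplet is open.
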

When showing the bound on the approximation ratio, we need the
property that for each point $i \in V$ there is {\em at most
one} set $\X(j,r)$ associated with an open triplet such that $i$ belongs to. This is necessary to make sure that each client ``pays'' for at most one open facility.
\begin{lemma}\label{lem:at_most_one_ball}
Each client $i \in \Clients$ belongs to at most one $\X(j,r)$ with $\tuple{j,r, s} \in J^\ropen_{\Clients}$ for some color $s$.
\end{lemma}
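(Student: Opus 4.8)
The plan is to argue by contradiction: suppose a client $i$ lies in two \emph{distinct} sets $\X(j_1,r_1)$ and $\X(j_2,r_2)$ with both triplets $\tuple{j_1,r_1,s_1}$ and $\tuple{j_2,r_2,s_2}$ open, where $s_1,s_2$ are the colors the preprocessing assigns to the pairs $\tuple{j_1,r_1}$ and $\tuple{j_2,r_2}$. First I would apply Lemma~\ref{lem: x-cont-ball} to get $\dist(i,j_1)\le c_3\cdot 5^{r_1}$ and $\dist(i,j_2)\le c_3\cdot 5^{r_2}$, hence $\dist(j_1,j_2)\le c_3\cdot 5^{r_1}+c_3\cdot 5^{r_2}$ by the triangle inequality. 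In every case the strategy is the same: show that one of the two triplets lies in the $\Y$-neighborhood of the other, and then invoke the recursive definition of $\Facilities^{\ropen}$ to contradict the openness of the lexicographically larger triplet; one sub-case escapes this and is ruled out instead by the coloring rule.

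\emph{Case $r_1=r_2=:r$.} Then $\dist(j_1,j_2)\le 2c_3\cdot 5^{r}$. If $j_1=j_2$ the two sets $\X(j_1,r)$ and $\X(j_2,r)$ coincide, contradicting distinctness, so $j_1\ne j_2$. If also $s_1=s_2$, then $\tuple{j_1,r}$ and $\tuple{j_2,r}$ are two pairs of $J_r$ at distance at most $2c_3\cdot 5^r\le c_4\cdot 5^r$ carrying the same color, contradicting the coloring rule; hence $s_1\ne s_2$, say $s_1<s_2$. Since $j_1\in A(j_1,r)$ by Lemma~\ref{lemma:area-contained-in-big-ball_2} and $\dist(j_2,j_1)\le 2c_3\cdot 5^r\le c_Y\cdot 5^r$, the area $A(j_1,r)$ is among those forming $\Y(j_2,r)$, so $j_1\in\Y(j_2,r)$. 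As $\tuple{r,s_1}<_\rlex\tuple{r,s_2}$, this yields $j_1\in\Y(j_2,r)\cap\Facilities_{r,s_1}^{\ropen}(\Clients)\ne\varnothing$, contradicting $\tuple{j_2,r,s_2}\in\Facilities_{r,s_2}^{\ropen}(\Clients)$.

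\emph{Case $r_1\ne r_2$,} say $r_1<r_2$; then $\dist(j_1,j_2)\le 2c_3\cdot 5^{r_2}$ since $5^{r_1}\le 5^{r_2}$. Let $\tuple{\hat j,r_2}$ be the ancestor of $\tuple{j_1,r_1}$ at level $r_2$ in $\T$. Iterating the area-nesting part of Lemma~\ref{lemma:area-contained-in-big-ball_2} gives $j_1\in A(j_1,r_1)\subseteq A(\hat j,r_2)$, and the same lemma gives $A(\hat j,r_2)\subseteq B(\hat j,c_2\cdot 5^{r_2})$, so $\dist(j_1,\hat j)\le c_2\cdot 5^{r_2}$. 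Hence $\dist(j_2,\hat j)\le\dist(j_2,j_1)+\dist(j_1,\hat j)\le(2c_3+c_2)\cdot 5^{r_2}=c_Y\cdot 5^{r_2}$ by the choice of constants, so $A(\hat j,r_2)$ is among the areas forming $\Y(j_2,r_2)$ and therefore $j_1\in\Y(j_2,r_2)$. Since $r_1<r_2$ forces $\tuple{r_1,s_1}<_\rlex\tuple{r_2,s_2}$, we obtain $j_1\in\Y(j_2,r_2)\cap\Facilities_{r_1,s_1}^{\ropen}(\Clients)\ne\varnothing$, contradicting $\tuple{j_2,r_2,s_2}\in\Facilities_{r_2,s_2}^{\ropen}(\Clients)$. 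This exhausts the cases.

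The hard part is the mixed-level case: there $j_1$ lives at level $r_1$ whereas the neighborhood $\Y(j_2,r_2)$ is assembled from level-$r_2$ areas, so one must replace $j_1$ by its level-$r_2$ ancestor $\hat j$ and absorb the extra displacement $\dist(j_1,\hat j)\le c_2\cdot 5^{r_2}$; the triangle inequality closes only because the constants were chosen with $c_Y=2c_3+c_2$. The secondary subtlety is the equal-level equal-color sub-case: the $\Facilities^{\ropen}$ recursion cannot exclude it (neither triplet is $<_\rlex$ the other), which is precisely why the coloring step with its wider radius $c_4$ is needed.
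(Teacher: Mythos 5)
Your proof is correct and follows essentially the same route as the paper: split on whether the two open triplets have equal logradius, use Lemma~\ref{lem: x-cont-ball} plus the triangle inequality to bound the facility distance, rule out the equal-radius equal-color subcase via the coloring rule, and contradict the openness recursion by placing the lexicographically smaller triplet's facility inside the $\Y$-neighborhood of the larger one. The only cosmetic difference is that in the mixed-radius case you bound the distance to the level-$r_2$ ancestor via area nesting ($j_1\in A(\hat j,r_2)\subseteq B(\hat j,c_2\cdot 5^{r_2})$), whereas the paper telescopes parent distances to get $\tfrac{5}{4}c_1\cdot 5^{r_2}$; both fit within $c_Y$.
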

Note, however, that is not true that each client $i \in \Clients$ is contained in {\em at least one} $\X(j,r)$ associated with an open triplet for some $r$: even though there always exists a $\X(j,r)$ fulfilling the abundance condition and 
containing $i$ (namely $\X(j^{\rroot}, \maxR)$), the corresponding triplet $\tuple{j^{\rroot},\maxR,s}$ might not be open due to a ``nearby'' open triplet of smaller logradius. To deal with this issue we introduce enabled triplets and show that each client in $\Clients$ is contained in at least one $\X(j,r)$ of an enabled triplet.

\paragraph*{Enabled triplets. } A triplet $\tuple{j,r,s}$ is {\em enabled} if it belongs to the set $ \Facilities_{r,s}^\renbl(\Clients)$, which is defined\footnote{Similarly to the definition of $\Facilities_{r,s}^\ropen(\Clients)$ we mean here
$\Y(j,r) \cap \set{j' \in \Facilities}{\tuple{j',r',s'} \in \Facilities_{r',s'}^\ropen(\Clients)}$.} as follows:
\[
  \Facilities_{r,s}^\renbl(\Clients) = \set[\Big]{\tuple{j,r,s} \in J_{r,s}}{ 
    \exists \tuple{r',s'} \leq_\rlex \tuple{r,s}:
    \Y(j,r) \cap \Facilities_{r',s'}^\ropen(\Clients) \neq \varnothing }. 
\]
We use $ \Facilities^\renbl_\Clients = \bigcup_{r,s} \Facilities_{r,s}^\renbl(\Clients)$ to denote the set of all enabled facilities. The following observation follows from the definition.
 \begin{observation} \label{obs:openEnabled_2}
If a triplet is open, then it is also enabled.
 \end{observation}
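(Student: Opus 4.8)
The plan is to exhibit, directly from the definitions, the witness required for enabledness, and the natural candidate is the triplet itself. Suppose $\tuple{j,r,s}$ is open, i.e.\ $\tuple{j,r,s} \in \Facilities_{r,s}^\ropen(\Clients)$. To conclude that it is enabled it suffices, by the definition of $\Facilities_{r,s}^\renbl(\Clients)$, to produce a pair $\tuple{r',s'} \leq_\rlex \tuple{r,s}$ with $\Y(j,r) \cap \Facilities_{r',s'}^\ropen(\Clients) \neq \varnothing$. I would simply take $\tuple{r',s'} = \tuple{r,s}$, which trivially satisfies $\tuple{r,s} \leq_\rlex \tuple{r,s}$, and then check that $\tuple{j,r,s}$ lies in the intersection $\Y(j,r) \cap \Facilities_{r,s}^\ropen(\Clients)$, under the stated convention that $\Y(j,r)$ is identified with the set of triplets whose facility lies in one of the areas composing it.

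Verifying this membership is immediate. Membership of $\tuple{j,r,s}$ in $\Facilities_{r,s}^\ropen(\Clients)$ is exactly the hypothesis that the triplet is open. For membership in $\Y(j,r)$, recall from Lemma~\ref{lemma:area-contained-in-big-ball_2} that $j \in A(j,r)$, and since $\dist(j,j) = 0 \le c_Y \cdot 5^r$, the area $A(j,r)$ is among the areas in the union defining $\Y(j,r)$; hence $j \in \Y(j,r)$. Thus $\Y(j,r) \cap \Facilities_{r,s}^\ropen(\Clients) \supseteq \{\tuple{j,r,s}\} \neq \varnothing$, so $\tuple{j,r,s} \in \Facilities_{r,s}^\renbl(\Clients)$, i.e.\ it is enabled. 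There is essentially no obstacle here: the only point needing care is unwinding the abuse of notation whereby $\Y(j,r)$, a union of subsets of the metric space, is intersected with a set of triplets, and observing that a pair is allowed to be its own $\leq_\rlex$-witness because the order is reflexive.
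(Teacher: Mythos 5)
Your proof is correct and matches the paper's intent: the paper treats this as immediate from the definitions, and your argument (using $\tuple{r',s'}=\tuple{r,s}$ together with $j \in A(j,r) \subseteq \Y(j,r)$ so that the open triplet witnesses its own enabledness) is exactly the justification being alluded to. The key subtlety—that openness only forbids nearby open triplets of \emph{strictly} smaller lexicographic rank while enabledness allows a non-strict witness—is correctly handled.
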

Furthermore, as a corollary of Lemma~\ref{lem: y-of-ancestor} we have the following lemma.
 \begin{lemma} \label{lem:ancesEnabled_2}
 If a triplet is enabled, then all its ancestors in $\mathcal{T}$ are also enabled.
\end{lemma}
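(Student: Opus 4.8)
The plan is to prove the statement for the parent of a triplet and then iterate up the tree. So suppose $\tuple{j,r,s}$ is enabled, i.e.\ $\tuple{j,r,s} \in \Facilities_{r,s}^\renbl(\Clients)$. By the definition of the enabled set, there is a pair $\tuple{r'',s''} \leq_\rlex \tuple{r,s}$ and an open triplet $\tuple{j'',r'',s''} \in \Facilities_{r'',s''}^\ropen(\Clients)$ with $\Y(j,r) \cap \Facilities_{r'',s''}^\ropen(\Clients) \neq \varnothing$; concretely (using the notational convention of the footnote) this means $j'' \in \Y(j,r)$. Let $\tuple{j',r+1,s'}$ denote the parent of $\tuple{j,r,s}$ in $\T$, recalling that by construction the parent always has logradius exactly $r+1$.

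First I would verify that the same open triplet witnesses that the parent is enabled. Since $\tuple{j',r+1}$ is an ancestor of $\tuple{j,r}$ in $\T$, Lemma~\ref{lem: y-of-ancestor} gives $\Y(j,r) \subseteq \Y(j',r+1)$, hence $j'' \in \Y(j',r+1)$ and therefore $\Y(j',r+1) \cap \Facilities_{r'',s''}^\ropen(\Clients) \neq \varnothing$. Next I would check the lexicographic condition: since $r'' \leq r < r+1$, the pair $\tuple{r'',s''}$ is strictly smaller in the first coordinate than $\tuple{r+1,s'}$, so $\tuple{r'',s''} <_\rlex \tuple{r+1,s'}$ and in particular $\tuple{r'',s''} \leq_\rlex \tuple{r+1,s'}$. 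These two facts are precisely the defining conditions for $\tuple{j',r+1,s'} \in \Facilities_{r+1,s'}^\renbl(\Clients)$, so the parent is enabled.

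Finally I would conclude by a straightforward induction on the distance in $\T$ from $\tuple{j,r,s}$ to a given ancestor: applying the parent step repeatedly along the tree path shows that every ancestor of $\tuple{j,r,s}$ in $\T$ is enabled, which is the claim. There is no real obstacle here; the only points that need care are the direction of the order $\leq_\rlex$ (smaller logradius first, then smaller color) together with the fact that the parent has strictly larger logradius, and the fact that the definition of ``enabled'' uses $\leq_\rlex$ rather than $<_\rlex$ — it is this latter choice that lets the very same open triplet $\tuple{j'',r'',s''}$ be reused verbatim as the witness for the parent.
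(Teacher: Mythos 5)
Your proof is correct and matches the paper's intent exactly: the paper states this lemma as a direct corollary of Lemma~\ref{lem: y-of-ancestor}, and your argument — reusing the same open witness triplet, noting $\Y(j,r) \subseteq \Y(j',r+1)$ and that the parent's strictly larger logradius keeps the lexicographic condition satisfied, then inducting up the tree — is precisely that corollary spelled out.
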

Since $A(j^{\rroot}, \maxR) = V$, every point in $V$ belongs to at least one $\X(j,r)$ associated with an enabled triplet. To guarantee that at least one triplet is enabled, recall that Lemma~\ref{lem:oneOpen2} showed that if $\abs{\Clients} > 0$, then there exist an open, and, thus, also enabled triplet
(see Observation~\ref{obs:openEnabled_2}). 
Lemma~\ref{lem:ancesEnabled_2} implies that the root of $\T$ is enabled, implying the following lemma.
 \begin{lemma}\label{lem:allenabled}
 If $\abs{\Clients} > 0$ then every point in $V$ belongs to at least one $\X(j,r)$ associated with an enabled triplet.
 \end{lemma}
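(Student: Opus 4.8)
The plan is to derive the statement from the three facts just established, by reducing everything to a single enabled triplet, namely the root of $\T$. Concretely, I would show (a) that whenever $\abs{\Clients} > 0$ the root triplet $\tuple{j^{\rroot},\maxR,s}$ is enabled, and (b) that $\X(j^{\rroot},\maxR) = V$. Together these immediately give that every point of $V$ lies in a set $\X(j,r)$ associated with an enabled triplet, which is exactly the claim.

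For part (a) I would chain the earlier results. Since $\abs{\Clients} > 0$, Lemma~\ref{lem:oneOpen2} provides at least one open triplet $\tuple{j,r,s'} \in \Facilities^\ropen_\Clients$. By Observation~\ref{obs:openEnabled_2} this triplet is enabled, and by Lemma~\ref{lem:ancesEnabled_2} every ancestor of it in $\T$ is enabled. Since the root $\tuple{j^{\rroot},\maxR,s}$ is an ancestor of every node of $\T$ (by construction every pair $\tuple{j,r}$ with $r < \maxR$ has a parent of logradius $r+1$, and the unique pair at logradius $\maxR$ is the root), the root triplet is enabled.

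For part (b), recall that $\maxR = \cround{\max(W,\fmax)}$, so $5^{\maxR} \ge \max(W,\fmax) \ge W$ and hence $B(j^{\rroot}, c_2 \cdot 5^{\maxR}) = V$. Therefore, in the construction of the laminar family of areas, every point $p \in V$ does have a well-defined smallest logradius $r^* \le \maxR$ with $p \in B(j, c_2 \cdot 5^{r^*})$ for some $\tuple{j,r^*} \in \Pairs$, and by step (3) of that construction $p$ is added to the area of every ancestor of the chosen pair, in particular to $A(j^{\rroot},\maxR)$. Thus $A(j^{\rroot},\maxR) = V$. Finally, since $\dist(j^{\rroot}, j^{\rroot}) = 0 \le c_X \cdot 5^{\maxR}$, the area $A(j^{\rroot},\maxR)$ appears in the union defining $\X(j^{\rroot},\maxR)$, so $V = A(j^{\rroot},\maxR) \subseteq \X(j^{\rroot},\maxR) \subseteq V$. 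Combining (a) and (b) finishes the proof.

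I do not expect any genuine obstacle here; the only points needing a moment of care are verifying that $A(j^{\rroot},\maxR)$ really equals all of $V$ — which rests on the choice of $\maxR$ together with step (3) of the area construction propagating each point up to all of its ancestors — and noting that $\X(j,r)$ always contains the area $A(j,r)$ itself, which is immediate from the definition of $\X$.
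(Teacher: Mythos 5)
Your proof is correct and follows the paper's own route: use Lemma~\ref{lem:oneOpen2} together with Observation~\ref{obs:openEnabled_2} and Lemma~\ref{lem:ancesEnabled_2} to conclude the root triplet is enabled, and then use $A(j^{\rroot},\maxR)=V\subseteq \X(j^{\rroot},\maxR)$. The extra detail you supply on why $A(j^{\rroot},\maxR)=V$ (via the choice of $\maxR$ and step~(3) of the area construction) only makes explicit what the paper asserts in passing.
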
 
The next lemma shows that the definition of enabled implies that any triplet that satisfies the abundance definition is either open or enabled. This is a crucial observation for the proof of the approximation ratio, as it will allow us to argue that for any facility that the optimal solution opens, an enabled triplet must be nearby.

\begin{lemma}\label{obs:abundance_implies_enabled_2}
If $\tuple{j,r,s}$ satisfies the abundance condition, then it is enabled.
\end{lemma}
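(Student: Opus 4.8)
The plan is a short case analysis on whether the triplet $\tuple{j,r,s}$ is itself open. I will use throughout that, by definition, $\tuple{j,r,s} \in \Facilities_{r,s}^\ropen(\Clients)$ holds exactly when the abundance condition~\eqref{eq:abundance_condition_2} is satisfied \emph{and} $\Y(j,r) \cap \Facilities_{r',s'}^\ropen(\Clients) = \varnothing$ for every $\tuple{r',s'} <_\rlex \tuple{r,s}$, whereas $\tuple{j,r,s} \in \Facilities_{r,s}^\renbl(\Clients)$ holds exactly when $\Y(j,r) \cap \Facilities_{r',s'}^\ropen(\Clients) \neq \varnothing$ for \emph{some} $\tuple{r',s'} \leq_\rlex \tuple{r,s}$.

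In the first case, where $\tuple{j,r,s}$ is open, it is enabled by Observation~\ref{obs:openEnabled_2}, so there is nothing more to do. (Unwinding that observation: $j \in A(j,r) \subseteq \Y(j,r)$, so choosing $\tuple{r',s'} = \tuple{r,s}$ in the definition of $\Facilities_{r,s}^\renbl$ already witnesses enabledness.)

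In the second case, where $\tuple{j,r,s}$ is not open, I would argue as follows. Since the abundance condition~\eqref{eq:abundance_condition_2} holds by hypothesis, the first conjunct in the definition of $\Facilities_{r,s}^\ropen(\Clients)$ is met, so the failure to be open must come from the second conjunct: there is a pair $\tuple{r',s'} <_\rlex \tuple{r,s}$ with $\Y(j,r) \cap \Facilities_{r',s'}^\ropen(\Clients) \neq \varnothing$. Since $\tuple{r',s'} <_\rlex \tuple{r,s}$ implies $\tuple{r',s'} \leq_\rlex \tuple{r,s}$, this same pair witnesses $\tuple{j,r,s} \in \Facilities_{r,s}^\renbl(\Clients)$, i.e., $\tuple{j,r,s}$ is enabled.

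I do not expect any real obstacle here: the whole argument is just unwinding the two mutually dependent definitions. The only point worth a sentence is that the recursion defining $\Facilities_{r,s}^\ropen$ is well-founded because $<_\rlex$ well-orders the finite set of logradius--color pairs, so ``open'' and ``not open'' are genuinely exhaustive and the dichotomy above is legitimate; the substantive step is the trivial implication $<_\rlex \Rightarrow \leq_\rlex$, which turns the failure clause of ``open'' into the success clause of ``enabled''.
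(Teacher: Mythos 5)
Your proof is correct and is essentially the paper's argument: the paper splits on whether some open triplet $\tuple{j',r',s'}$ with $\tuple{r',s'} \leq_\rlex \tuple{r,s}$ has $j' \in \Y(j,r)$ (yes gives enabled directly, no gives open, hence enabled), which is the same two-case unwinding of the definitions that you perform, just phrased from the other side of the dichotomy. No gaps; your extra remark about well-foundedness of the recursion is fine but not needed in the paper's level of detail.
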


\paragraph*{Assignment of clients. }
We next describe how to assign each client $i$ to an enabled triplet $\tuple{j,r,s}$ and an open facility. If $\tuple{j,r,s}$ is not open, we show how to find a close open triplet of smallest radius.
For this open triplet we know its designed facility that is open. This is the facility that the client is finally assigned to.

We start with the assignment of $i \in C$ to an enabled triplet. To this end, let $r_i^{\rarea}$ be the minimum logradius of any enabled triplet such that $i$ belongs to the area associated with the triplet, i.e.,
$r^{\rarea}_i = \min\set{r}{
  \tuple{j,r,s} \in \Facilities^\renbl_\Clients, i \in A(j,r)}$. Note that $A(j,r) \subseteq \X(j,r)$, and let $j_i^{\rarea}$ be the center of the area with log-radius $r^{\rarea}_i$  and define $\tuple{j_i^{\rarea}, r^{\rarea}_i,s^{\rarea}_i}$ to be the corresponding triplet (recall that same-logradius areas are disjoint).
  
Once we determined the enabled triplet $\tuple{j_i^{\rarea}, r^{\rarea}_i,s^{\rarea}_i}$ of $i$,
we assign $i$ to the open triplet of minimum radius such that the corresponding facility belongs to $\Y(j_i^\rarea, r_i^\rarea)$ 
and let $j^\ropen_i$ be the designated open facility of that open triplet. We assign $i$ to it. Formally:
\begin{align*}
\tuple{r^\rdesg_i, s^\rdesg_i,j^\rdesg_i} &= \min\Big\{
    \tuple{r',s',j'} \ \Big| \
    \tuple{j',r',s'} \in \Facilities^\ropen_\Clients,
    \tuple{r',s'} \leq_\rlex \tuple{r^\rarea_i,s_i^\rarea}, \\
     & \quad\quad\quad\quad j' \in \Y(j^\rarea_i, r^\rarea_i)\Big\},\\
j^\ropen_i &= j^*_{\tuple{j^\rdesg_i,r^\rdesg_i}}.  
\end{align*}
Finally we denote the set of all clients assigned to facility $j$ by
$\Clients_j = \set{i \in \Clients}{j = j_i^{\ropen}}$. Note that $j_i^\ropen$ does not have to be the closest open facility, but as the next lemma shows it is not far away from an open facility.
\begin{observation}\label{obs:again_nearby_open_facility}
Any $i \in \Clients$ is within $(c_2 + c_3 + c_4) \cdot 5^{r^\rarea_i}$ 
of an open facility.
\end{observation}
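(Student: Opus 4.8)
The plan is to trace the chain of definitions that produced $j^\ropen_i$ and accumulate the distance bounds at each link, ending at the designated open facility. First I would fix $i \in \Clients$ and recall the three stages of the assignment: the enabled triplet $\tuple{j^\rarea_i, r^\rarea_i, s^\rarea_i}$ with $i \in A(j^\rarea_i, r^\rarea_i)$; the open triplet $\tuple{j^\rdesg_i, r^\rdesg_i, s^\rdesg_i}$ of minimum radius whose facility $j^\rdesg_i$ lies in $\Y(j^\rarea_i, r^\rarea_i)$; and finally $j^\ropen_i = j^*_{\tuple{j^\rdesg_i, r^\rdesg_i}}$. Note that such an open triplet exists because the enabled triplet is, by definition of $\Facilities^\renbl_{r,s}(\Clients)$, within $\Y$-distance of some open triplet of log-radius at most $r^\rarea_i$, so the minimum in the definition of $\tuple{r^\rdesg_i, s^\rdesg_i, j^\rdesg_i}$ is over a nonempty set; in particular $r^\rdesg_i \le r^\rarea_i$.

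Next I would assemble the bounds. By Lemma~\ref{lemma:area-contained-in-big-ball_2}, $i \in A(j^\rarea_i, r^\rarea_i) \subseteq B(j^\rarea_i, c_2 \cdot 5^{r^\rarea_i})$, so $\dist(i, j^\rarea_i) \le c_2 \cdot 5^{r^\rarea_i}$. By the definition of $\Y$ and Lemma~\ref{lem: y-cont-ball}, $j^\rdesg_i \in \Y(j^\rarea_i, r^\rarea_i) \subseteq B(j^\rarea_i, c_4 \cdot 5^{r^\rarea_i})$, hence $\dist(j^\rarea_i, j^\rdesg_i) \le c_4 \cdot 5^{r^\rarea_i}$. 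Finally, since $r^\rdesg_i \le r^\rarea_i$, Observation~\ref{obs:openf_2} gives $\dist(j^\rdesg_i, j^*_{\tuple{j^\rdesg_i, r^\rdesg_i}}) \le c_3 \cdot 5^{r^\rdesg_i} \le c_3 \cdot 5^{r^\rarea_i}$. Adding these three bounds via the triangle inequality yields
\[
\dist(i, j^\ropen_i) \;\le\; (c_2 + c_4 + c_3)\cdot 5^{r^\rarea_i},
\]
which is exactly the claimed bound.

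The only genuine point requiring care — rather than routine arithmetic — is the existence claim, i.e., that the set over which $\tuple{r^\rdesg_i, s^\rdesg_i, j^\rdesg_i}$ is minimized is nonempty, and that one may therefore legitimately invoke $r^\rdesg_i \le r^\rarea_i$. This follows directly from unpacking the definition of the enabled set: $\tuple{j^\rarea_i, r^\rarea_i, s^\rarea_i} \in \Facilities^\renbl_{r^\rarea_i, s^\rarea_i}(\Clients)$ means there is $\tuple{r',s'} \le_\rlex \tuple{r^\rarea_i, s^\rarea_i}$ with $\Y(j^\rarea_i, r^\rarea_i) \cap \Facilities^\ropen_{r',s'}(\Clients) \neq \varnothing$, and such a $\tuple{r',s'}$ has $r' \le r^\rarea_i$. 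Everything else is a direct substitution of the constants $c_2, c_3, c_4$ and the containment lemmas already proved, so no further obstacle arises.
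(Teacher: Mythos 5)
Your proof is correct and follows essentially the same route as the paper's: the triangle inequality along $i \to j^\rarea_i \to j^\rdesg_i \to j^\ropen_i$, bounded by $c_2$, $c_4$, and $c_3$ times $5^{r^\rarea_i}$ via the area-containment lemma, the $\Y$-containment lemma, and Observation~\ref{obs:openf_2}. Your explicit check that the minimum defining $\tuple{r^\rdesg_i, s^\rdesg_i, j^\rdesg_i}$ is taken over a nonempty set (and hence $r^\rdesg_i \le r^\rarea_i$) is a detail the paper leaves implicit, but it does not change the argument.
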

The value $r^{\rarea}_i$ is crucial for the cost estimate. Thus it is important to characterize this
value even further, as we do in the following lemma.
\begin{lemma} \label{lem:radiusMatch2}
For any $i \in \Clients$, if $i \in \X(j,r)$ and $\tuple{j,r,s} \in J^\ropen_\Clients$ for some color $s$, then $r^{\rarea}_i = r$. 
\end{lemma}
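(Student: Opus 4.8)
The plan is to show the two inequalities $r^\rarea_i \le r$ and $r^\rarea_i \ge r$ separately, using that $\tuple{j,r,s}$ is open (hence enabled), that $i \in \X(j,r)$, and the characterisations of $\X$, $\Y$, and areas from Section~\ref{sec:2}.

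First I would prove $r^\rarea_i \le r$. Since $i \in \X(j,r)$, by definition of $\X(j,r)$ there is an area $A(j',r)$ with $\dist(j,j') \le c_X \cdot 5^r$ and $i \in A(j',r)$. Let $\tuple{j',r,s'}$ be the corresponding triplet. I claim $\tuple{j',r,s'}$ is enabled: by Lemma~\ref{lem: y-cont-ball}, $\Y(j',r) \subseteq B(j', c_4 \cdot 5^r)$, and since $\dist(j,j') \le c_X \cdot 5^r < c_Y \cdot 5^r$, the facility $j$ lies in $\Y(j',r)$ (indeed $j \in A(j,r)$ and $\dist(j',j)\le c_Y\cdot 5^r$, so $A(j,r)$ is one of the areas composing $\Y(j',r)$); hence $j \in \Y(j',r) \cap \Facilities^\ropen_{r,s}(\Clients)$, so taking $\tuple{r',s'} = \tuple{r,s} \le_\rlex \tuple{r,s}$ in the definition of $\Facilities^\renbl_{r,s'}$ witnesses that $\tuple{j',r,s'}$ is enabled. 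Since $i \in A(j',r)$ and $\tuple{j',r,s'}$ is enabled, by the definition of $r^\rarea_i$ as a minimum we get $r^\rarea_i \le r$.

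Next I would prove $r^\rarea_i \ge r$. Suppose toward contradiction that $r^\rarea_i < r$, i.e., there is an enabled triplet $\tuple{j_i^\rarea, r_i^\rarea, s_i^\rarea}$ with $r_i^\rarea < r$ and $i \in A(j_i^\rarea, r_i^\rarea)$. Being enabled means there is an open triplet $\tuple{j'',r'',s''}$ with $\tuple{r'',s''} \le_\rlex \tuple{r_i^\rarea, s_i^\rarea}$ and $j'' \in \Y(j_i^\rarea, r_i^\rarea)$; in particular $r'' \le r_i^\rarea < r$. I want to derive that this open triplet of strictly smaller radius is ``near'' $j$, contradicting that $\tuple{j,r,s}$ is open (which requires $\Y(j,r) \cap \Facilities^\ropen_{r'',s''}(\Clients) = \varnothing$ for all $\tuple{r'',s''} <_\rlex \tuple{r,s}$). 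To get the proximity, I would trace distances: $i \in A(j_i^\rarea, r_i^\rarea) \subseteq B(j_i^\rarea, c_2 \cdot 5^{r_i^\rarea})$ by Lemma~\ref{lemma:area-contained-in-big-ball_2}; $j'' \in \Y(j_i^\rarea, r_i^\rarea) \subseteq B(j_i^\rarea, c_4 \cdot 5^{r_i^\rarea})$ by Lemma~\ref{lem: y-cont-ball}; and $i \in \X(j,r) \subseteq B(j, c_3 \cdot 5^r)$ by Lemma~\ref{lem: x-cont-ball}. Combining, $\dist(j,j'') \le \dist(j,i) + \dist(i,j_i^\rarea) + \dist(j_i^\rarea, j'') \le c_3 \cdot 5^r + (c_2 + c_4)\cdot 5^{r_i^\rarea} \le (c_3 + c_2 + c_4)\cdot 5^r$. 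Now $j''$ sits in some area $A(j'',r''') $ at its own level; pushing this area (or rather $j''$ itself) up the tree via Lemma~\ref{lem: y-of-ancestor} / Lemma~\ref{lemma:area-contained-in-big-ball_2} to level $r$, one obtains a triplet $\tuple{\hat j, r}$ that is an ancestor-type neighbour with $\dist(j, \hat j)$ bounded by a constant times $5^r$; choosing the constants so that this bound is at most $c_Y \cdot 5^r$ (this is exactly what the slack in $c_Y = 2c_X + 3c_2$ and $c_4 = c_Y + c_2$ buys us) shows the open facility $j''$ lies in $\Y(j,r)$. Since $\tuple{r'',s''} <_\rlex \tuple{r,s}$ (as $r'' < r$), this contradicts openness of $\tuple{j,r,s}$.

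The main obstacle I anticipate is the second direction: carefully propagating the small-radius open triplet $\tuple{j'',r'',s''}$ up to level $r$ and verifying that its facility genuinely lands inside $\Y(j,r)$ rather than merely inside some ball of comparable radius. This is where the precise constants matter, and where one must use that $\Y$ is a union of \emph{areas} (not a ball) together with Lemma~\ref{lem: y-of-ancestor} to climb the tree; the distance bookkeeping must stay within $c_Y \cdot 5^r$ at every step. The first direction and the reduction to a contradiction are routine once the area/ball containments of Section~\ref{sec:2} are invoked.
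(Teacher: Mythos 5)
Your second direction ($r^\rarea_i \ge r$) is essentially the paper's own proof: assume $r^\rarea_i < r$, extract from the enabled triplet $\tuple{j^\rarea_i,r^\rarea_i,s^\rarea_i}$ an open triplet $\tuple{j'',r'',s''}$ with $j''\in\Y(j^\rarea_i,r^\rarea_i)$, climb to the logradius-$r$ ancestor $\tuple{j''',r}$ of $\tuple{j'',r''}$, and conclude $j''\in A(j''',r)\subseteq\Y(j,r)$, contradicting openness of $\tuple{j,r,s}$. One caveat: your distance bookkeeping as written does not fit inside $c_Y$. You relax $\dist(j,j'')\le c_3\cdot 5^r+(c_2+c_4)\cdot 5^{r^\rarea_i}$ to $(c_2+c_3+c_4)\cdot 5^r=426\cdot 5^r$, and after adding the climb cost $\tfrac{5}{4}c_1\cdot 5^r=25\cdot 5^r$ you are above $c_Y\cdot 5^r=249\cdot 5^r$. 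You must keep the factor $5^{r^\rarea_i}\le 5^{r-1}$ on the $c_2$ and $c_4$ terms, as the paper does: $\dist(j,j''')\le c_3\cdot 5^r+(c_2+c_4)\cdot 5^{r-1}+\tfrac{5}{4}c_1\cdot 5^r\le 2c_3\cdot 5^r\le c_Y\cdot 5^r$. This is exactly the slack you flagged, so the fix is routine, but as stated the chain does not close.

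Your first direction ($r^\rarea_i\le r$) has a genuine gap. To certify that the triplet $\tuple{j',r,s'}$ with $i\in A(j',r)$ is enabled, the definition of $\Facilities_{r,s'}^\renbl(\Clients)$ demands an open triplet $\tuple{j'',r'',s''}$ with $\tuple{r'',s''}\le_\rlex\tuple{r,s'}$; the lexicographic comparison is against the color $s'$ of the triplet being enabled, not against $s$. Your witness is $\tuple{j,r,s}$ itself, but since $\dist(j,j')\le c_X\cdot 5^r\le c_4\cdot 5^r$ the coloring forces $s\neq s'$, and when $s'<s$ the witness fails; nothing in the hypotheses forces $\tuple{j',r,s'}$ to be abundant or to have any other qualifying open neighbor, so enabledness of $\tuple{j',r,s'}$ does not follow. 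Note that the paper does not prove this direction either: its proof only rules out $r^\rarea_i<r$, and only the inequality $r^\rarea_i\ge r$ is used downstream (in Lemma~\ref{lm:facilities_costs_still_dont_matter}). So you should not expect your witness argument to be repairable by a local fix; either restrict yourself to the $\ge$ direction, which is what the statement is used for, or supply a genuinely different argument for $\le$ that handles the case $s'<s$.
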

Assume we open each facility in $\Solution_{\Clients}$ and each client is assigned to an open facility as described above or an even closer one, if one exists.
 As a consequence of the above lemma and the definition of open facilities we can now bound the total cost of the solution by $O(\sum_{i \in \Clients} 5^{r^\rarea_i})$.
\begin{lemma}\label{lm:facilities_costs_still_dont_matter} It holds that $
\sum_{i \in \Clients} \dist(i, j_i) + \sum_{j \in \Solution_{\Clients}} f_j
  \leq \sum_{i \in \Clients} (c_2 + c_3 + c_4 + 1) \cdot 5^{r^\rarea_i}$. 
\end{lemma}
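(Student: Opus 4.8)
The plan is to split the left-hand side into the facility-opening cost and the client-connection cost and bound each by $O\!\bigl(\sum_{i \in \Clients} 5^{r^\rarea_i}\bigr)$. The connection cost is immediate: by Observation~\ref{obs:again_nearby_open_facility}, every client $i$ is within $(c_2 + c_3 + c_4) \cdot 5^{r^\rarea_i}$ of an open facility, and since $j_i$ is chosen to be an open facility at least as close as $j_i^\ropen$, we have $\dist(i, j_i) \le (c_2 + c_3 + c_4) \cdot 5^{r^\rarea_i}$. Summing over $i \in \Clients$ already gives the $(c_2 + c_3 + c_4)$ part of the constant, so it remains to charge $\sum_{j \in \Solution_\Clients} f_j$ to $\sum_{i \in \Clients} 5^{r^\rarea_i}$, contributing the extra $+1$.

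For the facility cost, I would argue per open facility. Fix $j \in \Solution_\Clients$; by definition of $\Solution_\Clients$ there is an open triplet $\tuple{j',r,s} \in J^\ropen_\Clients$ with $j = j^*_{\tuple{j',r}}$. The abundance condition for this triplet gives $5^r \cdot \abs{\Clients \cap \X(j',r)} \ge f^*_{\tuple{j',r}} = f_{j^*_{\tuple{j',r}}} = f_j$, i.e. $f_j \le 5^r \cdot \abs{\Clients \cap \X(j',r)}$. Now for each client $i \in \Clients \cap \X(j',r)$, Lemma~\ref{lem:radiusMatch2} tells us that $r^\rarea_i = r$, so $5^r = 5^{r^\rarea_i}$ and hence $f_j \le \sum_{i \in \Clients \cap \X(j',r)} 5^{r^\rarea_i}$. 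Summing this inequality over all $j \in \Solution_\Clients$ yields
\[
\sum_{j \in \Solution_\Clients} f_j \;\le\; \sum_{j \in \Solution_\Clients} \ \sum_{i \in \Clients \cap \X(j',r)} 5^{r^\rarea_i},
\]
where $\tuple{j',r,s}$ is the open triplet associated with $j$.

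The main obstacle — the step that makes the double sum collapse to a single sum over clients with no overcounting — is ensuring that the sets $\Clients \cap \X(j',r)$ appearing on the right are pairwise disjoint, so that each term $5^{r^\rarea_i}$ is counted at most once. Here is where Lemma~\ref{lem:at_most_one_ball} is exactly what is needed: each client $i \in \Clients$ belongs to at most one $\X(j',r)$ with $\tuple{j',r,s} \in J^\ropen_\Clients$. One subtlety to check is that distinct open facilities $j \in \Solution_\Clients$ may come from distinct open triplets with the same designated facility, but even if two open triplets share a designated facility, $j$ is still a single facility counted once in $\Solution_\Clients$, and we only need one valid lower bound $f_j \le \sum_{i \in \Clients \cap \X(j',r)} 5^{r^\rarea_i}$ per $j$, so we simply pick one such witnessing open triplet for each $j$; disjointness of the chosen $\X$'s still follows from Lemma~\ref{lem:at_most_one_ball}. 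With disjointness in hand, the double sum is bounded by $\sum_{i \in \Clients} 5^{r^\rarea_i}$, and combining with the connection-cost bound gives the claimed total of $\sum_{i \in \Clients}(c_2 + c_3 + c_4 + 1)\cdot 5^{r^\rarea_i}$.
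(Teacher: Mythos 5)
Your proposal is correct and follows essentially the same route as the paper: the connection cost is handled via Observation~\ref{obs:again_nearby_open_facility}, and the facility cost via the abundance condition combined with Lemma~\ref{lem:radiusMatch2} (to replace $5^r$ by $5^{r^\rarea_i}$) and Lemma~\ref{lem:at_most_one_ball} (to avoid double counting). The only cosmetic difference is that the paper sums over all open triplets (over-counting designated facilities shared by several triplets, which is harmless as an upper bound), whereas you pick one witnessing triplet per open facility; both work.
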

Now we are ready to prove the bound on the approximation ratio.
\begin{theorem} \label{thm:approx}
For any subset $\Clients \subseteq V$ of clients, assign each client $i \in \Clients$ to the facility $j_i^{\ropen}$. Then the cost of this solution is $O(1) \cdot \OPT$, where $\OPT$ is the optimal solution for $\Clients$.
\end{theorem}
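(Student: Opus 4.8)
The plan is to reduce the statement to a single inequality about the ``payment radii'' $5^{r^\rarea_i}$. By Lemma~\ref{lm:facilities_costs_still_dont_matter}, the cost of the solution in question is at most $(c_2+c_3+c_4+1)\sum_{i\in\Clients}5^{r^\rarea_i}$, so it suffices to prove $\sum_{i\in\Clients}5^{r^\rarea_i}=O(\OPT)$. I fix an optimal solution: let $F^\ropt\subseteq\Facilities$ be its open facilities, $\phi(i)\in F^\ropt$ the facility serving client $i$, and $d^\ropt_i=\dist(i,\phi(i))$, so $\OPT=\sum_{j\in F^\ropt}f_j+\sum_{i\in\Clients}d^\ropt_i$; we may assume each facility in $F^\ropt$ serves a client. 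For $j^\ropt\in F^\ropt$ and a logradius $r$ set $m_r(j^\ropt)=\abs{\set{i\in\Clients}{\phi(i)=j^\ropt,\ d^\ropt_i\le 5^r}}$, and let $r^\circ(j^\ropt)$ be the smallest logradius with $5^r m_r(j^\ropt)\ge f_{j^\ropt}$ (well defined, since $5^{\maxR}\ge\fmax\ge f_{j^\ropt}$ and $m_{\maxR}(j^\ropt)\ge1$). As $r\mapsto 5^r m_r(j^\ropt)$ is non-decreasing, $5^r m_r(j^\ropt)\ge f_{j^\ropt}$ holds for every $r\ge r^\circ(j^\ropt)$.

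The heart of the argument is the claim that for every client $i$, with $j^\ropt=\phi(i)$ and $\hat r_i=\max\bigl(r^\circ(j^\ropt),\cround{d^\ropt_i}\bigr)\in[\minR,\maxR]$, we have $r^\rarea_i\le\hat r_i$. Since $d^\ropt_i\le 5^{\hat r_i}$, we have $i\in B(j^\ropt,5^{\hat r_i})$, so Lemma~\ref{lemma:covering-balls-by-areas_2} produces a pair $\tuple{g,\hat r_i}\in\Pairs$ with $i\in A(g,\hat r_i)$ and $\dist(g,j^\ropt)\le(c_2+1)5^{\hat r_i}$; as same-logradius areas are disjoint, this is exactly the pair whose area at logradius $\hat r_i$ contains $i$. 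I would then verify that $\tuple{g,\hat r_i}$ satisfies the abundance condition~\eqref{eq:abundance_condition_2}: granting that, Lemma~\ref{obs:abundance_implies_enabled_2} makes $\tuple{g,\hat r_i}$ enabled, and since $i\in A(g,\hat r_i)$ the definition of $r^\rarea_i$ gives $r^\rarea_i\le\hat r_i$. For the verification, note that every point $p\in B(j^\ropt,5^{\hat r_i})$ lies in some area $A(j',\hat r_i)$ with $\dist(j^\ropt,j')\le(c_2+1)5^{\hat r_i}$ (Lemma~\ref{lemma:covering-balls-by-areas_2} again), hence $\dist(g,j')\le(2c_2+2)5^{\hat r_i}=c_X\cdot5^{\hat r_i}$ and so $A(j',\hat r_i)\subseteq\X(g,\hat r_i)$; thus $B(j^\ropt,5^{\hat r_i})\subseteq\X(g,\hat r_i)$. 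Taking $p=j^\ropt$ gives $j^\ropt\in\X(g,\hat r_i)$, so $f^\ropt_{\tuple{g,\hat r_i}}\le f_{j^\ropt}$, and since every client $i'$ with $\phi(i')=j^\ropt$ and $d^\ropt_{i'}\le 5^{\hat r_i}$ lies in $B(j^\ropt,5^{\hat r_i})\subseteq\X(g,\hat r_i)$, we get $\abs{\Clients\cap\X(g,\hat r_i)}\ge m_{\hat r_i}(j^\ropt)\ge f_{j^\ropt}/5^{\hat r_i}$, using $\hat r_i\ge r^\circ(j^\ropt)$. Multiplying, $5^{\hat r_i}\abs{\Clients\cap\X(g,\hat r_i)}\ge f_{j^\ropt}\ge f^\ropt_{\tuple{g,\hat r_i}}$, which is~\eqref{eq:abundance_condition_2}.

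It remains to sum. Split $\Clients$ into \emph{far} clients, with $d^\ropt_i>5^{r^\circ(\phi(i))-1}$, and \emph{close} clients, with $d^\ropt_i\le 5^{r^\circ(\phi(i))-1}$. For a far client $\hat r_i=\cround{d^\ropt_i}$, hence $5^{r^\rarea_i}\le 5^{\cround{d^\ropt_i}}<5d^\ropt_i$, so $\sum_{\text{far }i}5^{r^\rarea_i}<5\sum_{i\in\Clients}d^\ropt_i\le 5\OPT$. For the close clients, group by $j^\ropt=\phi(i)$: each close $i$ with $\phi(i)=j^\ropt$ has $d^\ropt_i\le 5^{r^\circ(j^\ropt)-1}$ and $5^{r^\rarea_i}\le 5^{\hat r_i}=5^{r^\circ(j^\ropt)}$, so there are at most $m_{r^\circ(j^\ropt)-1}(j^\ropt)$ of them and their total contribution is at most $5\,m_{r^\circ(j^\ropt)-1}(j^\ropt)\cdot 5^{r^\circ(j^\ropt)-1}<5f_{j^\ropt}$ by the minimality of $r^\circ(j^\ropt)$; the boundary case $r^\circ(j^\ropt)=\minR$ (where $r^\circ(j^\ropt)-1$ is not a logradius) is handled directly, since those clients contribute only $O(\abs{\Clients}\cdot 5^{\minR})=O(\fmin)$ in total and $0$-cost facilities are always open. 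Summing over $F^\ropt$, $\sum_{\text{close }i}5^{r^\rarea_i}<5\sum_{j^\ropt\in F^\ropt}f_{j^\ropt}\le 5\OPT$. Combining, $\sum_{i\in\Clients}5^{r^\rarea_i}=O(\OPT)$, and the theorem follows via Lemma~\ref{lm:facilities_costs_still_dont_matter}.

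I expect the main obstacle to be the abundance verification inside the central claim. Two things have to be gotten exactly right: first, the pair used must be the one whose \emph{area} contains $i$ (not merely a pair near $i$), for otherwise $r^\rarea_i$ is not controlled; second, the distance estimates linking $j^\ropt$, $g$, the covering areas, and the set $\X(g,\hat r_i)$ — and the comparison $f^\ropt_{\tuple{g,\hat r_i}}\le f_{j^\ropt}$ for the designated cost — all have to close up, which relies on the precise choice of constants, in particular on $c_X = 2c_2+2$ (the bound $\dist(g,j')\le(2c_2+2)5^{\hat r_i}$ is tight against it). Everything after that is routine accounting against the optimal solution.
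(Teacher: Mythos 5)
Your proof is correct and follows essentially the same route as the paper: reduce to bounding $\sum_i 5^{r^\rarea_i}$ via Lemma~\ref{lm:facilities_costs_still_dont_matter}, define for each optimal facility the threshold logradius at which its cost is covered by nearby clients (your $r^\circ$ is the paper's $r$ in~\eqref{eq:enough_clients_for_j_2}), use Lemma~\ref{lemma:covering-balls-by-areas_2} plus $B(j^\ropt,5^{\hat r_i})\subseteq\X(g,\hat r_i)$ and $f^\ropt_{\tuple{g,\hat r_i}}\le f_{j^\ropt}$ to get an abundant, hence enabled, triplet whose area contains $i$, and then charge close clients to $f_{j^\ropt}$ and far clients to their optimal connection cost. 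The only differences are cosmetic: the unified exponent $\hat r_i=\max(r^\circ,\cround{d^\ropt_i})$ in place of the paper's lo/hi case split, and your explicit $r^\circ=\minR$ boundary case, which the paper instead absorbs through its choice of $\minR$ (and which is fine under the paper's standing assumption that zero-cost facilities are always open, so $\OPT\ge\fmin$ whenever $\Clients\neq\emptyset$).
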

\begin{proof}
Denote by $\Solution^\ropt$ an arbitrary optimal solution. For each $i \in \Clients$ define $
j^\ropt_i $ to be the facility $i$ is connected to. Moreover, for each $j \in \Facilities$ let  $\Clients^\ropt_j$ be the set of clients connected to $j$. Formally,
$
j^\ropt_i = \min\set{j \in \Solution^\ropt}{
\dist(i,j) = \dist(i,\Solution^\ropt)},~
\Clients^\ropt_j = \set{i \in \Clients}{j = j^\ropt_i}.
$
\noindent Consider some $j \in \Solution^\ropt$ and let $r \in [\minR, \maxR]$ be the logradius such that  

\begin{equation}\label{eq:enough_clients_for_j_2}
5^{r-1}\cdot \abs{\Clients^\ropt_j \cap B(j, 5^{r-1})} <
  f_j 
  \leq 5^{r}\cdot \abs{\Clients^\ropt_j \cap B(j, 5^{r})}.
\end{equation}
Note that $r$ is well-defined as $5^{\maxR} \cdot \abs{\Clients^\ropt_j \cap B(j, 5^\maxR)} \ge \fmax \geq f_j$ by the definition of $\maxR$
and $f_j \ge \fmin \ge 5^{\minR + \log_5 n - 1} \ge 5^{\minR-1}\cdot \abs{\Clients^\ropt_j \cap B(j, 5^{\minR-1})}$ by the definition of $\minR$. 

To complete the proof we split $\Clients^\ropt_j$  into two sets,
$\Clients^{\mathrm{lo}}_j$ and $\Clients^{\mathrm{hi}}_j$, according to whether $i \in B(j,5^{r-1})$
or not, i.e., $\dist(i,j) \leq 5^{r-1}$ or $\dist(i,j) > 5^{r-1}$, respectively, and show that
\begin{equation} \label{eq:a} \tag{A}
\sum_{\mathclap{i \in \Clients^{\mathrm{lo}}_j}} 
       5^{r^\rarea_i} 
  <5 \cdot f_j, \text{ and}
\end{equation}
\begin{equation} \label{eq:b}  \tag{B}
\sum_{i \in \Clients^{\mathrm{hi}}_j} 5^{r^\rarea_i} \leq
   \sum_{i \in \Clients^{\mathrm{hi}}_j} 5 \cdot \dist(i,j).
\end{equation}
Before showing the above inequalities we first argue that they prove the bound on the approximation ratio. Note that every client belongs to $\Clients^{\ropt}_j$ for some $j \in \Solution^{\ropt}$. Thus, 
\[ \sum_{i \in \Clients} 5^{r^\rarea_i} \leq \sum_{j \in  \Solution^\ropt} 5  \cdot f_j + \sum_{i \in \Clients} 5 \cdot \dist(i,j) \le 5 \cdot \OPT. \]

Using Lemma~\ref{lm:facilities_costs_still_dont_matter} we get that
\[
\sum_{i \in \Clients} \dist(i,j_i^{\ropen}) + \sum_{j \in \Solution_\Clients} f_{j}
  \leq \sum_{i \in \Clients} (c_2 + c_3 + c_4 + 1) \cdot 5^{r^\rarea_i} 
  \leq 5(c_2 + c_3 + c_4 + 1) \cdot \OPT.
\]
We first show~\eqref{eq:a}. Consider $i \in \Clients^{\mathrm{lo}}_j$, or equivalently, $i \in B(j,5^{r-1})$. We claim that $r^{\rarea}_i \leq r$. Indeed, if $\tuple{j,r,s} \in \Pairs$ for some color $s$, then $\X(j,r) \supseteq A(j,r) \supseteq B(j,5^{r})$ along with \eqref{eq:enough_clients_for_j_2} give
\[ 5^{r} \cdot \abs{\Clients \cap \X(j,r)} \geq 5^{r} \cdot \abs{\Clients_j^{\ropt} \cap B(j,5^{r})} \geq f_j \geq f^{\ropt}_{\tuple{j,r}},
\]
which in turn implies that $\tuple{j,r,s}$ satisfies the abundance condition and so it is enabled (Observation~\ref{obs:abundance_implies_enabled_2}). By definition of $r_i^{\rarea}$, we get $r_i^{\rarea} \leq r$. 
If $\tuple{j,r,s} \not \in \Pairs$ for any $s$, then by Lemma~\ref{lemma:covering-balls-by-areas_2}, there exists a triplet $\tuple{j',r,s}$ such that $i \in A(j',r)$.
Because $\dist(j',j) \leq \dist(j',i) + \dist(i,j) \leq (c_2+1)\cdot 5^{r}$ we have that $B(j, 5^r) \subseteq \X(j',r)$.
This along with~\eqref{eq:enough_clients_for_j_2} imply that
\[ 
5^{r} \cdot \abs{\Clients \cap \X(j',r)} \geq 5^{r} \cdot \abs{\Clients_j^{\ropt} \cap B(j,5^{r})} \geq f_j \geq f^{\ropt}_{\tuple{j',r}},
\]
where the last inequality follows by definition of $f^{\ropt}_{\tuple{j',r}}$. Similarly it follows that $\tuple{j',r,s}$ is enabled and $r_i^{\rarea} \leq r$.
Recalling that $i \in B(j,5^{r-1})$ we finally arrive at
\[
\sum_{\mathclap{i \in \Clients^{\mathrm{lo}}_j}} 
       5^{r^\rarea_i} 
  \leq \sum_{\mathclap{i \in \Clients^{\mathrm{lo}}_j}} 
       5^{r} 
  \leq 5 \cdot 5^{r-1} \cdot \abs{\Clients^\ropt_j \cap B(j, 5^{r-1})} 
  \leq 5 \cdot f_j.
\]
We next show~\eqref{eq:b}. First, observe that for any ball $B(j,5^{r''})$, $r'' \geq r$ with $i \in B(j, 5^{r''})$, one can prove similarly to the above that there exists an \emph{enabled} triplet $\tuple{j'',r'',s}$ such that $i \in A(j'', r'')$. We have that $\dist(j'',j) \leq \dist(j'',i) + \dist(i,j) \leq (c_2+1)\cdot 5^{r''}$, thus $X(j'',r'') \supseteq B(j,5^{r''}) \supseteq B(j,5^r)$. This, together with~\eqref{eq:enough_clients_for_j_2} and $f^{\ropt}_{\tuple{j'',r''}} \leq f_j$ give the claim.

Now, consider $i \in \Clients^{\mathrm{hi}}_j$. Since $\dist(i,j) > 5^{r-1}$, we get that $\cround{\dist(i,j)} \geq r$ and $i \in B(j, 5^{\cround{\dist(i,j)}})$. By the discussion above, there exists an enabled triplet $\tuple{j'',\cround{\dist(i,j)},s}$  such that $\X(j'',\cround{\dist(i,j)}) \supseteq B(j, 5^{\cround{\dist(i,j)}})$, implying that $r_i^{\rarea} \leq \cround{\dist(i,j)}$ and so

\[
\sum_{i \in \Clients^{\mathrm{hi}}_j}5^{r^\rarea_i}
  \leq \sum_{i \in \Clients^{\mathrm{hi}}_j}5 \cdot \dist(i,j).\qedhere
\] 

\end{proof}

\section{Data Structure}
\label{sec:data_structure}
In this section we devise a data structure for the dynamic metric facility location problem that supports insertions and deletions of clients as well as returning (a) the approximate cost of the optimal solution or (b) a set of open facilities that achieves this approximate cost. We achieve this by maintaining the minimum cost solution restricted to pairs in $\Pairs$. By Theorem~\ref{thm:approx} this is a $O(1)$ approximation to the cost of the optimal solution.

From the preprocessing phase the algorithm is given the set $\Pairs$ of facility-radius-color triplets, as well as the laminar family of areas $\Areas$ with its dependency tree $\mathcal{T}$ using the following representation.
(1)  A two-dimensional array of size $(\maxR - \minR + 1)  \times (2^{5\kappa}+1)$, keeping for each logradius $r \in [\minR, \maxR]$ and color $s$ a list of all the facilities of $J_r$ that share the color $s$, and
(2) the dependency tree $\T$ in a tree data structure. Whenever we use the term \emph{subtree}, \emph{child}, or \emph{descendant} in the following we refer to the dependency tree.
(3) For each triplet $v = \tuple{j,r,s} \in \Pairs$, the list $\neighborsabove(v)$ of all triplets $\tuple{j',r',s'}$ such that (a) $\tuple{r',s'} >_\rlex \tuple{r,s}$ and (b) $j \in \Y(j',r')$. 
(4) For each triplet $v = \tuple{j,r,s} \in \Pairs$, the value $f^*_{\tuple{j,r}}$, which is the minimum opening cost among all facilities in $\X(j,r)$. 
Using the algorithm in Subsection~\ref{sec:lookup_ball} each list $\neighborsabove$  and each value $f^*_{\tuple{j,r}}$ can be computed in time $O(2^{O(\kappa)} \Delta)$. Thus, the above data structure
can be built in time $O(\abs{\Facilities} \cdot 2^{O(\kappa)} \Delta)$.

The algorithm will maintain a dynamic data structure, which can be viewed as an \emph{annotated dependency tree} that keeps for each node $v = \tuple{j,r,s}$ of $\mathcal{T}$ the following information:

\begin{enumerate}
\setlength\itemsep{-0.05cm}
\item three bits $\isopen_{\rball}(v)$, $\isenabled_{\rball}(v)$, $\isabundant_{\rball}(v)$, which indicate whether the triplet $\tuple{j,r,s}$ is open, enabled and fulfils the abundance condition, respectively,
\item the number $n_{\rarea}(v)$ of current clients that belong to the area $A(j,r)$, i.e., $n_{\rarea}(v) = \abs{\Clients \cap A(j,r)}$,
\item the number $n_\rball(v)$ of current clients that belong to $\X(j,r)$, i.e., $n_\rball(v) = \abs{\Clients \cap \X(j,r)}$,
\item the number $\openbelow(v)$ of all open triplets $\tuple{j',r',s'}$ with $\tuple{r',s'} <_{\rlex} \tuple{r,s}$ and their corresponding facilities falling within $\Y(j,r)$, i.e., 
  \[ \openbelow(v) = \abs[\Big]{\set[\big]{\tuple{j',r',s'} \in \Facilities^\ropen_\Clients}{ \tuple{r',s'} <_\rlex \tuple{r,s}, j' \in \Y(j,r)}}, \] 
\item the number $n_{\enabledbelow}(v)$ of current clients that belong to areas below that are enabled, i.e., \[ n_{\enabledbelow}(v) = \abs[\Big]{\Clients \cap \bigcup \set[\big]{\tuple{j',r',s'} \in J^\renbll_{\Clients}}{A(j',r')  \subset A(j,r)}},\]
\item the value $\cost(v) = \sum \set{5^{r^\rarea_i}}{i \in \Clients \cap A(j,r)}$ (note that with the currently open facilities the cost accrued for the clients in $A(j,r)$ is $O(\cost(v))$),
\item the value $y(v)$, which is the cost of the children of $v$, i.e., $y(v) = \sum_{u \text{ child of } v} \cost(u)$.

\end{enumerate}
We next describe the usefulness of the information we keep. Points 1-2 are self-explanatory. Point 3 provides information to test the abundance condition, and thus update the bits in Point 1. Point 4 is useful when deciding whether we should open an area or not. Points 5-7 allow us to efficiently update the cost of the solution.

\subsection{Finding all balls containing a given point}\label{sec:lookup_ball}

In this section we describe a crucial subroutine that we use repeatedly when handling updates. It is given the hierarchy data structure for $\mathcal{T}$, an arbitrary point $p \in V$ and some constant $c^*$ such that $c^* \geq (5/4) c_1$ and returns all balls $\tuple{j,r} \in \Pairs$ that are at distance at most $c^* \cdot 5^{r}$ from $p$, i.e., $p \in B(j, c^* \cdot 5^{r})$. For $r \in [\maxR, \minR]$, let $S(r)$ denote the set of such balls. 

The algorithm $\textsc{FindBalls}(p,c^*)$ performs a top-down traversal of the tree starting at its root $\tuple{j, \maxR}$. Note that by the definition of $\maxR$, all points belong to $B(j,c^* \cdot 5^{\maxR})$ and $S(\maxR) = \{\tuple{j, \maxR}\}$.  For computing $S(r)$, where $r = (\maxR-1)$, it determines all children of the root to find the pairs $\tuple{j',r}$ such that the distance of $j'$ and $p$ is at most $c^* \cdot 5^{r}$. This step is repeated to compute the set $S(\ell)$ for every level of the hierarchy, until we reach the bottom-most level. Finally, we let $S \coloneqq \bigcup \{S(r) : r \in [\minR, \maxR]\}$. A detailed description of this procedure can be found in Appendix~\ref{sec: findarea}.

We next show that the algorithm correctly computes the set $S(r)$, for every log-radius $r$. Define $\children(S(r))= \bigcup_{\tuple{j,r} \in S(r)} \children(j,r)$.

\begin{lemma} \label{lem: look-up-balls-correct}
 For each logradius $r \in [\maxR, \minR)$ assume $S(r)$ is computed correctly. Then it holds that $S(r-1) \subseteq \children(S(r))$.
\end{lemma}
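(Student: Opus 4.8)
The plan is to show the contrapositive-free statement directly: take any ball $\tuple{j,r-1} \in S(r-1)$ and produce a ball in $S(r)$ that is its parent in $\T$. Let $\tuple{j',r} = \parent(j,r-1)$; by construction of $\T$ this is the pair $\tuple{j',r} \in \Pairs$ with $j' \in \Facilities_r$ closest to $j$, and by the Separating/Covering properties $\dist(j,j') \le c_1 \cdot 5^{r}$ (this is exactly the first conclusion of Lemma~\ref{lemma:inclusion-of-big-balls_2}). It then suffices to check that $\tuple{j',r} \in S(r)$, i.e., that $p \in B(j', c^* \cdot 5^{r})$.

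First I would invoke the hypothesis $\tuple{j,r-1} \in S(r-1)$, which unwinds to $\dist(p,j) \le c^* \cdot 5^{r-1}$. Combining with $\dist(j,j') \le c_1 \cdot 5^{r}$ via the triangle inequality gives
\[
  \dist(p,j') \le \dist(p,j) + \dist(j,j') \le c^* \cdot 5^{r-1} + c_1 \cdot 5^{r}
  = \left( \tfrac{c^*}{5} + c_1 \right) \cdot 5^{r}.
\]
Now I use the hypothesis $c^* \ge (5/4) c_1$, which rearranges to $c_1 \le (4/5) c^*$, so that $\tfrac{c^*}{5} + c_1 \le \tfrac{c^*}{5} + \tfrac{4c^*}{5} = c^*$. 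Hence $\dist(p,j') \le c^* \cdot 5^{r}$, meaning $p \in B(j', c^* \cdot 5^{r})$. Since $\tuple{j',r} \in \Pairs$ and $S(r)$ is, by assumption, computed correctly (so it contains \emph{all} such pairs at level $r$), we conclude $\tuple{j',r} \in S(r)$, and therefore $\tuple{j,r-1} \in \children(\tuple{j',r}) \subseteq \children(S(r))$. As $\tuple{j,r-1}$ was arbitrary, $S(r-1) \subseteq \children(S(r))$.

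The only subtlety — and the reason the bound $c^* \ge (5/4)c_1$ appears in the statement — is making sure the constant slack absorbed by going up one level is exactly enough: the parent pointer can move the center by $c_1 \cdot 5^{r}$, which is a full factor of $5$ larger than the radius at level $r-1$, so we genuinely need $c^*$ to dominate $c_1$ by the right margin. I expect no real obstacle beyond tracking this inequality carefully; the structural content (parent exists in $\Pairs$ at the next level up, and $S(r)$ being complete by the inductive assumption) is immediate from the construction of $\T$ and the definition of the sets $S(\cdot)$.
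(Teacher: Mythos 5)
Your proof is correct and is essentially the paper's argument: the paper proves the same containment by contradiction, citing the ball-nesting statement of Lemma~\ref{lemma:inclusion-of-big-balls_2} ($\dist(j,j')\le c_1\cdot 5^{r}$ and $B(j,c^*\cdot 5^{r-1})\subseteq B(j',c^*\cdot 5^{r})$), whereas you argue directly and inline the triangle-inequality arithmetic with $c^*\ge(5/4)c_1$ that underlies that lemma. The substance is identical, so no further comment is needed.
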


\begin{proof} Assume towards contradiction that there exists $\tuple{j,r-1} \in S(r-1)$ such that $\dist(j,p) \leq c^* \cdot 5^{r-1}$ but $\tuple{j,r-1} \not \in \children(S(r))$. 
Let $\tuple{j',r}$ be the parent of $\tuple{j,r-1}$ in $\mathcal{T}$. By Lemma~\ref{lemma:inclusion-of-big-balls_2}, $\dist(j,j') \leq c_1 \cdot 5^{r}$. 

Now, since $S(r)$ is correct, it follows that $\tuple{j',r} \not \in S(r)$, and thus $\dist(p,j') > c^* \cdot 5^{r}$. However, by Lemma~\ref{lemma:inclusion-of-big-balls_2} we get that $p \in B(j,c^* \cdot 5^{r-1}) \subseteq B(j',c^* \cdot 5^{r})$ and thus $\dist(p,j') \leq c^* \cdot 5^{r}$, which is a contradiction. Thus the lemma follows.
\end{proof}
Since $c_X \geq (5/4)c_1$, let $c^* = c_X$. We now argue about the running time of \textsc{FindBalls}$(p,c^*)$. Note that for each logradius $r$, if $p \in B(j,c^* \cdot 5^{r})$, then $\dist(p,j) \leq c^* \cdot 5^{r}$. By Lemma~\ref{lem:doubling_dimension_2} and the fact that $c^* \leq 4 c_1$ it follows that $\abs{S(r)} \leq 2^{3\kappa}$. Additionally, by Lemma~\ref{lemma:degree_2} each pair in $S(r)$ has at most $2^{4\kappa}$ children in $\mathcal{T}$ and there are at most $\Delta$ different radii. Thus the running time of the algorithm and the size of the output set $S$ are both bounded by $2^{7 \kappa} \cdot \Delta$. 

\begin{lemma} \label{lem: findBalls}
The running time of $\textsc{FindBalls}(p,c^*)$ and the size of the output set $S$ are both bounded by $2^{7 \kappa} \cdot \Delta$.
\end{lemma}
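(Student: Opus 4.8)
The plan is to bound separately the running time of the top-down traversal and the number of pairs it returns, level by level. First I would recall that $\textsc{FindBalls}(p,c^*)$ computes $S(\maxR), S(\maxR-1), \dots, S(\minR)$ in order, where $S(r)$ is obtained by scanning the children in $\mathcal{T}$ of all pairs in $S(r+1)$ and keeping those whose center is within distance $c^* \cdot 5^r$ of $p$. By Lemma~\ref{lem: look-up-balls-correct} we have $S(r) \subseteq \children(S(r+1))$, so this scan is exhaustive, and correctness of the output set $S = \bigcup_r S(r)$ follows by induction on $r$ starting from the base case $S(\maxR) = \{\tuple{j,\maxR}\}$.

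Next I would bound $\abs{S(r)}$ for each logradius $r$. Every pair $\tuple{j,r} \in S(r)$ satisfies $p \in B(j, c^* \cdot 5^r)$, i.e. $\dist(p,j) \le c^* \cdot 5^r$. Since we have fixed $c^* = c_X = 72 < 4 c_1$ (as $c_1 = 20$), we can apply Lemma~\ref{lem:doubling_dimension_2} with $2^\alpha c_1 = 4 c_1$, i.e. $\alpha = 2$, to conclude that $\abs{S(r)} \le 2^{(\alpha+1)\kappa} = 2^{3\kappa}$. (More precisely, $S(r) \subseteq \Pairs(p,r)$ in the notation of that lemma after enlarging the threshold from $c^* \cdot 5^r$ to $4 c_1 \cdot 5^r$, which only adds pairs, so the bound still holds.) Then, by Lemma~\ref{lemma:degree_2}, each of these at most $2^{3\kappa}$ pairs has at most $2^{4\kappa}$ children in $\mathcal{T}$, so computing $S(r-1)$ from $S(r)$ examines at most $2^{3\kappa} \cdot 2^{4\kappa} = 2^{7\kappa}$ candidate pairs, and for each the distance check is $O(1)$.

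Finally I would sum over the at most $\numradii$ logradii: the total work is $O(2^{7\kappa} \cdot \numradii)$, and the output size is $\abs{S} = \sum_r \abs{S(r)} \le 2^{3\kappa} \cdot \numradii \le 2^{7\kappa} \cdot \numradii$, which gives the stated bound. I do not anticipate a real obstacle here; the only point requiring mild care is making sure the constant $c^* = c_X$ indeed satisfies both $c^* \ge (5/4)c_1$ (needed for Lemma~\ref{lemma:inclusion-of-big-balls_2} and hence for Lemma~\ref{lem: look-up-balls-correct} to apply) and $c^* \le 4c_1$ (needed to invoke Lemma~\ref{lem:doubling_dimension_2} with $\alpha = 2$), and checking that the per-level overhead of reading the tree nodes and performing the distance computations is genuinely $O(1)$ per node in the assumed tree representation so nothing hidden inflates the bound.
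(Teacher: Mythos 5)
Your proposal is correct and follows essentially the same route as the paper: set $c^* = c_X \le 4c_1$, apply Lemma~\ref{lem:doubling_dimension_2} to get $\abs{S(r)} \le 2^{3\kappa}$, use Lemma~\ref{lemma:degree_2} to bound the children scanned per level by $2^{3\kappa}\cdot 2^{4\kappa} = 2^{7\kappa}$, and sum over the at most $\Delta$ logradii. Your added checks (correctness via Lemma~\ref{lem: look-up-balls-correct} and the constant sandwiching $(5/4)c_1 \le c^* \le 4c_1$) are consistent with the paper's argument.
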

Repeatedly applying the $\textsc{FindBalls}$ subroutine and updating the tree hierarchy in a bottom-up fashion, we can show that insertions and deletions of clients can be handled in ${O}(2^{O(\kappa^2)} \cdot \Delta^3 \cdot (\kappa^2 + \log \Delta))$ time. Additionally note that under client updates, the value of $n$ will change, which in turn causes $\minR$ to either increase or decrease by one. This forces us to either add or delete a bottom-level in the hierarchy, which can be implemented in ${O}((\abs{J} + \abs{\Clients})\cdot 2^{O(\kappa^2)} \cdot \Delta^3 \cdot (\kappa^2 + \log \Delta))$ time. Since such an update is required only after $\Theta(n)$ operations, we get that the amortized time of our algorithm is still bounded by ${O}(2^{O(\kappa^2)} \cdot \Delta^3 \cdot (\kappa^2 + \log \Delta))$.  By employing a standard global rebuilding technique we achieve a worst-case update time, thus proving our main result in Theorem~\ref{thm: mainTheorem}. Details on implementing the above steps can be found in Appendices~\ref{lem: updateClients} and~\ref{sec: insertingLevels}.

\bibliography{literature}

\appendix
\section{Missing Proofs from Section~\ref{sec:2}}

\subsection{Hierarchical decomposition of $\Pairs$}

\begin{proof}[Proof of Lemma~\ref{lemma:inclusion-of-big-balls_2}]
We have $\tuple{j,r}\in \Pairs$, so $j \in \Facilities$. 
By the Covering property of $\Pairs$ there exists  $\tuple{j',r+1}\in \Pairs$ such that $\dist(j,j')\leq c_1\cdot 5^{r+1}$. Since $c^*\leq (5/4)c_1$, we have $c_1\cdot 5^{r+1}+c^*\cdot 5^r \leq c^*\cdot 5^{r+1}$ and the result holds.  
\end{proof}

\begin{proof}[Proof of Lemma~\ref{lem:doubling_dimension_2}]
By definition of doubling dimension, $B(p, 2^{\alpha}c_1\cdot 5^r)$ can be
covered by a set of at most $(2^{\kappa})^{\alpha+1}$ balls of radius $(c_1/2)\cdot 5^r $.
By the Separating property of $\Pairs$, any two pairs $\tuple{j,r}$ of $\Pairs(p,r)$ are at distance at least $c_1\cdot 5^r$ from each other, hence must belong to  different balls of the set, and so $\Pairs(p,r)$ has cardinality at most $(2^{\kappa})^{\alpha+1}$. 
\end{proof}

\begin{proof}[Proof of Lemma~\ref{lemma:degree_2}]
Children of node $\tuple{j,r}$
have logradius $r-1$, so by the Separating property of $\Pairs$ they have to be at least $c_1\cdot5^{r-1}$ apart. Let $c^* = c_2$, since $c_2 \leq (5/4)c_1$.
Then, by Lemma~\ref{lemma:inclusion-of-big-balls_2} they must be at distance at most $c_2\cdot 5^r-c_2\cdot 5^{r-1}=(4/5)c_2\cdot 5^r$ from $j$. 
Since $(4/5)c_2 \cdot 5^{r} = 7 c_1 \cdot 5^{r-1}$, by Lemma~\ref{lem:doubling_dimension_2} for $\alpha=3$ and logradius $r-1$, the size of that set is bounded by $2^{4\kappa}$.
\end{proof}

\subsection{Hierarchical decomposition of $V$ into a laminar family of areas}

\begin{proof}[Proof of Lemma~\ref{lemma:area-contained-in-big-ball_2}]
Let $p\in A(j,r)$. Either it's been added directly, in which case it belongs to $B(j,c_2\cdot 5^r)$, or it's been inherited, in which case it also belongs to it by Lemma~\ref{lemma:inclusion-of-big-balls_2}.
\end{proof}

\begin{proof}[Proof of Lemma~\ref{lemma:covering-balls-by-areas_2}]
By the Covering property of $\Pairs$ it follows that there exists $\tuple{j_r,r} \in \Pairs$ such that $\dist(j,j_r) \leq c_1 \cdot 5^{r}$. Then $\dist(p,j_r) \leq (c_1 + 1)\cdot 5^{r} < c_2 \cdot 5^{r}$ since $c_1 < c_2 - 1$ and so in the definition of areas covering $p$ we must have $r^{\ropt} \leq r$. Along the path from $\tuple{j^{\ropt},r^{\ropt}}$ to the root of $\mathcal{T}$, there exists a pair for logradius $r$, $\tuple{j',r}$. By definition of areas and Lemma~\ref{lemma:area-contained-in-big-ball_2}, we have $p \in A(j',r) \subseteq B(j',c_2\cdot 5^{r})$, so $\dist(j,j') \leq \dist(j,p) + \dist(p,j') \leq (1+c_2)\cdot 5^{r}$.
\end{proof}

\begin{proof}[Proof of Lemma~\ref{lem:areas-equal-balls_2}]
By Lemma~\ref{lemma:area-contained-in-big-ball_2} it follows that
$\bigcup_{\tuple{j,r} \in \Facilities_r}A(j,r) \subseteq \bigcup_{\tuple{j,r} \in \Facilities_r}B(j,c_2\cdot 5^r).$

On the other side the definition of areas implies that any point $p \in B(j,c_2\cdot 5^r)$ for some $\tuple{j,r} \in  \Pairs$ will be added to an area.
Thus,
$\bigcup_{\tuple{j,r} \in \Facilities_r}A(j,r) \supseteq \bigcup_{\tuple{j,r} \in \Facilities_r}B(j,c_2\cdot 5^r).$
The lemma follows.
\end{proof}

\subsection{Covering balls using areas}


\begin{proof}[Proof of Lemma~\ref{lem: x-cont-ball}]
Let $p \in \X(j,r)$. Then there exists an area $A(j',r)$ such that $p \in A(j',r)$ and $\dist(j,j') \leq c_X \cdot 5^{r}$. By Lemma~\ref{lemma:area-contained-in-big-ball_2}, $A(j',r) \subset B(j',c_2 \cdot 5^{r})$ and thus $\dist(p,j') \leq c_2 \cdot 5^{r}$. It follows that $\dist(p,j) \leq \dist(p,j') + \dist(j',j) \leq c_3 \cdot 5^{r}$, i.e., $p \in B(j,c_3\cdot 5^{r})$.
\end{proof}

\begin{proof}[Proof of Lemma~\ref{lem: y-of-ancestor}]
It is enough to prove the lemma for $r' = r+1$, that is, when $\tuple{j',r'}$ is the parent of $\tuple{j,r}$ in $\T$.
Lemma~\ref{lemma:inclusion-of-big-balls_2} implies that $\dist(j,j') \leq c_1\cdot5^{r+1}$.
Let $A(j'',r) \in \Y(j,r)$, in particular $\dist(j'',j) \leq c_Y\cdot 5^r$.
Denote by $\tuple{j''', r+1}$ the parent of $\tuple{j'',r}$ in $\T$ and observe that
\[
\dist(j',j''') 
  \leq \dist(j',j) + \dist(j,j'') + \dist(j'',j''') 
  \leq c_1\cdot5^{r+1} + c_Y\cdot 5^r + c_1\cdot 5^{r+1} 
  \leq c_Y\cdot 5^{r+1}.
\]
The latter inequality holds since $10 c_1 \le 4 c_Y$.
This means $A(j''',r+1) \in \Y(j',r+1)$ and together with $A(j'',r) \subseteq A(j''', r+1)$ concludes the proof.
\end{proof}

\section{Missing Proofs from Section~\ref{sec:3}}

\subsection{Open triplets}

\begin{proof}[Proof of Lemma~\ref{lem:oneOpen2}]
First note that by the definition of $\maxR$, $5^{\maxR} \ge \fmax$ and $5^{\maxR} \ge W$. Let $\tuple{j^{\rroot}, \maxR, s}$ be the root of $\T$. Note that $A(j^{\rroot}, \maxR)= \X(j^{\rroot}, \maxR) = V$ and thus the color $s$ is irrelevant. Since by assumption $\abs{\Clients} > 0$, it follows that $\abs{\Clients \cap \X(j^{\rroot},\maxR)} \ge 1 \ge f^*_{\tuple{j^{\rroot},\maxR}}/5^{\maxR}$. The latter implies that $\tuple{j^{\rroot}, \maxR, s}$ fulfills the abundance condition and either $\tuple{j^{\rroot}, \maxR, s}$ is open or there exists another triplet of smaller log-radius that is open.
\end{proof}

\begin{proof}[Proof of Lemma~\ref{lem:at_most_one_ball}]
Assume towards contradiction that $i$ belongs to $\X(j,r)$ and $\X(j', r')$ with
$\tuple{j,r, s} \in J^\ropen(\Clients)$ for some color $s$ and $\tuple{j',r', s'} \in J^\ropen(\Clients)$ for some color $s'$. 
W.l.o.g $r \ge r'$. 

{\bf Case 1:} $r = r'$.
 By Lemma~\ref{lem: x-cont-ball}, it follows that $\dist(j, j') \le 2c_3 \cdot  5^r$. But then 
by the definition of coloring it is impossible that $s = s'$. Assume w.l.o.g that $s' < s$. By the definition of $\Y(j,r)$ it follows that
$A(j', r') \subseteq \Y(j,r)$ since $\dist(j,j') \le c_Y \cdot 5^r$, and, thus, by Lemma~\ref{lemma:area-contained-in-big-ball_2}, $j' \in \Y(j,r)$.
But this is a contradiction to the fact that  $\tuple{j,r, s} \in J^\ropen(\Clients)$.

{\bf Case 2:} $r > r'$.
 By Lemma~\ref{lem: x-cont-ball}, it follows that $\dist(j, j') \le c_3 (5^r + 5^{r'})$.
Let $\tuple{j'', r}$ be the ancestor of $\tuple{j', r'}$ of logradius $r$. Then $\dist(j'', j') \le  \frac{5}{4}c_1 \cdot 5^r$,
implying that $\dist(j, j'') \le  c_3 (5^r + 5^{r'}) + \frac{5}{4}c_1 \cdot 5^r \le c_Y \cdot 5^r$.
It follows from  Lemma~\ref{lemma:area-contained-in-big-ball_2} that $j' \in A(j', r') \subseteq A(j,r) \subseteq Y(j,r)$.
But  this is a contradiction to $\tuple{j,r, s} \in J^\ropen(\Clients)$.
\end{proof}

\subsection{Enabled triplets}
\begin{proof}[Proof of Lemma~\ref{obs:abundance_implies_enabled_2}]
Consider two cases: either there exists some open triplet $\tuple{j',r',s'}$
such that $\tuple{r',s'} \leq_\rlex \tuple{r,s}$ and $j'$ falls into any area within $\Y(j,r)$, or not.
If it does, then $\tuple{j,r,s}$ is enabled by the definition of 
$J^\renbl_{r,s}(\Clients)$.
Otherwise, $\tuple{j,r,s}$ is open, and thus also enabled.
\end{proof}

\subsection{Assignment of Clients}

\begin{proof}[Proof of Observation~\ref{obs:again_nearby_open_facility}]
Recall that $i \in A(j^\rarea_i, r^\rarea_i)$.
Thus
\begin{align*}
 \dist(i,j^{\ropen}_i)            &\leq \dist(i,j^\rarea_i) 
                + \dist(j^\rarea_i, j^\rdesg_i)
 + \dist(j^\rdesg_i, j^\ropen_i) \\
             &\leq c_2 \cdot 5^{r^\rarea_i}
                + c_4 \cdot 5^{r^\rarea_i} +c_3 \cdot 5^{r_i^\rarea}\\
             &=(c_2 + c_4 + c_3) \cdot 5^{r^\rarea_i}.
\end{align*}
The last inequaility follows from Lemma~\ref{lemma:area-contained-in-big-ball_2}, the definition of enabled triplets, and Observation~\ref{obs:openf_2}.
\end{proof}

\begin{proof}[Proof of Lemma~\ref{lem:radiusMatch2}]
Assume towards contradiction that $r_{i}^{\rarea} < r$.
Then there exists an enabled triplet $\tuple{j', r', s'} \in J^\renbl_\Clients$ with $r' = r_{i}^{\rarea}$ and $i \in A(j', r')$.
By the definition of $J_{r',s'}^\renbl(\Clients)$ we know that 
there must be an open triplet $\tuple{j'',r'', s''}$ with $\tuple{r'',s''} \leq_\rlex \tuple{r',s'}$ 
such that $j''$ falls within $\Y(j',r')$. 
Let $\tuple{j''', r}$ be the ancestor of $\tuple{j'',r''}$ of logradius $r$ in $\T$.
Taking into account that $r' \leq (r-1)$ we can derive that
\begin{align*}
	\dist(j,j''') 
    &\leq \dist(j,i) + \dist(i,j') + \dist(j',j'') + \dist(j'', j''') \\
    &\leq c_3 \cdot 5^r + c_2 \cdot 5^{r'} + c_4 \cdot 5^{r'} + \frac{5}{4}c_1 \cdot 5^r\\
    &\leq 2c_3 \cdot 5^{r} \\
  & \leq c_Y \cdot 5^r,
\end{align*}
which implies that $A(j''',r) \subseteq \Y(j,r)$. By the nesting property of areas in Lemma~\ref{lemma:area-contained-in-big-ball_2} 
it follows that   $j'' \in A(j''',r)$ which implies that $j'' \in \Y(j,r)$. This contradicts the claim that $\tuple{j,r,s}$ is an open triplet.
\end{proof}

\begin{proof}[Proof of Lemma~\ref{lm:facilities_costs_still_dont_matter}]
The first part follows from Lemma~\ref{obs:again_nearby_open_facility}.
The second part follows from the abundance condition, namely
\begin{align*}
\sum_{j \in \mathcal{\Solution_{\Clients}}}f_j 
  &\leq \sum_{\tuple{j,r,s} \in J^\ropen_\Clients}f^*_{\tuple{j,r}} \leq \sum_{\tuple{j,r,s} \in J^\ropen_\Clients} 5^r 
     \cdot \abs{\Clients\cap \X(j,r)} =    \sum_{\tuple{j,r,s} \in J^\ropen_\Clients} 
        \sum_{i \in \Clients \cap \X(j,r)} 5^r \leq \sum_{i \in \Clients} 5^{r^\rarea_i}
\end{align*}
The last inequality follows from Lemma~\ref{lem:at_most_one_ball} and~\ref{lem:radiusMatch2}.
\end{proof}


\begin{figure} 
\begin{center}
\fbox{\parbox{\textwidth}{
Let $S(\maxR) \gets \{\tuple{j,\maxR}\}$, $S(r) \gets \emptyset$, $\forall r < \maxR$.
\begin{itemize}
\setlength\itemsep{-0.05cm}

\item
for all $r \in [\maxR, \minR)$ and $\tuple{j,r} \in S(r)$:
	\begin{itemize}
	\item for all $\tuple{j',r-1} \in \children(j,r)$:
	\begin{itemize}
	\item if $\dist(p,j') \leq c^* \cdot 5^{r}$, then $S(r-1) \gets S(r-1) \cup \{\tuple{j',r-1}\}$.
	\end{itemize}
	\end{itemize}
\end{itemize}
$S \gets \bigcup \{S(r) : r \in [\minR, \maxR]\}$.
}}
\end{center}
\caption{\textsc{FindBalls}$(p,c^*)$.}
\label{fig:findBalls}
\end{figure}



%

\subsection{Finding the area containing a given point} \label{sec: findarea}

\begin{figure} 
\begin{center}
\fbox{\parbox{\textwidth}{
$S \gets \textsc{FindBalls}(p,c_2)$.
\begin{itemize}
\item let $r^*$ be minimum such that there exists pairs $\tuple{j,r^*} \in S$.
\item among all such pairs, output the pair $\tuple{j^*,r^*}$ minimizing $\dist(p,j)$. 
\end{itemize}
}}
\end{center}
\caption{\textsc{FindArea}$(p)$.}
\label{fig:findArea}
\end{figure}

Recall the algorithm $\textsc{FindBalls}(p,c^*)$ from Section~\ref{sec:lookup_ball}~(as depicted in Figure~\ref{fig:findBalls}). In this section we show how to find the area containing a given point. Specifically, given an arbitrary point $p \in V$, our goal is to determine the pair $\tuple{j^*,r^*}$ with smallest logradius such that $p \in A(j^*,r^*)$. By Lemma~\ref{lemma:area-contained-in-big-ball_2}, $p \in B(j,c_2 \cdot 5^{r^*})$. Thus we will first find all balls $\tuple{j,r}$ such that $p \in B(j,c_2 \cdot 5^r)$, using the algorithm \textsc{FindBalls}$(p,c_2)$ (note that $c_2 \geq (5/4)c_1$, thus we set $c^* = c_2$). Then we search for $\tuple{j^*,r^*}$ in the output set $S$. The algorithm is summarized in Figure~\ref{fig:findArea}. 

Note that the running time of $\textsc{FindArea}(p)$ is dominated by the running time of $\textsc{FindBalls}(p,c_2)$, which similarly to Section~\ref{sec:lookup_ball} can be shown to be at most $2^{7 \kappa} \cdot \Delta$.
\begin{lemma} \label{lem:findArea}
Given a point $p \in V$, $\textsc{FindArea}(p)$ determines the pair $\tuple{j^{*},r^{*}}$ with smallest logradius such that $p \in A(j^*,r^*)$ and runs in $2^{7 \kappa} \cdot \Delta$ time. 
\end{lemma}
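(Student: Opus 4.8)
The plan is to show that $\textsc{FindArea}(p)$ returns exactly the pair singled out by the rule that defines the laminar family of areas, and then to read off the running time from Lemma~\ref{lem: findBalls}. First I would unwind that definition: a point $p$ is placed \emph{directly} into $A(j^{*}_p,r^{*}_p)$, where $r^{*}_p$ is the smallest logradius for which some pair $\tuple{j,r^{*}_p}\in\Pairs$ has $p\in B(j,c_2\cdot5^{r^{*}_p})$, and $\tuple{j^{*}_p,r^{*}_p}$ is, among those pairs, the one minimizing $\dist(p,j^{*}_p)$; thereafter $p$ is inherited into $A(j',r')$ for every ancestor $\tuple{j',r'}$ of $\tuple{j^{*}_p,r^{*}_p}$ in $\T$. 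Consequently $p\in A(j,r)$ if and only if $\tuple{j,r}$ equals $\tuple{j^{*}_p,r^{*}_p}$ or lies on its root-path, so every area containing $p$ has logradius $\ge r^{*}_p$, and at logradius exactly $r^{*}_p$ there is a single area containing $p$, namely $A(j^{*}_p,r^{*}_p)$ (same-logradius areas are disjoint). Hence $\tuple{j^{*}_p,r^{*}_p}$ is precisely the smallest-logradius pair with $p\in A(j^{*}_p,r^{*}_p)$, and it remains to verify $\textsc{FindArea}$ outputs it.

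For this I would establish that $\textsc{FindBalls}(p,c_2)$ correctly returns $S=\set{\tuple{j,r}\in\Pairs}{p\in B(j,c_2\cdot5^{r})}$. Since $(5/4)c_1\le c_2\le 4c_1$, Lemma~\ref{lemma:inclusion-of-big-balls_2} and Lemma~\ref{lem: look-up-balls-correct} apply with $c^{*}=c_2$. By induction on decreasing $r$: the base case $S(\maxR)=\{\tuple{j,\maxR}\}$ is correct since every point of $V$ lies in $B(j,c_2\cdot5^{\maxR})$ by the choice of $\maxR$; and assuming $S(r)$ is correct, the algorithm scans all of $\children(S(r))$ and retains exactly the children $\tuple{j',r-1}$ with $p\in B(j',c_2\cdot 5^{r-1})$, while Lemma~\ref{lem: look-up-balls-correct} guarantees that no element of the true $S(r-1)$ lies outside $\children(S(r))$, so nothing is missed. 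Therefore $S=\bigcup_r S(r)$ is the full set of pairs whose $c_2$-ball contains $p$. Now $\textsc{FindArea}$ sets $r^{*}:=\min\set{r}{\tuple{j,r}\in S}$, which equals $r^{*}_p$ by definition, and outputs the pair of $S$ at logradius $r^{*}$ minimizing $\dist(p,\cdot)$ — by Lemma~\ref{lemma:area-contained-in-big-ball_2} this minimizing pair is exactly $\tuple{j^{*}_p,r^{*}_p}$. This proves correctness.

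For the running time, $\textsc{FindArea}(p)$ makes a single call to $\textsc{FindBalls}(p,c_2)$, whose running time and output size are both bounded by $2^{7\kappa}\cdot\numradii$ by Lemma~\ref{lem: findBalls} (valid since $c_2\le 4c_1$), followed by one linear pass over $S$ to extract $r^{*}$ and the $\dist$-minimizer, costing $O(\abs{S})$; the total is $2^{7\kappa}\cdot\numradii$. The only mildly delicate point is matching the ancestor-inheritance and tie-breaking of the area construction against the plain ``take the minimum-logradius layer of $S$ and minimize $\dist(p,\cdot)$'' behaviour of the algorithm; once the exact description of which areas contain $p$ is pinned down as in the first paragraph, this matching and the rest of the argument are routine.
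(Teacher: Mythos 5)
Your proposal is correct and follows essentially the same route as the paper: correctness of $\textsc{FindBalls}(p,c_2)$ via Lemma~\ref{lem: look-up-balls-correct} (with $(5/4)c_1 \le c_2 \le 4c_1$), the observation via Lemma~\ref{lemma:area-contained-in-big-ball_2} / the area-construction rule that the minimum-logradius area containing $p$ is exactly the pair minimizing $\dist(p,\cdot)$ at the smallest level of $S$, and the $2^{7\kappa}\cdot\Delta$ bound from Lemma~\ref{lem: findBalls}. The only cosmetic quibble is that the identification of the $\dist$-minimizer with $\tuple{j^*_p,r^*_p}$ rests on the definition of the area assignment (which your first paragraph already pins down) rather than on Lemma~\ref{lemma:area-contained-in-big-ball_2} itself.
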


\section{Insertions and deletions of clients} \label{lem: updateClients}

\begin{figure}[t!]
\begin{center}
\fbox{\parbox{\textwidth}{ 
\begin{itemize}
\setlength\itemsep{-0.05cm}
\item Set $w \gets \textsc{FindArea}(p)$, where $w = \tuple{j',r',s'}$.
\item for all $u = \tuple{j,r,s} \in P(w)$:
\begin{itemize}
\item determine $\Pairs(u)$ using the algorithm $\textsc{FindBalls}(j,c_X)$.
\end{itemize}
\end{itemize}
$S \gets \bigcup \set{\Pairs(u)}{u \in P(w)}$.
}}
\end{center}
\caption{\textsc{FindAffectedTriplets}$(p)$.}
\label{fig:findAffectedTriplets}
\end{figure}

\begin{figure}[t!]
\begin{center}
\fbox{\parbox{\textwidth}{ 
\begin{itemize}
\item for all $v \in S$:
\begin{itemize}
\item set $n_{\rball}(v) \gets n_{\rball}(v) + 1$.
\item if $(5^{r} \cdot n_\rball(v) \geq f^{*}_{\tuple{j,r}})$, then $\isabundant^{\rnew}_{\rball}(v) \gets 1$, otherwise, $\isabundant^{\rnew}_{\rball}(v) \gets 0$.
\item if $\isabundant_{\rball}(v) \neq \isabundant^{\rnew}_{\rball}(v)$, then add $v$ to $\mathcal{H}$, and set $\isabundant_{\rball}(v) \gets \isabundant^{\rnew}_{\rball}(v)$.
\end{itemize}
\item while $\mathcal{H} \neq \emptyset$:
\begin{itemize}
\item pull out the minimum triplet $v$ from $\mathcal{H}$ and invoke \textsc{CheckStatus$(v)$}.
\item if ($\switched(v) = 1$, $\isopen_{\rball}(v) = 0$ and $\isopen^{\rnew}_{\rball}(v) = 1$), then
\begin{itemize}
\item set $\isopen_{\rball}(v) \gets \isopen^{\rnew}_{\rball}(v)$.
\item for all $u \in \neighborsabove(v)$:
\begin{itemize}
\item add $u$ to $\mathcal{H}$, 
 and set $\openbelow(u) \gets \openbelow(u) + 1$.
\end{itemize}
\end{itemize}
\item if ($\switched(v) = 1$, $\isopen_{\rball}(v) = 1$ and $\isopen^{\rnew}_{\rball}(v) = 0$), then
\begin{itemize}
\item set $\isopen_{\rball}(v) \gets \isopen^{\rnew}_{\rball}(v)$.
\item for all $u \in \neighborsabove(v)$:
\begin{itemize}
\item add $u$ to $\mathcal{H}$, and set $\openbelow(u) \gets \openbelow(u) - 1$.
\end{itemize}
\end{itemize}
\end{itemize}
\end{itemize}
Set $U \gets \emptyset$. 
\begin{itemize}
\item for all $v$ processed by $\mathcal{H}$:
\begin{itemize}
\item if $(\openbelow(v) \geq 1$ or $\isopen_{\rball}(v) = 1$), then $\isenabled^{\rnew}_{\rball}(v) \gets 1$, otherwise, $\isenabled^{\rnew}_{\rball}(v) \gets 0$.
\item if $\isenabled^{\rnew}_{\rball}(v) \neq \isenabled_{\rball}(v)$, then add $v$ to $U$. 
\end{itemize}
\end{itemize}
Return $U$.
}}
\end{center}
\caption{\textsc{UpdateStatus}$(S)$.}
\label{fig:updateStatus}
\end{figure}

\begin{figure}[t!]
\begin{center}
\fbox{\parbox{\textwidth}{ 
\begin{itemize}
\item if ($\isopen_{\rball}(v) = 0$, $\isabundant_{\rball}(v) =1$ and $\openbelow(v) = 0$) then
\begin{itemize}
\item set $\isopen^{\rnew}_{\rball}(v) \gets 1$ and $\switched(v) \gets 1$.
\end{itemize}
\item else if ($\isopen_{\rball}(v) = 1$ and $\openbelow(v) \geq 1$) or ($\isabundant_{\rball}(v) =0$) then
\begin{itemize}
\item set $\isopen^{\rnew}_{\rball}(v) \gets 0$ and $\switched(v) \gets 1$.
\end{itemize}
\end{itemize}
}}
\end{center}
\caption{\textsc{CheckStatus}$(v)$.}
\label{fig:checkStatus}
\end{figure}

In this section we describe how to update the annotated dependency tree $\mathcal{T}$ under insertions and deletions of clients. Here, we only focus on handling client insertions. Client deletions are handled in a symmetric way.

We divide the update algorithm into three parts. First, we determine all triplets that are affected, i.e., whose data structure needs to be updated. Second, we show how to update the status of effected pairs and resolve all the dependencies between possible status changes. Third, we deal with the cost computation. 

\vspace{0.2cm}
\noindent \textbf{Determining affected triplets.} When a client $p$ is inserted, we first find the triplet $w=\tuple{j',r',s'} \in \mathcal{T}$ such that $p \in A(j',r')$ and $r'$ is minimum, using the algorithm \textsc{FindArea}$(p)$ in Figure~\ref{fig:findArea}. Let $P(w)$ be the triplets in the (unique) path between $w$ and the root of $\mathcal{T}$. 
To update all counters $n_\rball(v),$ we need to determine all areas that are within distance $c_X \cdot 5^r$ for each triplet $u = \tuple{j,r,s} \in P(w)$, i.e.,
 the set $\Pairs(u)$ of all triplets $\tuple{j'',r,s''} \in J_r$ such that $\dist(j,j'') \leq c_X \cdot 5^{r}$. Since, $c_X \geq (5/4)\cdot c_1$, we use the algorithm $\textsc{FindBalls}(j,c_X)$ in Figure~\ref{fig:findBalls} to determine $\Pairs(u)$ and finally, let $S = \bigcup \set{\Pairs(u)}{u \in P(w)}$ be the set of affected triplets. This algorithm is summarized in Figure~\ref{fig:findAffectedTriplets}.

We next argue about the running time and the size of $S$. Note that $\abs{P(w)} \leq \Delta$. By Lemma~\ref{lem: findBalls}, $\textsc{FindBalls}$ runs in $O(2^{7\kappa} \cdot \Delta)$ time. In addition, by Lemma~\ref{lem:doubling_dimension_2} and the fact that $c_X \leq 4c_1$ it follows that $\abs{\Pairs(u)} \leq 2^{3\kappa}$. Combining the above bounds, we get that $\abs{S} \leq 2^{3\kappa} \cdot \Delta$ and the algorithm runs in $O(2^{7\kappa} \cdot \Delta^2)$ time.

\begin{lemma} \label{lem:affectedTriplets}
The running time of $\textsc{FindAffectedTriplets}(p)$ and the size of $S$ are bounded by $O(2^{7\kappa} \cdot \Delta^2)$ and $2^{3\kappa} \cdot \Delta$, respectively.
\end{lemma}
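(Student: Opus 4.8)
The plan is to charge the cost of $\textsc{FindAffectedTriplets}(p)$ to its two stages separately: the single call $\textsc{FindArea}(p)$ that returns the minimal-logradius node $w=\tuple{j',r',s'}$ containing $p$, and the loop that invokes $\textsc{FindBalls}(j,c_X)$ once for every triplet $u=\tuple{j,r,s}$ on the root path $P(w)$. First I would use that the dependency tree $\mathcal{T}$ has height at most $\Delta$, so $|P(w)|\leq\Delta$, and that by Lemma~\ref{lem:findArea} the call $\textsc{FindArea}(p)$ runs in $2^{7\kappa}\cdot\Delta$ time, which will be absorbed by the loop. Since $c_X\geq(5/4)c_1$, each call $\textsc{FindBalls}(j,c_X)$ is within the specification of that subroutine, so Lemma~\ref{lem: findBalls} bounds it by $O(2^{7\kappa}\cdot\Delta)$ time; summing over the at most $\Delta$ calls, plus the final union step, gives total running time $O(2^{7\kappa}\cdot\Delta^2)$.

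For the bound on $|S|$, rather than using the $2^{7\kappa}\cdot\Delta$ output-size bound from $\textsc{FindBalls}$, I would bound each set $\Pairs(u)$ directly. It is the set of triplets $\tuple{j'',r,s''}\in J_r$ with $\dist(j,j'')\leq c_X\cdot 5^{r}$; because $c_X\leq 4c_1$ (concretely, $c_X=72\leq 80=4c_1$), Lemma~\ref{lem:doubling_dimension_2} with center $j$, logradius $r$, and $\alpha=2$ shows there are at most $2^{3\kappa}$ such pairs, hence at most $2^{3\kappa}$ such triplets since each pair carries a single color. Taking the union over the $\leq\Delta$ triplets $u\in P(w)$ then yields $|S|\leq 2^{3\kappa}\cdot\Delta$.

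I do not expect a genuine obstacle here: the content is pure bookkeeping built on Lemmas~\ref{lem:findArea}, \ref{lem: findBalls} and~\ref{lem:doubling_dimension_2}. The only points that need a moment's verification are the two numeric inequalities on the constants, $(5/4)c_1\leq c_X$ (so that $\textsc{FindBalls}$ may be invoked) and $c_X\leq 4c_1$ (so that Lemma~\ref{lem:doubling_dimension_2} applies with exponent $3\kappa$ rather than a larger one), together with the observation that every element processed while forming $S$ as a union has already been paid for by the $\textsc{FindBalls}$ calls, so the deduplication does not change the asymptotics.
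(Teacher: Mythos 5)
Your proof is correct and follows essentially the same route as the paper: bound $|P(w)|$ by the tree height $\Delta$, invoke Lemma~\ref{lem: findBalls} for each of the at most $\Delta$ calls to $\textsc{FindBalls}(j,c_X)$ to get the $O(2^{7\kappa}\cdot\Delta^2)$ time bound, and use Lemma~\ref{lem:doubling_dimension_2} with $c_X\leq 4c_1$ to bound each $\Pairs(u)$ by $2^{3\kappa}$, giving $|S|\leq 2^{3\kappa}\cdot\Delta$. Your explicit verification of the constant inequalities is a minor addition that the paper leaves implicit.
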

\vspace{0.2cm}

\noindent \textbf{Updating status of the affected triplets and their $n_\rball$ value.} We start by introducing some useful notation. We say that a triplet $\tuple{j,r,s}$ is \emph{dirty} iff its status switched from open to closed, from closed to open, or if it is an above-neighbor of such a triplet. To deal with triplets that change their status in an \emph{efficient} way, the algorithm uses a min-heap $\mathcal{H}$, which is maintained using the lexicographic ordering on $\tuple{r,s}$.

Let $S$ denote the set of affected triplets, output by $\textsc{FindAffectedTriplets}(p)$. For each $v = \tuple{j,r,s} \in S$,  we increment $n_{\rball}(v)$ and check whether $v$ fulfils the abundance condition (note that $n_{\rarea}$ counters are not yet incremented and we will deal with them later). If a triplet newly fulfils the abundance condition, or no longer fulfils this condition, we add it to the min-heap $\mathcal{H}$. Now, as long as $\mathcal{H}$ is not empty, we pull the minimum triplet $v$ from $\mathcal{H}$, and check whether $v$ has to change its status, using the algorithm \textsc{CheckStatus}$(v)$ in Figure~\ref{fig:checkStatus}, in a way to be described shortly. If $v$ just switched to open, then we need to notify all the above neighbors about this change. Specifically, we consider all above neighbors of $v$ in turn, add them to $\mathcal{H}$ and increment their $\openbelow$ counter. Symmetrically, if $v$ just switched to closed, we add all above neighbors of $v$ to $\mathcal{H}$ and decrement their $\openbelow$ counter. Finally, after $\mathcal{H}$ becomes empty, we can check using the $\openbelow$ counter whether the triplets that were added and removed from $\mathcal{H}$ switched their status to enabled or vice versa. This algorithm is summarized in Figure~\ref{fig:updateStatus}.

Given a triplet $v$, the algorithm \textsc{CheckStatus}$(v)$ tests whether $v$ has to change its status. If $v$ is currently closed, fulfils the abundance condition and no below neighbor is open, then we make $v$ open. Otherwise, if it is currently open and at least one below neighbor is open, or abundance condition is no longer fulfilled, then we make $v$ closed.

We now show the correctness of the update algorithm. We say that a dirty triplet $v$ is \emph{cleaned} if it was pulled out the min-heap $\mathcal{H}$ and $\textsc{CheckStatus}(v)$ was invoked in Figure~\ref{fig:updateStatus}.

\begin{lemma} If a dirty triplet $v$ is cleaned, then the status of $v$ is correct.
\end{lemma}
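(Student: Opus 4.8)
The plan is to proceed by induction on the order in which dirty triplets are pulled out of the min-heap $\mathcal{H}$, which coincides with the lexicographic order $\leq_\rlex$ on $\tuple{r,s}$ because $\mathcal{H}$ is a min-heap keyed on $\tuple{r,s}$. The invariant I would maintain is: at the moment a dirty triplet $v = \tuple{j,r,s}$ is cleaned, for every triplet $v' = \tuple{j',r',s'}$ with $\tuple{r',s'} <_\rlex \tuple{r,s}$, the bits $\isopen_\rball(v')$ and the counters $\openbelow(\cdot)$ relevant to $v$ already reflect the final (correct) status with respect to the updated client set $\Clients \cup \{p\}$. The base case is the lexicographically smallest dirty triplet: for such a $v$, no triplet of strictly smaller $\tuple{r',s'}$ can have changed status (any such change would have made $v$ dirty through the $\neighborsabove$ propagation and placed a strictly smaller element in $\mathcal{H}$ first, contradiction), so all below-neighbors of $v$ carry their correct status and $\openbelow(v)$ is correct; then $\textsc{CheckStatus}(v)$, which compares $\isopen_\rball(v)$, $\isabundant_\rball(v)$ and $\openbelow(v)$ against the defining recursion of $J^\ropen_{r,s}(\Clients)$, returns the correct new status.

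For the inductive step, I would argue two things. First, the $\openbelow(v)$ counter is correct when $v$ is cleaned: every open triplet $\tuple{j'',r'',s''}$ with $\tuple{r'',s''} <_\rlex \tuple{r,s}$ and $j'' \in \Y(j,r)$ that changed its open-status during this update was itself cleaned before $v$ (by the induction hypothesis / heap order), and at that moment the algorithm in Figure~\ref{fig:updateStatus} incremented or decremented $\openbelow(v)$ via the $\neighborsabove$ loop — note $v \in \neighborsabove(\tuple{j'',r'',s''})$ precisely because $\tuple{r,s} >_\rlex \tuple{r'',s''}$ and $j'' \in \Y(j,r)$, which is the defining condition of $\neighborsabove$. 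Triplets that did not change status contribute the same amount to $\openbelow(v)$ as before the update, and the preprocessed value of $\openbelow$ was correct for the old client set; so after all smaller dirty triplets are processed, $\openbelow(v)$ equals $\abs{\{\tuple{j',r',s'} \in \Facilities^\ropen_{\Clients \cup \{p\}} : \tuple{r',s'} <_\rlex \tuple{r,s},\, j' \in \Y(j,r)\}}$. Second, the $\isabundant_\rball(v)$ bit is correct: if $v \in S$ (the affected set), then $n_\rball(v)$ was incremented and $\isabundant_\rball(v)$ recomputed directly; if $v \notin S$, then $p \notin \X(j,r)$, so $n_\rball(v)$ is unchanged and the bit is still correct from before. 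With both $\openbelow(v)$ and $\isabundant_\rball(v)$ correct, $\textsc{CheckStatus}(v)$ applies exactly the case distinction of the recursive definition of $\Facilities^\ropen_{r,s}(\Clients \cup \{p\})$: it opens $v$ iff $v$ satisfies the abundance condition and has no below-neighbor open, and closes $v$ otherwise; hence $\isopen^\rnew_\rball(v)$ — and, after the assignment, $\isopen_\rball(v)$ — is correct.

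The main obstacle I anticipate is establishing that \emph{every} dirty triplet actually enters $\mathcal{H}$ and is cleaned, i.e., that the propagation does not miss a triplet whose status ought to change. Concretely, one must check that if $v$'s status changes then either $v \in S$ (its abundance bit flipped, so it was inserted at the start) or $v$ is an above-neighbor of some triplet whose open-status flipped (so it was inserted during the heap loop), and that these are the only ways a status change can be triggered — this is where the precise form of the recursive definitions of $\Facilities^\ropen_{r,s}$ and $\Facilities^\renbl_{r,s}$, together with the fact that $\Y(j,r)$ for a triplet only references facilities within distance $c_Y \cdot 5^r$ (captured exactly by $\neighborsabove$ / $\neighborsbelow$), must be invoked. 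A secondary subtlety is that a triplet may be inserted into $\mathcal{H}$, cleaned, and later have $\openbelow$ changed again by an even-larger-but-still-below dirty triplet; but this cannot happen because $\neighborsabove$ only points to strictly-larger triplets, so once $v$ is pulled from the heap no strictly-smaller triplet remains to affect it — this monotonicity is what makes the single-pass heap traversal correct and is the crux I would emphasize.
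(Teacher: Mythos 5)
Your proposal is correct in substance, but it is organized differently from the paper's argument, and the comparison is instructive. The paper proves this lemma by a direct case analysis on the \emph{reason} $v$ became dirty: either its abundance bit flipped, or an open-status flip of a below triplet incremented or decremented $\openbelow(v)$; in each of the four resulting subcases it checks that \textsc{CheckStatus} performs exactly the right transition, and it handles the enabled bit by the $\openbelow$ counter check. You instead set up an explicit induction on the heap-extraction order (which, as you note, is non-decreasing in $\leq_\rlex$ because $\neighborsabove$ only inserts lexicographically larger triplets), with the invariant that $\isabundant_\rball(v)$ and $\openbelow(v)$ already reflect the new client set at the moment $v$ is cleaned, so that \textsc{CheckStatus} literally evaluates the recursive definition of $\Facilities^\ropen_{r,s}$. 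This buys you something the paper's proof leaves implicit: the paper's case analysis tacitly assumes the counters are accurate when $v$ is pulled, whereas your invariant is exactly the statement that they are, and your monotonicity observation (nothing lexicographically smaller can enter $\mathcal{H}$ after $v$ is pulled) is precisely the content of the paper's \emph{next} lemma, ``a dirty triplet once cleaned cannot again become dirty,'' which you have folded into the same induction. The one piece you flag but do not discharge --- that every triplet whose true open-status changes is in fact inserted into $\mathcal{H}$ (either via $S$ because its $n_\rball$ changed, or as an above-neighbor of a flipped triplet) --- is genuinely needed for your invariant, since a missed below-neighbor would corrupt $\openbelow(v)$ for a cleaned $v$; it follows by a downward induction in lex order from the recursive definition of open triplets, and the paper does not spell it out either (it conditions on $v$ being dirty and cleaned, taking the propagation mechanism as read). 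So: same underlying mechanism, but your inductive-invariant formulation is the more rigorous packaging, at the cost of one completeness step you would still need to write out.
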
  
\begin{proof}
We first deal with the open/closed status. A triplet $v$ is added to $\mathcal{H}$, i.e., it is dirty, iff there is a change in the abundance condition of $v$, and/or if it is an above neighbor of some dirty node. 

\vspace{0.2cm}
\noindent ~(1) Suppose there was a change in the abundance condition of $v$. We distinguish two cases:
\begin{itemize}
\item Triplet $v$ \emph{newly} satisifes the abundance condition. Then $v$ is currently closed. Now, either there is some below neighbor that is open, or this is not the case. In the first scenario, the status of $v$ remains unchanged, while in the second the algorithm correctly switches its status to open. 
\item Triplet $v$ \emph{no longer} satsifies the abundance condition. Then $v$ is currently either closed or open. In the first scenario, the status of $v$ remains unchanged, while in the second  the algorithm correctly switches its status to closed. 
\end{itemize}

\noindent ~(2) Suppose $v$ is an above neighbor of some dirty node $u$. We distinguish two cases:
\begin{itemize}
\item Triplet $u$ switched to \emph{open} and we incremented the $\openbelow(v)$ counter by $1$. Then $v$ is currently either closed or open. In the first scenario, the status of $v$ remains unchanged, while in the second the algorithm correctly switches its status to closed.
\item Triplet $u$ switched to \emph{closed} and we decremented the $\openbelow(v)$ counter by $1$. Then $v$ is currently closed. Now, either there is some below neighbor that is open, or this is not the case. In the first scenario, the status of $v$ remains unchanged, while in the second the algorithm correctly switches its status to open. 
\end{itemize}

Now, a triplet $v$ enabled if there is some below neighbor that is open.  The algorithm precisely checks this using the counter $\openbelow(v)$ and sets its status accordingly. 
\end{proof}

The following lemma is crucial for the correctness of our algorithm and for bounding the overall number of dirty triplets. 

\begin{lemma} 
A dirty triplet once cleaned cannot again become dirty.
\end{lemma}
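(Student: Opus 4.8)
The plan is to establish the stronger statement that once a triplet is extracted from the min-heap $\mathcal{H}$ used in \textsc{UpdateStatus} it is never re-inserted into $\mathcal{H}$. This suffices: a triplet is \emph{cleaned} precisely when it is extracted from $\mathcal{H}$, and a triplet can only become dirty by being (re-)inserted into $\mathcal{H}$ — either because its own open/closed status switches, which happens inside \textsc{CheckStatus}$(v)$ and hence only after $v$ has been pulled from $\mathcal{H}$, or because a below-neighbor of $v$ switched status, in which case $v$ is added to $\mathcal{H}$ in the corresponding branch of Figure~\ref{fig:updateStatus}. Throughout, for a triplet $v=\tuple{j,r,s}$ write $\ell(v)=\tuple{r,s}$ for its key; the min-heap $\mathcal{H}$ is ordered by $\le_\rlex$ on $\ell(\cdot)$. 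The argument rests on one structural fact built into the data structure: every element of $\neighborsabove(v)$ is a triplet $u$ with $\ell(u) >_\rlex \ell(v)$, so following an above-neighbor strictly increases the key.

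First I would pin down where insertions into $\mathcal{H}$ come from. In \textsc{UpdateStatus}$(S)$, all insertions caused by a change of the abundance condition happen in the opening loop ``for all $v \in S$'', before the while-loop starts; inside the while-loop the counters $n_\rball(\cdot)$ are never modified, so no triplet's abundance status changes, and hence the only insertions performed during the while-loop are of above-neighbors of a triplet that has just switched its open/closed status. Consequently, after the first extraction, every newly inserted triplet lies in $\neighborsabove(u)$ for some $u$ that was just extracted and switched status.

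The heart of the proof is the invariant: at the moment a triplet $v$ is extracted from $\mathcal{H}$, and at every later moment, every triplet that is in $\mathcal{H}$ or is subsequently inserted has key $\ge_\rlex \ell(v)$. This follows by induction along the sequence of extractions. When $v$ is extracted it is the minimum of $\mathcal{H}$, so every element still in $\mathcal{H}$ has key $\ge_\rlex \ell(v)$. Any element inserted afterwards is, by the previous paragraph, an above-neighbor of some already-extracted triplet $u$; by the inductive hypothesis $\ell(u) \ge_\rlex \ell(v)$, and by the structural fact the inserted triplet has key strictly above $\ell(u)$, hence $\ge_\rlex \ell(v)$.

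It remains to conclude by contradiction. Suppose some triplet $v$ is cleaned — extracted at a time $t$ — and later becomes dirty again; then $v$ is re-inserted into $\mathcal{H}$ at some time $t' > t$. By the first step, this re-insertion places $v$ in $\neighborsabove(u)$ for some $u$ extracted between $t$ and $t'$, so the structural fact yields $\ell(v) >_\rlex \ell(u)$. But $u$ is extracted after $v$ is cleaned, so the invariant gives $\ell(u) \ge_\rlex \ell(v)$, a contradiction. Hence $v$ is never re-inserted and, in particular, never becomes dirty again. The only delicate point is that several triplets may share the same key and coexist in $\mathcal{H}$; the argument is unaffected because $\neighborsabove$ increases the key \emph{strictly}, so a triplet can never be an above-neighbor of another triplet with the same key, and we only ever invoke the non-strict form of the invariant. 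I expect this tie handling, together with making precise the equivalences ``cleaned $=$ extracted'' and ``becoming dirty again $\Rightarrow$ being re-inserted'', to be the only real obstacle; the induction itself is a single line.
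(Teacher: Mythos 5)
Your proposal is correct and follows essentially the same route as the paper's own proof: the paper likewise argues that a re-insertion of a cleaned triplet $v$ would have to come from a later-processed triplet $u$ with $\tuple{r',s'} <_\rlex \tuple{r,s}$, which is impossible because the heap is processed in lexicographic order and everything added during the while-loop lies strictly above the triplet just extracted. Your version merely makes the monotonicity of extracted keys explicit as an invariant (and handles ties via the strictness of $\neighborsabove$), which is a cleaner write-up of the same argument.
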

\begin{proof}
Let $v = \tuple{j,r,s}$ be a dirty triplet that is cleaned. Assume towards contradiction that $v$ is again added to $\mathcal{H}$, i.e., it is dirty. Then it must have been added because it is an above neighbor of some triplet $u = \tuple{j',r',s'}$ such that $\tuple{r',s'} <_{\rlex} \tuple{r,s}$. But since $v$ is cleaned and the min-heap $\mathcal{H}$ is maintained according to lexicographic ordering on pairs, we get that all dirty triplets with $\tuple{r',s'} <_{\rlex} \tuple{r,s}$ are also cleaned and they are processed before $v$. 
Additionally, all triplets that are added to $\mathcal{H}$ during the while-loop appear later in the lexicographic order than $v$, as they are either above-neighbors of $v$ or of a triplet that comes later than $v$ in the lexicographic order.  It follows that $u$ could not be processed at this time step and thus, $v$ is not added to $\mathcal{H}$, which is a contradiction. 
\end{proof}

Note that if no triplet changes its status from open to close or vice versa, no further changes are necessary. However, if a triplet changes its status we need to show how to bound the number of triplets changing their status. 

\begin{lemma} \label{lem: numberChanges}
If some triplet $\tuple{j,r,s}$ becomes open (resp.~closed), then at most 
$2^{5\kappa^2 + 6\kappa}\cdot\Delta$ other triplets
become closed (resp.~open). 
\end{lemma}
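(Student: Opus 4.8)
The plan is to bound the size of the ``ripple'' caused by a single status change by tracking how changes propagate through the $\neighborsabove$ relation, level by level through the $\Delta$ logradii, and at each level using the doubling-dimension bounds to control the branching. Suppose $\tuple{j,r,s}$ becomes open (the closed case is symmetric). First I would observe that the only triplets that can change status as a direct consequence are the above-neighbors of $\tuple{j,r,s}$, i.e.\ those in $\neighborsabove(\tuple{j,r,s})$: their $\openbelow$ counter changes, and this is the only mechanism by which a triplet other than $\tuple{j,r,s}$ itself can flip. So I would set up an induction on the number of logradii, letting $N_k$ denote the number of triplets at logradius exactly $r+k$ that change status, with $N_0 = 1$. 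The key recursion is $N_{k+1} \le N_k \cdot (\text{max number of above-neighbors at the next logradius of a single triplet})$, but this naive bound is too weak because the relevant above-neighbor set spans \emph{all} higher logradii at once, not just the next one; so I would instead argue directly that the total count is $\sum_{k=0}^{\Delta-1} N_k$ where $N_k$ is bounded via a counting argument over $J_{r+k}$.

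The heart of the matter is the per-level bound. For a triplet $v = \tuple{j',r',s'}$ at logradius $r'$ that changes status because of changes below, it must have an above-neighbor relationship or a ``$\Y$-containment'' relationship to some triplet at the originating logradius; chasing the definition of $\Y(j,r)$ and Lemma~\ref{lem: y-of-ancestor} (which says $\Y$ of a descendant is contained in $\Y$ of an ancestor), I would show that all triplets affected at logradius $r+k$ have their facility within $B(j, c \cdot 5^{r+k})$ for a constant $c$ that is $O(1)$ (it grows by at most $c_1$ per level going up from $c_4 \cdot 5^r$, and $\sum_i c_1 5^{r+i} = O(c_1 5^{r+k})$, so $c$ is an absolute constant independent of $k$). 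Then Lemma~\ref{lem:doubling_dimension_2} with a suitable $\alpha = O(\kappa)$ — here is where the $\kappa^2$ in the exponent appears, since $c$ being a large but fixed multiple of $c_1$ forces $\alpha = O(\kappa)$ to absorb a $2^{O(\kappa)}$ factor, giving $2^{(\alpha+1)\kappa} = 2^{O(\kappa^2)}$ — bounds $|J_{r+k} \cap B(j, c \cdot 5^{r+k})|$ by $2^{O(\kappa^2)}$. Wait: more carefully, the constant $c$ is a genuine absolute constant (like $c_4$ plus a geometric series of $c_1$'s), so $c = O(1)$; but the statement's bound has $2^{5\kappa^2+6\kappa}$, suggesting the propagation is not confined to a ball of $O(1)$-radius but rather that each flipped triplet at level $k$ can itself spawn $2^{O(\kappa)}$ new affected triplets at level $k+1$ whose centers drift, so over $\Delta$ levels the reachable region is bounded but the \emph{number of paths} (hence triplets, counted via the $\Y$-neighbor degree which is $2^{5\kappa}$ by Lemma~\ref{lem: y-nr-areas} composed with itself $O(\kappa)$ times to cover the drift) is $2^{5\kappa \cdot O(\kappa)} = 2^{O(\kappa^2)}$ per level. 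I would reconcile these two views by bounding the number of affected triplets at each level as the number of distinct triplets at that logradius reachable from $\tuple{j,r,s}$ via the above-neighbor relation, which is at most $2^{5\kappa^2+6\kappa}$ by iterating the degree bound, and then multiply by $\Delta$ for the levels.

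I expect the main obstacle to be exactly this quantitative tightening: showing that the affected set at each logradius has size $2^{O(\kappa^2)}$ rather than something that blows up multiplicatively across the $\Delta$ levels. The naive argument gives $(\text{degree})^{\Delta}$, which is far too large; the fix is the geometric observation that the $\Y$-neighborhoods of ancestors are nested (Lemma~\ref{lem: y-of-ancestor}), so the set of affected facilities at logradius $r+k$, once mapped up to their ancestors at logradius $r+k$, all lie in a single $\Y$-type ball around $j$ of radius $O(5^{r+k})$, and hence number at most $2^{O(\kappa^2)}$ by the doubling bound — and this is uniform in $k$, so summing over the $\le \Delta$ relevant levels gives the claimed $2^{5\kappa^2+6\kappa} \cdot \Delta$. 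The second, more routine obstacle is bookkeeping the exact constants in the exponent ($5\kappa^2 + 6\kappa$) from the chain of doubling-dimension applications (Lemmas~\ref{lem:doubling_dimension_2}, \ref{lemma:degree_2}, \ref{lem: y-nr-areas}); I would isolate that into a short computation at the end rather than carrying it through the argument.
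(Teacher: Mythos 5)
Your overall skeleton---contain the affected triplets of each logradius in a single ball around $j$, apply the packing bound of Lemma~\ref{lem:doubling_dimension_2}, and sum over the $\Delta$ levels---is the same as the paper's, but the step where you fix the radius of that ball has a genuine gap, and you in fact oscillate between two mutually inconsistent accounts of it. Your first account claims the drift of the affected region is an absolute constant times $5^{r+k}$ (``$c$ is a genuine absolute constant'', growing ``by at most $c_1$ per level''). This is false: a status change can also propagate \emph{sideways}, between triplets of the \emph{same} logradius $r''$ (a flip of $\tuple{j_1,r'',s_1}$ can flip $\tuple{j_2,r'',s_2}$ with $s_2>s_1$ and $j_1\in \Y(j_2,r'')$, since the lexicographic order used in the definitions of open/enabled compares $\tuple{r,s}$, not just $r$), and each such sideways step moves the center by up to $c_4\cdot 5^{r''}$. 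If the drift really were $O(1)\cdot 5^{r+k}$, Lemma~\ref{lem:doubling_dimension_2} would give only $2^{O(\kappa)}$ affected triplets per level, not $2^{5\kappa^2+6\kappa}$---precisely the inconsistency you notice but never resolve. Your second account (triplets ``reachable via the above-neighbor relation'', bounding them ``by iterating the degree bound \dots composed with itself $O(\kappa)$ times'') is unsupported: an influence chain can take up to $2^{5\kappa}+1$ steps at \emph{each} of the $\Delta$ logradii, so there is no reason only $O(\kappa)$ compositions suffice, and naive path counting explodes, as you yourself observe. Lemma~\ref{lem: y-of-ancestor} (nesting of $\Y$ along ancestors) does not repair this, because it says nothing about sideways moves to non-ancestors at the same level.

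The missing idea, which is exactly what the paper uses, is the coloring. Along any chain of influence the pairs $\tuple{r'',s''}$ are strictly increasing in lexicographic order, so the chain can linger at one logradius $r''$ for at most $2^{5\kappa}+1$ steps (one per color) before it must move up a level; hence the drift accumulated at logradius $r''$ is at most $c_4\cdot 5^{r''}\cdot(2^{5\kappa}+1)$. Summing this geometric series over levels up to $r'$ shows that every directly or indirectly influenced triplet of logradius $r'$ has its facility within distance $(5/4)\,c_4\,(2^{5\kappa}+1)\cdot 5^{r'} < 2^{5\kappa+5}c_1\cdot 5^{r'}$ of $j$. The containing ball thus has radius $2^{O(\kappa)}\cdot 5^{r'}$, \emph{not} $O(1)\cdot 5^{r'}$, and applying Lemma~\ref{lem:doubling_dimension_2} with $\alpha=5\kappa+5$ gives at most $2^{(5\kappa+6)\kappa}=2^{5\kappa^2+6\kappa}$ affected triplets per logradius; multiplying by $\Delta$ levels yields the claim. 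In short, the $\kappa^2$ in the exponent comes from the ball radius carrying a $2^{O(\kappa)}$ factor (sideways drift bounded by the number of colors), not from composing degree bounds, and without that observation your per-level bound does not stand.
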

\begin{proof}
Assume that the triplet $v = \tuple{j,r,s}$ changes its status. A triplet 
$u = \tuple{j',r',s'}$ is {\em directly influenced} by $v$ if
  $\tuple{r,s}<_\rlex\tuple{r',s'}$, $j \in \Y(j',r')$, and $u$ fulfills the abundance condition. Additionally, a triplet $u = \tuple{j',r',s'}$ is {\em indirectly influenced} if (a) there exist triplets
$u_1, \dots,  u_\ell$ with $\tuple{r,s} <_{\rlex} \tuple{r_1,s_1} <_{\rlex} \ldots <_{\rlex} \tuple{r_{\ell},s_{\ell}} <_{\rlex} \tuple{r',s'} $, (b)
triplet $u_{\ell}$ is directly or indirectly influenced by
$v$ and (c) triplet $u$ is directly influenced by $u_{\ell}$. 

For any logradius $r' > r$, let $\tuple{j', r', s'}$ be the triplet directly or indirectly influenced by $v$ with maximum distance between $j$ and $j'$.
Note that it is possible that a triplet $\tuple{j',r,s'}$ is directly or indirectly influenced by a triplet $\tuple{j,r,s}$ with the same logradius. However, in that case  $s' > s$, which limits the number of triplets with logradius $r$ to $2^{5\kappa}+1$ in any sequence of indirectly influenced triplets.
It follows that
\[
\dist(j,j') \leq \sum_{j = r+1}^{r'}  c_4\cdot5^{j} \cdot (2^{5\kappa}+1)
            \leq (5/4) \cdot c_4 \cdot (2^{5\kappa}+1) \cdot 5^{r'} 
            <  16 \cdot (2^{5\kappa}+2^{5\kappa}) \cdot c_1 \cdot 5^{r'}.
\]
Now, by Lemma~\ref{lem:doubling_dimension_2} we get that there are 
at most $2^{(5\kappa+6)\kappa}$ 
facilities of radius $r'$ within the ball $B(j,\dist(j,j'))$, and hence at most $2^{5\kappa^2 + 6\kappa} $ triplets of log-radius $r'$ indirectly influenced by triplet $v = \tuple{j,r,s}$.
As there are $\Delta$ different possible radii, the theorem follows.
\end{proof}

We now argue about the running time of $\Call{UpdateStatus}{S}$. First, by Lemma~\ref{lem:affectedTriplets} recall that the time to find $S$ and its size are bounded by $O(2^{7\kappa} \cdot \Delta^2)$ and $O(2^{7\kappa} \cdot \Delta)$, respectively.
Next, note that the heap $\mathcal{H}$ initially contains only elements of $S$. For each $v \in S$,  Lemma~\ref{lem: numberChanges} implies that there are at most $2^{5\kappa^2+6\kappa} \cdot \Delta$ other triplets that are affected by $v$, and thus added to $\mathcal{H}$. It follows that the size of $\mathcal{H}$ is at most $2^{5\kappa^{2} + 13\kappa} \cdot \Delta^2$. When a triplet $v$ is pulled out from $\mathcal{H}$, in the worst-case the algorithm iterates over all above neighbors of $v$. Since there can be at most $2^{6\kappa} \cdot \Delta$ such neighbors, we get that the number of heap operations needed for processing all triplets from $\mathcal{H}$ is bounded by $2^{5\kappa^{2} + 19\kappa} \cdot \Delta^3$. Each heap operation takes $O(\log \abs{\mathcal{H}}) = O(\kappa^{2} + \log \Delta)$ time, so the running time for the \textsc{while} loop is bounded by $O(2^{5\kappa^{2} + 19\kappa} \cdot \Delta^3 \cdot (\kappa^{2} + \log \Delta))$. Finally, observe that $\abs{U} \leq \abs{\mathcal{H}}$ and thus the last \textsc{for} loop requires $O(2^{5\kappa^{2} + 13\kappa} \cdot \Delta^2)$ time. Combining the above bounds, the total time of the procedure is bounded by $O(2^{O(\kappa^{2})} \cdot \Delta^3 \cdot (\kappa^{2} + \log \Delta)).$

\begin{lemma} \label{lem: run-time-update}
The running time of $\Call{UpdateStatus}{S}$ is bounded by $O(2^{O(\kappa^{2})} \cdot \Delta^3 \cdot (\kappa^{2} + \log \Delta))$.
\end{lemma}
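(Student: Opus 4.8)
\textbf{Proof proposal for Lemma~\ref{lem: run-time-update}.} The plan is to split $\Call{UpdateStatus}{S}$ into its three natural phases — the initialization loop over the affected set $S$, the \textsc{while}-loop that propagates open/closed changes through the min-heap $\mathcal{H}$, and the final \textsc{for}-loop that recomputes the enabled bits — and to charge each phase separately. The quantity that will dominate is $\abs{\mathcal{H}}$, the total number of triplets ever inserted into the heap, since every iteration does $O(1)$ heap operations per such triplet together with a scan of its above-neighbor list.

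First I would recall from Lemma~\ref{lem:affectedTriplets} that $\abs{S} \le 2^{3\kappa}\Delta$ and that $S$ is produced in $O(2^{7\kappa}\Delta^2)$ time, which is already within the target bound; the initialization loop then updates $n_\rball$, re-tests the abundance condition, and pushes at most $\abs{S}$ triplets into $\mathcal{H}$, for $O(\abs{S})$ heap operations. The crucial step is to bound $\abs{\mathcal{H}}$ itself: every triplet that ever enters $\mathcal{H}$ is either an element of $S$ or an above-neighbor of a triplet whose open/closed bit flipped, and by Lemma~\ref{lem: numberChanges} a single flip makes at most $2^{5\kappa^2+6\kappa}\Delta$ further triplets dirty. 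Combined with the structural fact that a cleaned triplet can never become dirty again — so the lexicographically ordered heap never cycles and no triplet is processed twice — this yields $\abs{\mathcal{H}} = O(2^{5\kappa^2+13\kappa}\Delta^2)$.

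Next I would account for the per-extraction work. When a triplet $v$ is pulled from $\mathcal{H}$, the algorithm runs \textsc{CheckStatus}$(v)$ in $O(1)$ time and, if $v$ flipped, scans $\neighborsabove(v)$, pushing each above-neighbor onto $\mathcal{H}$ and adjusting its $\openbelow$ counter. Since $j \in \Y(j',r')$ forces $\dist(j,j') \le c_4 \cdot 5^{r'}$, applying Lemma~\ref{lem:doubling_dimension_2} at each of the $\Delta$ logradii bounds $\abs{\neighborsabove(v)}$ by $2^{6\kappa}\Delta$, so the total number of heap operations over the whole loop is $O(\abs{\mathcal{H}} \cdot 2^{6\kappa}\Delta) = O(2^{5\kappa^2+19\kappa}\Delta^3)$, each costing $O(\log\abs{\mathcal{H}}) = O(\kappa^2+\log\Delta)$ time. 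The final \textsc{for}-loop runs over the $O(\abs{\mathcal{H}})$ processed triplets and recomputes each $\isenabled_\rball$ bit in $O(1)$ time, so it is subsumed. Summing the three phases gives $O(2^{O(\kappa^2)}\Delta^3(\kappa^2+\log\Delta))$.

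I expect the main obstacle to be not the arithmetic but the justification of the bound on $\abs{\mathcal{H}}$: one must invoke Lemma~\ref{lem: numberChanges} to control the cascade of status changes triggered by each of the $\abs{S}$ initially-dirty triplets, and, crucially, the monotonicity lemma to rule out a triplet being inserted, cleaned, and re-inserted — without which the heap could in principle grow unboundedly and the whole estimate would collapse. Once those two structural facts are in place, the remainder is the routine product of $\abs{\mathcal{H}}$, the maximum above-neighbor degree, and the $O(\log\abs{\mathcal{H}})$ cost of a single heap operation.
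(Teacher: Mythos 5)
Your proposal is correct and follows essentially the same route as the paper: it charges the initialization to Lemma~\ref{lem:affectedTriplets}, bounds the total heap size via Lemma~\ref{lem: numberChanges} together with the fact that a cleaned triplet never becomes dirty again, multiplies by the $2^{O(\kappa)}\Delta$ bound on above-neighbors and the $O(\kappa^2+\log\Delta)$ cost per heap operation, and absorbs the final loop. No substantive differences from the paper's argument.
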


\begin{figure}[t!]
\begin{center}
\fbox{\parbox{\textwidth}{
Let $U \gets \textsc{UpdateStatus}(S)$.
\begin{itemize}
\item for all $u \in U$:
\begin{itemize}
\item for all $v \in P(u)$:
\begin{itemize}
\item set $n_{\enabledbelow}(\parent(v))  \gets n_{\enabledbelow}(\parent(v))$
\vspace{0.1cm}
\item[] \hspace{110pt}$+~n_{\rarea}(v) \cdot(\isenabled_{\rball}^{\rnew}(v) - \isenabled_{\rball}(v))$.
\end{itemize}
\vspace{0.2cm}
\end{itemize}
\item Set $w \gets \textsc{FindArea}(p)$, where $w = \tuple{j',r',s'}$.
\item for all $v \in P(w)$:
\begin{itemize}
\item set $n^\rnew_\rarea(v) \gets n_\rarea(v)+1$.
\item set $n_{\enabledbelow}(\parent(v))  \gets n_{\enabledbelow}(\parent(v))$
\vspace{0.1cm}
\item[] \hspace{110pt}$+~n^{\rnew}_{\rarea}(v) \cdot \isenabled_{\rball}^{\rnew}(v) - n_{\rarea}(v) \cdot \isenabled_{\rball}(v)$. 
\item set $n_{\rarea}(v) \gets n^{\rnew}_{\rarea}(v)$.
\end{itemize}
\vspace{0.2cm}
\item Let $Q \gets \bigcup \set{P(v)}{v \in U}$. 
\item Sort $Q$ in non-decreasing order according to radii of triplets.
\item for all $v \in Q$:
\begin{itemize}
\item  set $\cost(v) \gets y(v)  + (n_{\rarea}(v) - n_{\enabledbelow}(v)) \cdot \isenabled_{\rball}^{\rnew}(v) \cdot 5^{r}$.
\item update $y(\parent(v))$ according to the new value of $c(v)$.
\end{itemize}
\end{itemize}
}}
\end{center}

\caption{\textsc{UpdateCost}$(p)$}
\label{fig:updateCost}
\end{figure}
\vspace{0.2cm}

\noindent \textbf{Updating $n_{\rarea}$, $n_{\rball}$, $\cost$, and $y$.} Consider the set $U$ of output triplets by algorithm \textsc{UpdateStatus$(S)$}. Since these triplets have changed their status, by Lemma~\ref{lem:ancesEnabled_2}, we also need to consider the triplets belonging to the path between each triplet of $U$ to the root of $\mathcal{T}$. For any $u \in U$, let $P(u)$ denote such path. Now we iterate over all triplets from $\bigcup \set{P(u)}{u \in U}$ and update the counter $n_{\enabledbelow}$ of their parents. Note that these updates are sufficient for all triplets, except for the triplets whose corresponding areas contain the newly inserted client $p$. To fix this, we need to correctly set the counter $n_{\rarea}$ for such triplets.  


To this end, we find the triplet $w = \tuple{j',r',s'}$ in $\mathcal{T}$ such that $p \in A(j',r')$ and $r'$ is minimum, using the algorithm \textsc{FindArea}$(p)$ in Figure~\ref{fig:findArea}. Now, for each node $v = \tuple{j,r,s}$ along the (unique) path from $w$ to the root of $\mathcal{T}$ we perform the following: we increment $n_{\rarea}(v)$ and depending whether the area has switched its status, we also propagate this information to the parent of $v$, i.e., we update $n_{\enabledbelow}(\parent(v))$. Before proceeding further, note that the total cost that we pay for the clients in the subtree $\mathcal{T}(v)$, i.e., $\cost(v)$ with $v=\tuple{j,r,s}$, can be written recursively as follows: 

\begin{equation} \label{eq: costComput}
 c(v) =
  \begin{cases}
    y(v) + (n_{\rarea}(v)-n_{\enabledbelow}(v))\cdot 5^{r}       & \quad \text{if } v \text{ enabled}\\
    y(v)  & \quad \text{if } v \text{ not enabled}\\
  \end{cases}
\end{equation}

Now we show how to update the costs of the affected triplets, which we denote by $Q := \bigcup \set{P(v)}{v \in U}$. We first sort the triplets of $Q$ in non-decreasing order according to their radii. This is important as it ensures that cost updates are performed bottom-up, which in turn implies that the value $y(v)$ at each node is correct, when the re-computation of cost at that node is performed. Next, for each $v \in Q$, the cost of the solution $\cost(v)$ is computed using Equation~\eqref{eq: costComput} and the correctly updated values of $n_{\rarea}(v)$ and $n_{\enabledbelow}(\parent(v))$ from the previous steps. Finally, we update the cost of the parent of $v$, $y(\parent(v))$, according to the new value of $\cost(v)$. A detailed implementation is given in Figure~\ref{fig:updateCost}.



We now argue about the running time of $\Call{UpdateCost}{p}$. First, recall that $\abs{U} \leq 2^{5\kappa^2 + 19\kappa} \cdot \Delta^2$ and $\abs{P(u)} \leq \Delta$, for each $u \in U$. Thus, the time for updating the counters $n_{\enabledbelow}$ of the triplets belonging to $U$ is bounded by $O(2^{5\kappa^2 + 19\kappa} \cdot \Delta^3)$. Next, by Lemma~\ref{lem:findArea}, the running time for determining $w$ is bounded by $O(2^{7\kappa} \cdot \Delta)$. The updates involving the triplets in $P(w)$ take time proportional to the height of the tree, i.e., $O(\Delta)$. Finally, the above bounds together with the fact that $\abs{Q} \leq O(2^{5\kappa^2 + 19\kappa} \cdot \Delta^3)$ imply that the total time of the procedure is bounded by $O(2^{O(\kappa^2)} \cdot \Delta^3)$.

\begin{lemma} \label{lem:updateCost}
The running time of $\Call{UpdateCost}{p}$ is bounded by $O(2^{O(\kappa^2)} \cdot \Delta^3)$.
\end{lemma}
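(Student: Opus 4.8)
The plan is to walk through $\Call{UpdateCost}{p}$ (Figure~\ref{fig:updateCost}) phase by phase, and in each phase bound (i) the number of triplets iterated over and (ii) the work performed per triplet; summing these yields the claimed bound. Since the initial call $\textsc{UpdateStatus}(S)$ is already accounted for by Lemma~\ref{lem: run-time-update}, the remaining work touches only the set $U$, the root-paths $P(u)$ for $u \in U$, a single root-path $P(w)$, and $Q = \bigcup_{u \in U} P(u)$; hence the whole analysis reduces to controlling $|U|$ and $|Q|$.

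First I would recall from the analysis preceding Lemma~\ref{lem: run-time-update} (which uses Lemma~\ref{lem:affectedTriplets} to bound $|S|$ and Lemma~\ref{lem: numberChanges} to bound the number of status changes each affected triplet can trigger) that $|U| = 2^{O(\kappa^2)} \cdot \Delta^2$. Because $\mathcal{T}$ has height at most $\Delta$, every root-path contains at most $\Delta$ triplets, so the first loop — updating $n_{\enabledbelow}(\parent(v))$ over all $v \in P(u)$ for all $u \in U$ — visits at most $|U| \cdot \Delta = 2^{O(\kappa^2)} \cdot \Delta^3$ triplets, each in $O(1)$ time. Next, $\textsc{FindArea}(p)$ costs $2^{7\kappa} \cdot \Delta$ by Lemma~\ref{lem:findArea}, and the loop over $P(w)$ that increments $n_\rarea$ and propagates the change to parents costs $O(\Delta)$. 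Finally $|Q| \le |U| \cdot \Delta = 2^{O(\kappa^2)} \cdot \Delta^3$, and for each $v \in Q$ recomputing $\cost(v)$ via \eqref{eq: costComput} and updating $y(\parent(v))$ are both $O(1)$ operations, using only the counters already corrected in the earlier phases.

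The one spot requiring care — and the only place a stray factor could creep in — is the instruction to sort $Q$ by the radii of its triplets: a comparison sort would add an $O(\log|Q|) = O(\kappa^2 + \log\Delta)$ factor, so I would instead bucket-sort over the at most $\Delta$ possible logradii, which runs in $O(|Q| + \Delta) = 2^{O(\kappa^2)} \cdot \Delta^3$ time; this is exactly what supplies the bottom-up processing order (needed so that $y(v)$ is already correct when $\cost(v)$ is recomputed) at no asymptotic cost. Adding the phases, every term is dominated by $2^{O(\kappa^2)} \cdot \Delta^3$, which gives the claim. The hard part is conceptual rather than computational: recognizing that $|U|$ (hence $|Q|$) is the single quantity that could blow up and that it is already tamed by the $2^{5\kappa^2 + 6\kappa} \cdot \Delta$ bound of Lemma~\ref{lem: numberChanges}, and noticing that — unlike $\textsc{UpdateStatus}$ — $\Call{UpdateCost}{p}$ uses no priority queue, so once the sort is done by bucketing no logarithmic overhead appears.
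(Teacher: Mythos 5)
Your proposal is correct and follows essentially the same route as the paper: bound $\abs{U} = 2^{O(\kappa^2)}\cdot\Delta^2$ from the \textsc{UpdateStatus} analysis, use root-path length $\le \Delta$ to get $\abs{Q} \le 2^{O(\kappa^2)}\cdot\Delta^3$, and charge $O(1)$ work per visited triplet plus $2^{7\kappa}\cdot\Delta$ for \textsc{FindArea}. Your explicit bucket-sort over the at most $\Delta$ logradii is a nice extra touch -- the paper's proof silently omits the sorting cost, which with a comparison sort would pick up a $\log\abs{Q}$ factor -- so your version is, if anything, slightly more careful on that one detail.
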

\vspace{0.2cm}

\noindent \textbf{Handling Queries.} For answering a cost query we return the cost of the root $\tuple{j,\maxR,s}$, i.e., $\cost(j,\maxR,s)$. Thus cost queries can be answered in $O(1)$ time. 
To handle a solution query we additionally keep a list of open facilities and for each open facility a pointer to its location in the list. Thus, facilities can be added or removed in constant time and
the set of open facilities can be output in time linear in its size.

\section{Insertions and deletions of levels in the hierarchy} \label{sec: insertingLevels}

Recall that $n$ is the largest power of $5$ smaller than the number of clients $\abs{\Clients}$. 
Suppose that the number of clients is a factor $5$ larger, resp. smaller, than $n$, due to insertions or deletions of clients. 
Then we update $n$ by multiplying, resp. diving, it by $5$. 
Furthermore, if $\abs{\Facilities} \leq n$ then  the value of $\minR$ has either decreased or increased by one, so an update of our data structure is required. 
Below we argue that the time for inserting, resp. deleting, a new, resp. old, level in the hierarchy is bounded by ${O}((\abs{J} + \abs{\Clients}) \cdot 2^{O(\kappa^2)} \cdot \Delta^3 \cdot (\kappa^2 + \log \Delta)) = \tilde{O}(n \cdot 2^{O(\kappa^2)})$. 
Since such an update is required only after $\Theta(n)$ operations, it follows that the \emph{amortized} running time of our algorithm is still bounded by ${O}(2^{O(\kappa^2)} \cdot \Delta^3 \cdot (\kappa^2 + \log \Delta)$. A standard global rebuilding technique can be employed to achieve a worst-case update time. 
In what follows, let $r = \minR$.

\subsection{Inserting a level in the hierarchy.}  For the new logradius $(r-1)$, we need to construct the maximal subset of distant facility/logradius pairs $J_{r-1}$, and update the hierarchical decomposition of $\Pairs$ by inserting another level at the bottom of the corresponding tree. Our approach is as follows.

We start by taking any pair from $J_r$ and adding it to $J_{r-1}$. Naturally, for each such pair we set $\parent(j,r-1) = \tuple{j,r}$, thus assigning the first dependencies in the extended tree. Next, for each facility $j \in J \setminus J_r$ we perform the following: first, we find all pairs $\tuple{j',r'}$ in $\Pairs$ such that $\dist(j,j') \leq c_2 \cdot 5^{r'}$, using the algorithm $\textsc{FindBalls}(j,c_2)$ in Figure~\ref{fig:findBalls}. Let $S$ denote the output set of pairs, and let $S(r)$, resp. $S(r-1)$, be a subset of $S$ containing pairs that share the same logradius $r$, resp. $(r-1)$. Next, define $\tuple{j^*,r}$ to be the closest pair to $j$ in $S(r)$. Finally, we check whether there is a pair $\tuple{j'',r-1} \in S(r-1)$ such that $\dist(j,j'') \leq c_1 \cdot 5^{r-1}$. If there is one, then we know that $\tuple{j,r-1}$ is covered and thus we do \emph{not} add it to $J_{r-1}$. Otherwise, we have that $\dist(j,j'') > c_1\cdot 5^{r-1}$, for all $j'' \in S(r-1)$, so we add $\tuple{j,r-1}$ to $J_{r-1}$ and set $\parent(j,r-1) = \tuple{j^*,r}$.

We now show the correctness. Note that setting $J_{r-1} =J_{r}$ is correct, since for any two pairs $\tuple{j,r}, \tuple{j',r} \in J_r$, $\dist(j,j') \geq c_1 \cdot 5^{r} \geq c_1 \cdot 5^{r-1}$. Now, let $j \in J \setminus J_r$. By definition, $S(r-1)= J_{r-1} \cap B(j,c_2 \cdot 5^{r-1})$ and since $c_1 < c_2$, it follows $J_{r-1} \cap B(j,c_1 \cdot 5^{r-1}) \subseteq S(r-1)$. Thus $S(r-1)$ contains all the relevant pairs which determine whether or not $\tuple{j,r-1}$ should be added to $J_{r-1}$. Following the construction of the hierarchy, it remains to argue that $\tuple{j^*,r}$ is the closest pair to $j$ in $J_r$. By the Covering Property of $\Pairs$, $\dist(j,j^*) \leq c_1 \cdot 5^{r} < c_2 \cdot 5^{r}$. Hence, by definition of $S(r)$, searching for $\tuple{j^{*},r}$ in $S(r)$ is correct. 

The running time of the above procedure is bounded by at most $\abs{J}$ calls to the $\textsc{FindBalls}$ algorithm, which in turn can by implemented in $2^{O(\kappa)}\cdot \Delta$ time, by Lemma~\ref{lem: findBalls}. This follows since $\abs{S(r)}$ and $\abs{S(r-1)}$ are both bounded by $2^{O(\kappa)}$, by the discussion in Section~\ref{sec:lookup_ball}, and thus the manipulations with these sets only asymptotically affect the running time. Hence, the running time for constructing $J_{r-1}$ and adding a level in the hierarchy is bounded by $\abs{J} \cdot 2^{O(\kappa)} \cdot \Delta = \tilde{O}(\abs{J} \cdot 2^{O(\kappa)})$. 
\vspace{0.2cm}

\noindent \textbf{Coloring of pairs, computing $\neighborsabove$ and designated facilities. } Let $J_{r-1}$ be the newly constructed set of pairs. Our goal is to efficiently color pairs of $J_{r-1}$ such that no two pairs $\tuple{j,r-1}, \tuple{j',r-1} \in J_{r-1}$ with $\dist(j,j') \leq c_4 \cdot 5^{r-1}$ share the same color. To this end, for any $\tuple{j,r-1} \in J_{r-1}$, let $\Pairs(j,r-1) = J_{r-1} \cap B(j,c_4 \cdot 5^{r-1})$. Define the graph $G=(V,E)$, where $V := J_{r-1}$, and $(\tuple{j,r-1},\tuple{j',r-1}) \in E$ iff $\tuple{j',r-1} \in \Pairs(j,r-1)$ or $\tuple{j,r-1} \in \Pairs(j',r-1)$. Then greedily coloring $G$ yields our desired coloring of the pairs in $J_{r-1}$. 

We now analyse the running time. Since $c_4 \geq (4/5)c_1$, for each $\tuple{j,r-1} \in J_{r-1}$ we can determine $\Pairs(j,r-1)$ by using the algorithm $\textsc{FindBalls}(j,c_4)$ in Figure~\ref{fig:findBalls}, whose running time is $2^{O(\kappa)} \cdot \Delta$. Observe that since maximum degree in $G$ is at most $2^{5\kappa} + 1$, the greedy colouring algorithm runs in time $O(\abs{J} \cdot 2^{O(\kappa)})$. Thus the running time of the coloring procedure is bounded by $\tilde{O}(\abs{J} \cdot 2^{O(\kappa)})$.

Given a triplet $v = \tuple{j,r-1,s} \in J_{r-1}$, recall that $\neighborsabove(v)$ is the list of all triplets $\tuple{j',r',s'}$ such that (a) $\tuple{r',s'} >_{\rlex} \tuple{r-1,s}$ and (b) $j \in \Y(j',r')$. Due to how $\Y(j',r')$ is defined, to check the second condition it is enough to test whether the logradius-$r'$ ancestor of $\tuple{j,r-1,s}$ belongs to $B(j', (c_2+2c_3)\cdot5^{r'})$. Similarly to the above, we can again use the algorithm \textsc{FindBalls} along with some post-preprocessing to compute these quantities and thus achieve a running time of $\tilde{O}(\abs{J} \cdot 2^{O(\kappa)})$.

We next show how to compute designated facilities and their cost. 
Recall that for any $\tuple{j,r-1,s} \in J_{r-1}$, the designated cost $f^{*}_{\tuple{j,r-1}}$ is the minimum opening cost among all facilities in $\X(j,r-1)$, i.e. $f^{*}_{\tuple{j,r-1}} = \min \set{f_{j'}}{j' \in J \cap \X(j,r-1)}$, and the designated facility is the one attaining the minimum cost. 
Our approach is as follows. 
First, for each $\tuple{j,r-1} \in J_{r-1}$, we maintain an initially empty min-heap $\mathcal{H}(j,r-1)$ that will store all pairs $(j',f_{j'})$, $j' \in A(j,r-1)$. 
Next, for each $j' \in J$, we use the algorithm $\textsc{FindArea}(j')$ in Figure~\ref{fig:findArea} to find the triplet $\tuple{j,r-1,s}$ such that $j' \in A(j,r-1)$ and 
insert $(j', f_{j'})$ into the heap $\mathcal{H}(j,r-1)$.
Note that each facility $j' \in J$ belongs to exactly one min-heap, since same-logradius areas are disjoint. 
Finally, for determining $f^{*}_{\tuple{j,r-1}}$ we proceed as follows: find all areas that are within distance $c_X \cdot 5^{r-1}$ from $j$ (i.e., all areas that belong to $\X(j,r-1)$), using the algorithm $\textsc{FindBalls}(j,c_X)$ in Figure~\ref{fig:findBalls}, and then set $f^{*}_{\tuple{j,r}} = \min\set{\mathcal{H}(j'r-1)\}}{A(j',r-1) \subseteq \X(j,r-1)}$. The designated facility is the
corresponding facility.

We now argue about the running time. Note that the size of each $\mathcal{H}(j,r-1)$ is bounded by $\abs{J}$. The above algorithm makes $\abs{J}$ calls to $\textsc{FindArea}$ and executes $\abs{J}$ heap operations. By Lemma~\ref{lem:findArea} and since each heap operation takes $O(\log \abs{J})$ time, it follows that the running time for this part is bounded by $O(\abs{J}\cdot 2^{O(\kappa)} \cdot \log \abs{J} \cdot \Delta)$. Next, we have $\abs{J_r} \leq \abs{J}$ calls to to $\textsc{FindBalls}$, which can be implemented in $2^{O(\kappa)}\cdot \Delta$, by Lemma~\ref{lem: findBalls}. Combining the above bounds, we get that the total running time is bounded by $\tilde{O}(\abs{J} \cdot 2^{O(\kappa)})$.

\vspace{0.2cm}

\noindent \textbf{Updating the annotated tree. } The three bits $\isopen_{\rball}(v)$, $\isenabled_{\rball}(v)$, $\isabundant_{\rball}(v)$ and the numbers $n_{\rarea}$, $y(v)$, $\cost(v)$ and $n_{\rball}(v)$ are initially set to $0$, for every $v = \tuple{j,r-1,s} \in J_{r-1}$. Additionally, note that $\openbelow$ and $\enabledbelow$ counters are also set to $0$, since $(r-1)$ is the last level of the tree. For determining $n_{\rarea}(v)$, resp. $n_{\rball}(v)$, i.e., the number of clients that belong to the area $A(j,r-1)$, resp. to $\X(j,r-1)$, and dealing with triplets whose status was affected, we follow an approach that is similar to the update algorithm for insertions/deletions of clients. Specifically, for each client $p \in \mathcal{C}$, we use the algorithm \textsc{FindArea}$(p)$ to find the triplet $w = \tuple{j',r',s}$ such that $p \in A(j',r')$ and $r'$ is minimum. If $r' = r-1$, then we increment the value of $n_{\rarea}(w)$. Next, for each $v \in J_{r-1}$, we find all areas that constitute $\X(j,r-1)$, i.e., areas that are within distance $c_X \cdot 5^{r-1}$ from $j$, using the algorithm $\textsc{FindBalls}(j,c_X)$. By definition of $\X(j,r-1)$, we get that $n_{\rball}(v) = \sum \set{n_{\rarea}(j',r-1,s')}{ A(j',r-1) \subseteq \X(j,r-1)}$. Finally, for each $v \in J_{r-1}$, we set $S=v$ and exactly proceed as in the algorithm $\textsc{UpdateStatus}(S)$ in Figure~\ref{fig:updateStatus}, except that we do not increment the $n_{\rball}$ counters.

We now analyse the running time. 
As above, the time to find the smallest logradius area $A(j',r')$ for each client, and then determine $\X(j,r)$ for each triplet in $J_{r-1}$, is bounded by $O((\abs{\Clients} + \abs{J}) \cdot 2^{O(\kappa)} \cdot \Delta)$. Furthermore, since we make at most $\abs{J}$ calls to the $\textsc{UpdateStatus}$, Lemma~\ref{lem: run-time-update} implies that the update time is at most $O(\abs{J} \cdot 2^{O(\kappa^2)} \cdot \Delta^3 \cdot (\kappa^2 + \log \Delta))$. Combining the above bounds, we get a total running time of ${O}((\abs{J} + \abs{\Clients}) \cdot 2^{O(\kappa^2)} \cdot \Delta^3 \cdot (\kappa^2 + \log \Delta))$.

Next we use the algorithm \textsc{UpdateCosts} in Figure~\ref{fig:updateCost}, to update the  counters $n_{\enabledbelow}$ and the costs of the triplets that changed their status,  but not performing the computations involving areas. By Lemma~\ref{lem:updateCost}, one can similarly argue that the running time for updating these information is also bounded by 
${O}(\abs{J} \cdot 2^{O(\kappa^2)} \cdot \Delta^3)$.
\vspace{0.2cm}

\noindent \textbf{Deleting a level from the hierarchy. }When deleting the last level $J_r$, for each triplet $\tuple{j,r,s} \in J_r$, we only need to check the above neighbors as their status might have been affected. Thus, following a similar approach as above, we can update the annotated tree and ensure correctness for the remaining levels in the hierarchy. The running time is bounded by $O(\abs{J} \cdot 2^{O(\kappa^2)})$.

\begin{lemma} 

Adding or deleting a bottom-level in the hierarchy can be implemented in ${O}((\abs{J} + \abs{\Clients}) \cdot 2^{O(\kappa^2)} \cdot \Delta^3 \cdot (\kappa^2 + \log \Delta))$ time.
\end{lemma}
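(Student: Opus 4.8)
The plan is to treat the two operations, inserting a new bottom level for a decreased $\minR$ and deleting the old bottom level for an increased $\minR$, separately, and in each case to reduce the work to the same pieces that make up a freshly preprocessed level, invoking the subroutine bounds proved earlier in Section~\ref{sec: insertingLevels}. Write $r=\minR$ for the value before the update. For an insertion the new logradius is $r-1$, and I would split the work into four parts: (i) building the maximal separated set $J_{r-1}$ together with correct parent pointers; (ii) coloring $J_{r-1}$; (iii) computing the auxiliary lists $\neighborsabove$ and the designated facilities together with their costs $f^{*}_{\tuple{j,r-1}}$; and (iv) initializing the annotated-tree fields of the new triplets and repairing the open/enabled/abundant bits and the counters $n_\rarea,n_\rball,\openbelow,n_{\enabledbelow},\cost,y$ on the at most $\Delta$ levels that can be affected.

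For step (i) I would first argue correctness: seeding $J_{r-1}$ with a copy of $J_r$ is legal because distinct pairs of $J_r$ are already more than $c_1\cdot 5^{r}\ge c_1\cdot 5^{r-1}$ apart; then for each remaining $j\in J\setminus J_r$ the call $\textsc{FindBalls}(j,c_2)$ returns a set $S$ whose logradius-$(r-1)$ slice $S(r-1)=J_{r-1}\cap B(j,c_2\cdot 5^{r-1})$ (using $c_1<c_2$) contains every pair of $J_{r-1}$ within distance $c_1\cdot 5^{r-1}$ of $j$, so the separating test against $S(r-1)$ is correct, and by the Covering property the nearest pair $\tuple{j^{*},r}\in J_r$ lies in the slice $S(r)$, yielding the correct parent. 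Each of the $\abs{J}$ calls costs $2^{O(\kappa)}\cdot\Delta$ by Lemma~\ref{lem: findBalls}, and all slices have size $2^{O(\kappa)}$ by Lemma~\ref{lem:doubling_dimension_2}, so step (i) runs in $\tilde{O}(\abs{J}\cdot 2^{O(\kappa)})$. Steps (ii) and (iii) are analogous: the conflict graph used for coloring has an edge between two centers within $c_4\cdot 5^{r-1}$, hence maximum degree $\le 2^{5\kappa}$ (again Lemma~\ref{lem:doubling_dimension_2}, since $c_4<16c_1$), and its neighborhoods are found with $\textsc{FindBalls}(j,c_4)$; $\neighborsabove(v)$ is read off the ancestor chain of $v$ plus a ball test at each ancestor level; and each $f^{*}_{\tuple{j,r-1}}$ is computed by first bucketing every facility into its area with $\textsc{FindArea}$ into per-area min-heaps, then minimizing over the $2^{O(\kappa)}$ areas constituting $\X(j,r-1)$ located via $\textsc{FindBalls}(j,c_X)$. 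All three of these are $\tilde{O}(\abs{J}\cdot 2^{O(\kappa)})$.

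The dominating part is (iv). Once the new level exists, I would bucket every client $p\in\Clients$ with $\textsc{FindArea}(p)$ to set $n_\rarea$ on the new triplets (and only these, since every other area contains $p$ exactly as before), then compute $n_\rball(v)$ for each new $v$ as the sum of $n_\rarea$ over the $2^{O(\kappa)}$ areas of $\X(j,r-1)$, and finally, for each new triplet $v$ in lexicographic order, invoke $\textsc{UpdateStatus}(\{v\})$ (without re-incrementing $n_\rball$) to propagate any forced open/closed/enabled change up the tree. By Lemma~\ref{lem: run-time-update} each such invocation costs $O(2^{O(\kappa^2)}\cdot\Delta^3\cdot(\kappa^2+\log\Delta))$, and there are at most $\abs{J}$ of them; the client bucketing costs $O((\abs{\Clients}+\abs{J})\cdot 2^{O(\kappa)}\cdot\Delta)$; and a concluding pass of $\textsc{UpdateCost}$ repairs $n_{\enabledbelow},\cost,y$ in $O(\abs{J}\cdot 2^{O(\kappa^2)}\cdot\Delta^3)$ by Lemma~\ref{lem:updateCost}. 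Summing the four parts gives the claimed $O((\abs{J}+\abs{\Clients})\cdot 2^{O(\kappa^2)}\cdot\Delta^3\cdot(\kappa^2+\log\Delta))$. Deleting a level is the mirror image and strictly cheaper: only the above-neighbors of the old bottom triplets $J_r$ can change status once their below-neighbors vanish, so the same status/cost propagation rerun costs $O(\abs{J}\cdot 2^{O(\kappa^2)})$ with no sweep over the clients.

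The main obstacle I anticipate is not the arithmetic but justifying that the repair work is genuinely confined to the single new (or deleted) level and its $O(\Delta)$ ancestors: one has to check that the Covering and Separating invariants and the laminar family of areas are untouched above logradius $r-1$, and, more delicately, that running $\textsc{UpdateStatus}$ once per new triplet rather than globally still produces the correct open/enabled membership. The latter hinges on processing the new triplets in lexicographic order together with the monotonicity property that a cleaned triplet never becomes dirty again; once this localization is in place, Lemmas~\ref{lem: findBalls}, \ref{lem:findArea}, \ref{lem: run-time-update} and \ref{lem:updateCost} plug in directly.
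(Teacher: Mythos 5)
Your proposal follows essentially the same route as the paper's own argument: seed $J_{r-1}$ with $J_r$ and complete it via $\textsc{FindBalls}(j,c_2)$ with parents from the logradius-$r$ slice, then redo the coloring, $\neighborsabove$, and designated-facility computations with $\textsc{FindBalls}$/$\textsc{FindArea}$ in $\tilde{O}(\abs{J}\cdot 2^{O(\kappa)})$, and finally bucket the clients, set $n_\rball$ from the $2^{O(\kappa)}$ areas of $\X(j,r-1)$, and repair status and costs with at most $\abs{J}$ invocations of $\textsc{UpdateStatus}$ followed by $\textsc{UpdateCost}$, with deletion handled symmetrically via the above-neighbors of the removed level. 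The decomposition, subroutine bounds invoked, and resulting $O((\abs{J}+\abs{\Clients})\cdot 2^{O(\kappa^2)}\cdot\Delta^3\cdot(\kappa^2+\log\Delta))$ total all match the paper's proof.
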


The above lemma along with Lemmas~\ref{lem:affectedTriplets},~\ref{lem: run-time-update}, and~\ref{lem:updateCost} prove our main theorem:

\begin{theorem}[Restatment of Theorem~\ref{thm: mainTheorem}]
There exists a deterministic algorithm for the dynamic facility location problem when clients and facilities live in a metric space with doubling dimension $\kappa$, such that at every time step the solution has cost at most $O(1)$ times the cost of an optimal solution at that time. The worst-case update time for client insertion or deletion is $O(2^{O(\kappa^2)} \cdot \Delta^3 \cdot (\kappa^2 + \log \Delta))$, where $\Delta$ is logarithmic in the paramters of the problem. A cost query can be answered in constant time and a solution query in time linear in the size of the output.
\end{theorem}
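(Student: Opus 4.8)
The plan is to obtain Theorem~\ref{thm: mainTheorem} by combining the approximation guarantee of Section~\ref{sec:3} with the data-structural claims of Section~\ref{sec:data_structure} and Appendices~\ref{lem: updateClients}--\ref{sec: insertingLevels}. For the quality of the solution, Theorem~\ref{thm:approx} already shows that opening exactly the facilities in $\Solution_\Clients$ and routing each client $i$ to $j_i^\ropen$ yields cost $O(1)\cdot\OPT(\Clients)$; moreover, by Lemma~\ref{lm:facilities_costs_still_dont_matter}, this cost is sandwiched, up to the fixed constant $c_2+c_3+c_4+1$, by $\sum_{i\in\Clients}5^{r_i^\rarea}$, which up to that constant is exactly $\cost$ at the root of $\mathcal T$. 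Hence it suffices to show that after each update the annotated dependency tree correctly reflects the $\isopen$/$\isenabled$/$\isabundant$ bits and the counters $n_\rarea,n_\rball,\openbelow,n_{\enabledbelow},\cost,y$ of Section~\ref{sec:data_structure}, that a separately maintained list of open facilities is kept in sync, and that this is done within the stated time.

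For a client insertion of a point $p$ I would argue correctness in the three phases of Appendix~\ref{lem: updateClients}. First, \textsc{FindAffectedTriplets}$(p)$ (via \textsc{FindArea} and \textsc{FindBalls} with $c^*=c_X$) returns exactly the triplets $\tuple{j,r,s}$ whose $n_\rball$ counter changes, and by Lemma~\ref{lem:affectedTriplets} this set $S$ and the time to compute it are bounded by $2^{3\kappa}\Delta$ and $O(2^{7\kappa}\Delta^2)$. Second, \textsc{UpdateStatus}$(S)$ recomputes the abundance bits on $S$ and propagates status flips through the heap $\mathcal H$ ordered lexicographically on $\tuple{r,s}$; correctness of the recursive definitions of $\Facilities^\ropen$ and $\Facilities^\renbl$ follows from the two lemmas showing that a cleaned dirty triplet has the correct status and that a cleaned triplet never becomes dirty again, while Lemma~\ref{lem: numberChanges} bounds by $2^{5\kappa^2+6\kappa}\Delta$ the number of triplets whose open/closed status one flip can force, giving the running-time bound of Lemma~\ref{lem: run-time-update}. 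Third, \textsc{UpdateCost}$(p)$ refreshes $n_\rarea$ along the path of $p$ and $n_{\enabledbelow}$ at the parents of triplets whose enabled status changed, then recomputes $\cost$ and $y$ bottom-up by the recurrence~\eqref{eq: costComput}; Lemma~\ref{lem:updateCost} bounds its time by $O(2^{O(\kappa^2)}\Delta^3)$. Deletions are handled symmetrically. Summing the three phases, one client update costs $O(2^{O(\kappa^2)}\cdot\Delta^3\cdot(\kappa^2+\log\Delta))$.

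It remains to account for the rebalancing level changes and to de-amortize. When $|\Clients|$ crosses a power of $5$ and $|\Facilities|\le n$, the value $\minR$ shifts by one, so a bottom level of $\mathcal T$ must be created or destroyed; by Appendix~\ref{sec: insertingLevels} this costs $O((|\Facilities|+|\Clients|)\cdot 2^{O(\kappa^2)}\Delta^3(\kappa^2+\log\Delta))=\tilde O(n\cdot 2^{O(\kappa^2)})$, but it is triggered only once per $\Theta(n)$ client operations, so the amortized update time stays within the claimed bound. To turn this into a worst-case bound I would run the standard global-rebuilding scheme: maintain a shadow copy, fed the same update stream and advanced by $O(1)$ units of rebuild work per operation over the $\Theta(n)$ steps before the next threshold, and switch to it at the threshold. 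Finally, a cost query returns $\cost$ at the root in $O(1)$ time, and a solution query outputs the open-facility list in time linear in its size; since $\numradii = O(\log W + \log(\fmax/\fmin)+\log|\Facilities|+\log|\Clients|)$ is logarithmic in the parameters of the problem, this matches the statement.

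The main obstacle I anticipate is the de-amortization bookkeeping: one must verify that the shadow copy can be kept consistent with the live copy under the ongoing stream of client updates at only $O(1)$ amortized overhead, that it is always the already-correct copy that answers a query, and that the handover at a threshold is instantaneous; none of this is deep, but it is where a careless argument would forfeit the worst-case guarantee. The genuinely delicate combinatorial ingredient---the bound of Lemma~\ref{lem: numberChanges} on how far a single open/closed flip can cascade, which is what makes the per-update time depend only on $\kappa$ and $\Delta$---has already been established.
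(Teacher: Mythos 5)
Your proposal is correct and follows essentially the same route as the paper: it assembles Theorem~\ref{thm:approx} with Lemmas~\ref{lem:affectedTriplets}, \ref{lem: run-time-update}, \ref{lem:updateCost} and the level insertion/deletion bound, handles the $\minR$ shift by amortizing over $\Theta(n)$ operations and de-amortizes by standard global rebuilding, and answers cost queries from the root's $\cost$ value and solution queries from the maintained open-facility list, exactly as in the paper. No gaps beyond the level of detail the paper itself leaves to the cited lemmas.
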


\end{document}